\newcommand\drop[1]{}
\newcommand\mtcom[1]{\begin{quote}{\color{red} Mikkel: {#1}}\end{quote}}
\let\myPushQED=\pushQED
\let\myPopQED=\popQED
\newcommand{\myignore}[1]{}
\newenvironment{proof*}
  {\let\pushQED=\myignore\begin{proof}\let\pushQED=\myPushQED}
  {\def\popQED{}\end{proof}\let\popQED=\myPopQED}
\newenvironment{itemize*}%
  {\vspace{-1ex}\begin{itemize}%
    \setlength{\itemsep}{-0.5ex}%
    \setlength{\parsep}{0pt}}%
  {\end{itemize}}
\newenvironment{enumerate*}%
  {\vspace{-1ex}\begin{enumerate}%
    \setlength{\itemsep}{-0.5ex}%
    \setlength{\parsep}{0pt}}%
  {\end{enumerate}}
\newenvironment{description*}%
  {\vspace{-1ex}\begin{description}%
    \setlength{\itemsep}{-0.5ex}%
    \setlength{\parsep}{0pt}}%
  {\end{description}}
 \gdef\xxxmark{%
   \expandafter\ifx\csname @mpargs\endcsname\relax 
     \expandafter\ifx\csname @captype\endcsname\relax 
       \marginpar{xxx}
     \else
       xxx 
     \fi
   \else
     xxx 
   \fi}
 \gdef\xxx{\@ifnextchar[\xxx@lab\xxx@nolab}
 \long\gdef\xxx@lab[#1]#2{{\bf [\xxxmark #2 ---{\sc #1}]}}
 \long\gdef\xxx@nolab#1{{\bf [\xxxmark #1]}}
\newtheorem{theorem}{Theorem}
\newtheorem{lemma}[theorem]{Lemma}
\newtheorem{corollary}[theorem]{Corollary}
\newtheorem{invariant}[theorem]{Invariant}
\newtheorem{remark}[theorem]{Remark}
\newcommand{\eps}{\varepsilon}
\newcommand{\req}[1]{(\ref{#1})}
\newcommand{\E}{\mathbf{E}}
\newcommand{\ceil}[1]{\lceil {#1} \rceil}
\newcommand{\floor}[1]{\lfloor {#1} \rfloor}
\let\phi=\varphi
\newcommand{\cD}{{\mathcal D}}
\newcommand{\balls}{m}
\newcommand{\bins}{n}
\renewcommand{\th}{\ifmmode{^{\textrm{th}}}\else{\textsuperscript{th}\ }\fi}
\title{Consistent Hashing with Bounded Loads}
\author[1]{Vahab Mirrokni}
\author[2]{Mikkel Thorup\thanks{Supported in part by
   Advanced Grant DFF-0602-02499B from the Danish Council for
   Independent Research under the Sapere Aude research career programme.}}
\author[1]{Morteza Zadimoghaddam}
\affil[1]{Google Research, New York. \texttt{\{mirrokni,zadim\}@google.com}}
\affil[2]{University of Copenhagen. \texttt{mikkel2thorup@gmail.com}}
\begin{document}
\maketitle

\thispagestyle{empty}

\begin{abstract}
In dynamic load balancing, we wish to allocate 
a set of clients (balls) to a set of servers (bins) with the goal
of minimizing the maximum load of any server and also minimizing  
the number of moves after adding or removing a server
or a client\footnote{Throughout this paper, we use balls
  as clients, and also bins as servers, interchangeably.}.
We want a hashing-style solution where we given the ID of a client
can efficiently find its server in a distributed dynamic environment. 
In such a dynamic environment, 
both servers and clients may be added and/or removed from the
system in any order.
The most popular solutions for such dynamic settings are
Consistent Hashing~\cite{chord-theory,chord} or Rendezvous Hashing \cite{TR98:rendezvous}. However, the load balancing of these
schemes  is no better than a random
assignment of clients to servers, so with $\bins$ of each, we expect
many servers to be overloaded with $\Theta(\log \bins/ \log\log
\bins)$ clients. In this paper, we aim to design hashing schemes
that achieve any desirable level of load balancing, while minimizing
the number of movements under any addition or removal of servers or clients.

In particular, we consider a problem with $\balls$
balls and $\bins$ bins, and given a user-specified balancing parameter $c=1+\eps>1$,  
 we aim to find a hashing scheme with no load above $\ceil{c\balls/\bins}$, referred to
 as the {\em capacity} of the bins. 
Our algorithmic starting point is the consistent hashing scheme 
where current balls and bins are hashed to a
unit cycle, and a ball is placed in the first bin succeeding it in
clock-wise order. In order to cope with given capacity constraints, 
we  apply the idea of {\em linear probing by forwarding the ball on the circle 
to the first non-full bin}.  
We show that in our hashing scheme when a ball or bin is inserted or deleted, the expected number of balls that have to be
moved is within a multiplicative factor  of $O({1\over \eps^2})$  of the optimum for $\eps
\le 1$ (Theorem~\ref{thm:moves-eps}) and within a factor $1+O(\frac{\log c}c)$  of the optimum for
$\eps\ge 1$ (Theorem~\ref{thm:moves-c}). Technically, the latter bound is the most
challenging to prove. It implies that we for superconstant $c$, we only
pay a negligible cost in extra moves.  We also get the same bounds for the
simpler problem where we instead of a user specified balancing
parameter have a fixed bin capacity $C$ for all bins, and define
$c=1+\eps=C\bins/\balls$.
\end{abstract}
\setcounter{page}0
\newpage
\section{Introduction}\label{sec:intro}
Load balancing in dynamic environments is a central problem in
designing several networking systems and web
services~\cite{chord,chord-theory}. We wish to allocate {\em
  clients\/} (also referred to as {\em balls}) to {\em servers\/}
(also referred to as {\em bins}) in such a way that none of the
servers gets overloaded. Here, the {\em load\/} of a server is the
number of clients allocated to it. We want a hashing-style solution
where we given the ID of a client can efficiently find its server.
Both clients and servers
may be added or removed in any order, and with such changes, we do not
want to move too many clients. Thus, while the dynamic allocation
algorithm has to always ensure a proper load balancing, it should aim
to minimize the number of clients moved after each change to the
system.  
For every update in the system, we need to change the allocation of clients to servers. 
For simplicity, we assume that the updates (ball and bin additions and removals) do not happen simultaneously and will be operated one at  a time, so that
we have time to finish changing the allocation before
we get another update.

Note that such allocation problems become even more
challenging when we face hard constraints in the capacity of each
server, that is, each server has a {\em capacity\/} and the load may
not exceed this capacity. Typically, we want capacities close to the
average loads.  

There is a vast literature on solutions in the much
simpler case where the set of servers is fixed and only the client set
is updated, and we shall return to some of these in Section
\ref{sec:power}. For now, we focus on solutions that are known to work
in our fully-dynamic case where both clients and servers can be added
and removed in an arbitrary order.  This also rules out solutions where
only the last added server may be removed\footnote{This rules out the
  external memory techniques \cite{Lar88} where blocks (playing the
  role of fixed capacity servers) can only be added to and removed
  from the top of the current memory.}. The above problem formulation
is very general, and does not assume anything about the ratio between
the number $\balls$ of clients and the number $\bins$ of servers,
e.g., processors are cheap, so one can imagine systems with a large
number of servers, but there could also be many clients. One could
also imagine a balanced system in which $\balls=\bins$. 

The classic solution to the scenario where both clients and servers
can be added and removed is Consistent
Hashing~\cite{chord,chord-theory} where the current clients are
assigned in a random way to the current servers.  While consistent
hashing schemes minimize the expected number of movements, they may
result in hugely overloaded servers, and they do not allow for
explicit capacity constraints on the servers. The basic point is that
the load balancing of consistent hashing~\cite{chord-theory,chord} is
no better than a random assignment of clients to servers. The
same issue holds for Rendezvous or Highest Random Weight
Hashing \cite{TR98:rendezvous}\footnote{The popular name ``Randezvous Hashing'' appears to stem from an earlier technical report, but even
in that report, randezvous is used about a more general situation, and
not about the Highest Random Weigth Hashing that has now become
known popurlarly known as Randezvous hashing.}.
Hence, with
$\balls$ clients and $\bins$ servers, we expect a good load
balancing if $\balls/\bins=\Omega(\log \bins)$, but the balance is
lost with smaller loads, e.g., with $\balls=\bins$, we expect many
servers to be overloaded with $\Theta(\log \bins/ \log\log \bins)$
clients.  In this paper, with $\bins$ clients and $\bins$ servers, we
get a guaranteed max-load of 2, while only moving an expected constant
number of clients each time a client or server is added or
removed. Our general result is described below.

In this paper, we present an algorithm that works with arbitrary
capacity constraints on the servers. For the purpose of load
balancing, the system designer can specify a balancing parameter
$c=1+\eps$, guaranteeing that the maximum load is at most
$\ceil{c\balls/\bins}$.  While maintaining this hard balancing
constraint, we limit the expected number of clients to be moved when
clients or servers are inserted or removed.

Even without capacity constraints, the obvious general lower bounds
for moves are as follows. When a client is added or removed, at least
we have to move that client.  When a server is added or removed, at
least we have to move the clients belonging to it. On the average, we
therefore have to move least ${\balls \over \bins}$ clients when a server is
added or removed.

With our solution, while guaranteeing a balancing parameter $c=1+\eps\leq 2$,
when a client is added or removed, the expected number of
clients moved is $O({1\over \eps^2})$. When a server is added or removed, 
the expected number of clients moved is $O({\balls \over \eps^2\bins })$ 
(Theorem~\ref{thm:moves-eps}). These numbers are only a factor $O({1\over \eps^2})$ 
worse than the general lower bounds without capacity constrains.
For balancing parameter $c\geq 2$, our expected number of moves
is increased by a factor $1+O(\frac{\log c}c)$ over the 
lower bounds (Theorem~\ref{thm:moves-c}). The bound for $c\geq 2$ is the most challenging to prove. It
implies that for superconstant $c$, we only expect to pay a negligible cost in
extra moves.

From a more
practical perspective, it is useful that our algorithm provides a simple knob, the
load balancing parameter $c=1+\eps$, which captures the tradeoff between
load balancing and stability upon changes in
the system. This gives a more direct control to the
system designer in meeting explicit balancing constraints.

We get the same bounds for the
simpler problem where we instead of a user specified balancing
parameter have a fixed bin capacity $C$ for all bins and define
$c=1+\eps=C\bins/\balls$. 
An interesting case of light loads is the unit capacity case where each server
can only handle a single client at the time.

\paragraph{Applications.}
Consistent hashing has found numerous applications~\cite{OV11,GF12}
and early work in this area \cite{chord-theory,chordSIGComm,chord} has
been cited more than ten thousand times. To highlight the wide
variety of areas in which similar allocation problems might arise, we
just mention a few more important references: content-addressable
networks \cite{CAN}, peer-to-peer systems and their associated
multicast applications \cite{pastry,scribe}.  Our algorithm is very
similar to consistent hashing, and should work for most of the same
applications, bounding the loads whenever this is desired.
In fact, our work has already found two quite different 
industrial applications; namely Google's cloud system \cite{MTZ16}
and Vimeo's video streaming \cite{Rod16}. Both systems had to handle
the lightly loaded case. Also, in both cases, load balancing
was not an objective to maximize, but rather a hard constraint, e.g.,
in the Vimeo blog post \cite{Rod16}, Rodland describes how no server
is allowed to be overloaded, and how he found a load balancing
parameter $c=1.25$ to be satisfactory for Vimeo's video steaming. We
shall return to this later.

\subsection{Background: Consistent hashing}
The standard solution to our fully-dynamic allocation problem is 
consistent hashing \cite{chord,chord-theory}. 
\paragraph{Simple consistent hashing.}
In the simplest version of consistent hashing, we hash the active
balls and bins onto a unit circle, that is, we hash to the unit
interval, using the hash values to
create a circular order of balls and bins. Assuming no collisions, a ball is
placed in the bin succeeding it in the clockwise order around the
circle. One of the nice features of consistent hashing is that it is
history-independent, that is, we only need to know the IDs of the
balls and the bins and the hash functions, to compute the distribution
of balls in bins. If a bin is closed, we just move its
balls to the succeeding bin. Similarly, when we
open a new bin, we only have to consider the balls from the succeeding
bin to see which ones belong in the new bin. 

With $\balls$ balls, $\bins$ bins, and a fully random hash function
$h$, each bin is expected to have $\balls/\bins$ balls. This is also
the number of balls we expect to move when a bin is opened or closed.

One problem with simple consistent hashing as described above is that
the maximum load is likely to be $\Theta(\log \bins)$ times bigger
than the average. This has to do with a big variation in the coverage
of the bins. We say that bin $b$ {\em covers\/} the interval of the
cycle from the preceding bin $b'$ to $b$ because all balls hashing to this
interval land in $b$. When $\bins$ bins are placed randomly on the unit
cycle, on the average, each bin covers an interval of size $1/\bins$, but we expect some bins to cover intervals of size $\Theta({\log \bins
  \over \bins})$, and such bins are expected to get $\Theta({\balls \log \bins \over
  \bins})$ balls. The maximum load is thus
expected to be a factor $\Theta(\log \bins)$ above the average.

A related issue is that the expected number of balls landing in the
same bin as any given ball is almost twice the average.  More
precisely, consider a particular ball $x$. Its expected distance to
the neighboring bin on either side is exactly $1/(\bins+1)$, so the
expected size of the interval between these two neighbors is
$2/(\bins+1)$. All balls landing in this interval will end in the same
bin as $x$; namely the bin $b$ succeeding $x$. Therefore we expect
$2(\balls-1)/(\bins+1)\approx 2\balls/\bins$ other balls to land with
$x$ in $b$. Thus each ball is expected to land in a bin with load
almost twice the average. If the load determines how efficiently a
server can serve a client, the expected performance is then only half
what it should be.

In \cite{chord-theory} they addressed the above issue using virtual bins as
described below.

\paragraph{Uniform bin covers.}
To get a more uniform bin cover, \cite{chord-theory} suggests the use
of {\em virtual bins}. The virtual bin trick is that the ball contents of
$d=O(\log \bins)$ bins is united in a single super-bin. The $d$ bins
making up a super bin are called virtual bins. 
We have only
$\bins'=\bins/d$ super bins and these super bins represent the
servers. A super bin covers the intervals covered by its $d$ virtual bins. 
The point is that for any constant $\eps>0$, if we pick a large
enough $d=O(\log \bins)$, then with high probability, each super bin
covers a fraction $(1\pm\eps)/\bins'$ of the unit cycle.

We note that many other methods have been proposed to maintain such a
uniform bin cover as bins are added and removed (see, e.g.,
\cite{BSS00,GH05,Man04,KM05,KR06}). Different
implementations have different advantages depending on the
computational model, but for the purpose of our discussion below, it
does not matter much which of these methods is used.

We also note that what corresponds to a perfectly uniform bin cover can be obtained using Rendezvous
or Highest Random Weight Hashing \cite{TR98:rendezvous}. A hash
function assigns a random weight $h(q,b)$ to each ball $q$ and bin
$b$. A ball $q$ is placed in the current bin $b$ that
maximizes $h(q,b)$. If $h$ is min-wise independent
\cite{broder98minwise}, then $b$ is uniformly random among the current
bins, corresponding to a perfectly uniform bin cover. An issue with this
scheme is that to place the ball $q$, we need to know all the current bins.

With a uniform bin cover, balls distribute uniformly between
bins.  However, with $\bins$ balls and $\bins$ bins, we still expect
many bins with $\Theta((\log \bins)/(\log\log \bins))$ balls even
though the average is $1$.  On the positive side, in the heavily
loaded case when $\balls/\bins$ is large, e.g.,
$\balls/\bins=\omega(\log \bins)$, all loads are $(1\pm
\eps)\balls/\bins$ w.h.p. However, in this paper, we want a good
load balancing for all possible load levels.

\subsection{Our algorithmic solution: Respecting bin capacities via forwarding.}\label{sec:our-forward}
We describe our full algorithm in this subsection and provide some
intuition behind its logic. 
An important part of our solution to the load balancing problem 
is to introduce a {\em capacity\/} for each bin 
that may not be exceeded by the load of the bin. In general, such a capacity
could be a
direct result of a resource constraint on the bin, but here we will use
it to enforce a strict load balancing, guaranteeing for some
balancing parameter $c>1$, that no bin has more than
$\ceil{c\balls/\bins}$ balls. 

In the description below, we will first describe the system for the
case where each bin comes with its own fixed capacity. At the end of the
section, we will elaborate on the exact vector of capacities to
achieve desirable load balancing.

Our starting point is simple consistent hashing (without virtual
bins). Bins and balls are hashed to a unit circle with a fixed hash
function that is not changed once the system is is started. For
efficiency, the hash function should be fixed at random, but the
system described below is well-defined for any given hash
function. For now, we assume that balls and bins all hash to distinct
locations so that we get a complete cyclic ordering of the balls and
bins along the cycle. A {\em ball hashes to the bin\/} following it
in clockwise order, but if the number of balls hashing to a bin exceed
its capacity, then some of the balls have to be {\em placed in other bins}.

\paragraph{Simple insertions via forwarding.}
Suppose for now that we are only adding balls to a fixed set of bins.
To deal with bin capacities, if a ball hashes to a {\em full
bin}, then it is forwarded around the circle until it finds a bin that
is not full, and then it is placed in that bin.

This forwarding is the basic idea behind {\em linear probing}. It 
appears not to have been suggested before in the context of Consistent Hashing.

\paragraph{General system invariant between updates.}
The location or placement of a ball may depend on the order in which
the balls are added, but regardless of the insertion order, the
forwarding from full bins maintains a simple invariant that we discuss
more generally before discussing other updates to the system.

If a ball $q$ hash to a bin $b$ but is located in some other bin $b'$,
then we say that $q$ {\em passes\/} all the bins from $b$ to the bin preceeding
$b'$ in clockwise order. The above forwarding from full
bins maintains.
\begin{invariant}\label{inv:forward}
No ball passes a non-full bin.
\end{invariant}
In Section \ref{sec:unique-loads} we prove that Invariant
\ref{sec:unique-loads} uniquely determines the load of each bin. This
is for a given set of balls and bins with given capacities, but
independent of history.

As balls and bins are added and removed we will maintain Invariant
\ref{inv:forward}. More precisely, after each of these updates, we
assume we have time to {\em settle the system}, moving balls so as to
restore Invariant \ref{inv:forward} and satisfy all capacity
constraints before we get another update to the system.

\paragraph{Searching a ball in the bins}
Assuming Invariant \ref{inv:forward}, it is
easy to find a ball $q$ among the bins. First we search the bin
$b$ that $q$ hashes to.  If $q$ is not found and if $b$ is full, we move to
the next bin $b'$ and recurse until either we find $q$ or we reach a
bin $b'$ without $q$ that is not full.  In the latter case, by
Invariant \ref{inv:forward}, we conclude that $q$ is not in the
system. If $q$ is to be inserted, it should be inserted in $b'$.

\paragraph{General insertions and forwarding from overfull bins} 
Above, when we inserted a ball, we just placed it in the first
non-empty bin. We refer to this as {\em simple insertions}. We shall
here allow for more flexible non-deterministic insertions. 

Temporarily, we allow a bin to be {\em overfull\/} in the sense that
the load exceeds the capacity.  When a ball arrives, it is first
placed in the bin it hashes to, which may become overfull. If the
system has an overfull bin, we forward {\em any\/} of its balls to the
next bin.  

In relation to Invariant \ref{inv:forward}, we view an
overfull bin as full, not non-full. This means that Invariant
\ref{inv:forward} is never violated in the above process, so we
are done when there are no overfull bins.

The previously described simple insertions correspond to
always forwarding the most recent arrival, but now we are free to
choose which ball to forward. For example this allows a {\em history
  independent\/} system using the idea from \cite{BG07}. It assumes a
linear order on all balls which is independent of the order in which
they are inserted.  When placing balls in bins, we place them as if
they were inserted lowest to highest using simple insertions. To
maintain this history independence with general insertions, we always
forward to the highest balls from an overfull bin.

When we later analyze the efficiency of our system, we will allow for
general insertions with no restrictions on which balls get forwarded
from overfull bins. 

We will generally use the number forwardings from overfull bins as an
upper bound on the number of balls moved, but note that the actual number
of moves may be much smaller, e.g., with the simple insertion, we are just
moving the inserted ball directly to the first non-full bin.

\paragraph{Removing a bin}
Removing a bin has the same effect as setting its capacity to 0. It is
now overfull, and then we forward balls from overfull bins arbitrarily until
no bin is overfull. Invariant \ref{inv:forward} was never violated.

\paragraph{Deleting balls and filling holes}
When we delete a ball from a bin, we may have to replace it. It is
convinient to talk about it as filling holes, where the number of {\em
  holes\/} in a bin is the number of balls it is missing to get
full. When we delete a ball, we create a new hole. 

Our concern about a new hole in a bin $b$ is if $b$ was full and
passed by a ball $q$ in some later bin $b'$. The hole renders $b$
non-full and then Invariant \ref{inv:forward} is violated. We can then
fill the hole by moving $q$ to $b$, but this creates a new hole in $b'$
that we may have to fill recursively. Invariant
\ref{inv:forward} is restored when no hole is passed. The process
never violates any capacity constraint, so we are done.

Inspired by deletions in linear probing, an efficient way to fill holes, starting from the first hole in some
bin $b$, is to scan the bins one by one. When we get to a bin $b'$
with a ball $q$ that hash to or before $b$, we fill $b$ with $q$, and
proceed from $b'$. This way we never consider the same bin twice, and
we can stop when we meet a non-full bin. We shall discuss an even more
efficient implementation in Section \ref{subsec:computemoves} where we
maintain the number of balls passing each bin. This allows us to stop
as soon as we create a hole in a bin that is not passed. For now, however,
we are focussed on bounding the number of moves.

\paragraph{Adding a bin} When a bin is added, it starts with as as many holes
as its capacity. We then keep filling holes as long as possible so  
Invariant \ref{inv:forward} is restored. No capacity constraint is violated
in this process.

\paragraph{Changing Capacities for Load balancing in Dynamic Environments.}
We will now show how to set and change capacities to achieve good load
balancing. For a given load balancing parameter $c=1+\eps>1$, we want
to guarantee that no bin has more than $\ceil{c\balls/\bins}$ balls.
One possibility would be to just say that all bins had capacity
$\ceil{c\balls/\bins}$, but then adding a single ball could force us
to increase the capacity of all bins, completely changing the
configuration. As a result, we need to be careful about enforcing the
above capacity constraints across all bins.  In particular, to
minimize the number of capacity changes when balls are inserted or
deleted, we aim for a total bin capacity of $\ceil{c\balls}$.
Assuming a linear ordering of the bins, we let the lowest 
$\ceil{c\balls}-\bins\floor{c\balls/\bins}$ bins have
capacity $\ceil{c\balls/\bins}$ while the rest have capacity
$\floor{c\balls/\bins}$.  We refer to the former bins as {\em big
  bins} and the latter bins as {\em small bins}, though the difference
is only 1.  Moreover, as an exception to the above rule, we will never
let the capacity drop below $1$, that is, if $c\balls<\bins$, then all
bins have capacity $1$.  

For a given set of balls and bins, with a given linear order on
the bins, the above protocol tells us the exact capacity of each
bin. When we add or removing a ball, it affects the capacity of at 
most $\ceil{c}$ bins: if the ball is added, we increment 
capacity, and if it is deleted, we decrement capacity. If a bin is 
inserted or deleted, it changes at most at 
$\ceil{c\balls/\bins}$ capacities. If capacities are increased, we may
have to fill holes, and if capacities are decreased, we may have to
forward balls.

\paragraph{Separating capacity changes from ball and bin updates.}
For the sake of later analysis and efficiency, we make the rule that
if a ball is inserted or a bin is deleted, we first do all the
required capacity increases, one by one, settling the system after
each capacity increase by hole filling. When the capacities have
all settled, we add the ball or remove the bin, upon which we do the final
settlement by forwarding from overfull bins.

Conversely, when a ball is deleted or a bin is inserted, we settle
by hole filling before we decrease any capacity. Next we do
the required capacity decreases one by one, forwarding from
overfull bins after each, until we are back to the desired 
settled system satisfying Invariant \ref{inv:forward} and all capacity
constraints.

\paragraph{Concrete random hash functions} 
The system as described above always works correctly with a fixed hash
function, but an adversary knowing our hash function could force us to
make way too many moves. In order to claim efficiency, we use a hash
function that is fixed randomly and independent of the operations
performed on the system. Thus we can think of the updates and searches
for balls in bins as chosen in advance by an adversary that is
oblivious to what goes on inside the system.

In practice, we work with hash functions with a limited range
$[r]=\{0,....,r-1\}$. Mapping this range to a circle, position $0$
succeeds $r-1$.  We assume that the hash function for balls is independent of the hash function 
for bins. Unless otherwise stated, the bounds in this paper
only assume that $r\geq \bins$ and that both
the ball hash function and the bin hash function are
5-independent (simple tabulation \cite{patrascu12charhash} will
also work even though it is only 3-independent). 
With limited-range hash functions, we may have collisions. To get a
complete cyclic order, we do the following tie breaking: if two balls
or two bins hash to the same location, then the one with the lower ID
precedes the one with the higher ID. Moreover, if a ball and a bin
hash to the same location, the ball precedes the bin. This implies that
the bins hashing to a given position $x$ will always be filled bottom-up.

\subsection{Our Results: Theoretical and Empirical}
In this subsection, we state our main theorems and other
results on the above system.  Subject to the capacity constraints, our
main focus in this paper is the expected number of balls that have to
be moved when a ball or bin is inserted or deleted. Our bounds will
hold no matter which balls we decide to move when a bin is overfull
or a hole has to be filled, as long as we follow the above
description. We shall discuss how to efficiently identify the balls to
be moved in Section~\ref{subsec:computemoves}.

Mathematically, the most interesting case is when $c=\omega(1)$. We
note that inserting a ball results in up to $\ceil{c}$ bins increasing
their capacity. Nevertheless, besides placing the new ball, we will
prove that the expected number of ball moves is $O((\log c)/c)=o(1)$.

The general result is 
\begin{theorem}\label{thm:moves-c}
For a given load balancing 
parameter $c\geq 2$,  
the expected number of
bins visited in a search is $1+O((\log c)/c)$. When a
ball is inserted or deleted, the expected number of other balls that
have to be moved between bins is $O((\log c)/c)$. When a bin is
inserted or deleted, besides moving $O(\balls/\bins)$ expected balls hashing
directly to the bin, we expect to move $O((\balls/\bins)(\log c)/c)$ other
balls.
\end{theorem}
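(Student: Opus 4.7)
The plan is to reduce all three claims to a single lemma: for a uniformly random point $x$ on the circle, the expected \emph{forwarding distance} $D(x)$---the number of full bins between $x$ and the first clockwise non-full bin---is $O((\log c)/c)$. By Invariant~\ref{inv:forward}, $D(x)$ is history-independent. Given the lemma, a search starting at a random $x$ visits $1 + D(x)$ bins in expectation. By the rule from Section~\ref{sec:our-forward} that separates capacity operations from ball/bin operations, a ball insertion (resp.\ deletion) decomposes as one capacity increase plus one ball placement (resp.\ one ball deletion plus one capacity decrease), each triggering a cascade whose expected length is $O((\log c)/c)$. For a bin insertion or deletion, the $\lceil c\balls/\bins\rceil$ balls hashing directly to the bin contribute the $O(\balls/\bins)$ term, and the cascades from the $O(\balls/\bins)$ induced capacity changes give the additional $O((\balls/\bins)(\log c)/c)$ other-ball bound.

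To prove the lemma, I would write $\E[D(x)] = \sum_{k\ge 1}\Pr[D(x)\ge k]$ and analyze each tail probability via the queue interpretation of forwarding. The event $\{D(x)\ge k\}$ implies the existence of an arc $A$ ending at the $k$-th bin after $x$ (and extending back at most to the previous non-full bin) such that $B_A\ge C_A$, where $B_A$ is the number of balls hashing into $A$ and $C_A$ is the total capacity of bins in $A$. Since $c\ge 2$ and $\E[C_A]=c\cdot\E[B_A]$, this forces $B_A\ge 2\,\E[B_A]$, a classical large-deviation event. A dyadic union bound over arc lengths, combined with a Chernoff-style fourth-moment tail bound (valid under $5$-independence), yields the individual tail bounds. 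Summing them, the factor $1/c$ comes from the probability that any particular bin is full (ball count $\ge c$ times its mean), and the factor $\log c$ is the expected depth into the prefix of a full run, conditioned on $x$ landing in one.

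The main obstacle is obtaining the correct $O((\log c)/c)$ leading behavior using only $5$-independence rather than full independence. The polynomial-moment tail bounds available under $5$-independence are substantially weaker than classical Chernoff, so extracting the same $(\log c)/c$ dependence requires carefully tracking the dyadic union bound and a case split on the load $\alpha=\balls/\bins$ (the regime $c\alpha<1$, where the minimum-capacity-$1$ rule kicks in, needs a separate combinatorial argument). A secondary obstacle is the bin-operation bound: the $\lceil c\balls/\bins\rceil$ induced capacity changes are strongly correlated, and a naive per-change sum gives only the weaker $O((\balls/\bins)\log c)$; the tighter bound requires arguing that the cascades of successive capacity shifts overlap substantially, so the total work is amortized across shifts rather than multiplied.
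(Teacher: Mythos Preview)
Your decomposition of a ball update is wrong. Inserting one ball does not trigger ``one capacity increase plus one ball placement'': the total capacity must rise from $\ceil{c\balls}$ to $\ceil{c(\balls{+}1)}$, so up to $\ceil{c}$ bins receive a capacity increment (Section~\ref{sec:our-forward}). With $\Theta(c)$ capacity changes, your single lemma $\E[D(x)]=O((\log c)/c)$ would yield only $c\cdot O((\log c)/c)=O(\log c)$, not $O((\log c)/c)$. Similarly, a bin update triggers $C\le\ceil{c\balls/\bins}$ capacity changes, not $O(\balls/\bins)$; you notice this in your ``secondary obstacle'' paragraph but propose the wrong fix.

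The paper does \emph{not} resolve this via the overlap/amortization you sketch. Instead, Theorem~\ref{thm:exp-bins-c} supplies two different bounds on the expected run of consecutive full bins: part~(a) gives $O(1/\bar c)$ from a passive ball or any bin, while part~(b) gives the much sharper $O((\log\bar c)/\bar c^2)$ when the starting bin has capacity at least~$2$. The plain placement or removal of the ball uses~(a); each capacity change uses~(b), because a bin whose capacity is being decremented necessarily had capacity $\ge 2$ beforehand (the minimum capacity is~$1$). Summing gives $O(1/c)+(c{+}1)\cdot O((\log c)/c^2)=O((\log c)/c)$. The bin-update analysis works the same way: $C\cdot O((\log c)/c^2)=O((\balls/\bins)(\log c)/c)$. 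Proving~(b) is where the real technical work lies---it exploits the extra unit of capacity at the starting bin to gain one more degree of independence in the ball-count tail (taking $a=3$ instead of $a=2$ in \req{eq:d>k}), which is precisely what turns $O(1/c)$ into $O((\log c)/c^2)$. Your proposed tail analysis of a single quantity $D(x)$ does not distinguish these two regimes and so cannot reach the claimed bound.
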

For the insertion and deletion of bins, Theorem \ref{thm:moves-c}
implies that the expected number of moves is $O(\balls/\bins)$. We distinguish the balls hashing directly to the bin so as
to allow a direct comparison with 
simple consistent hashing without capacity constraints
\cite{chord-theory}. For simple consistent hashing,
the balls affected by the insertion or deletion of a bin are exactly
the balls hashing to it. We expect $O(\balls/\bins)$ such balls (previously,
we have said that exactly $\balls/\bins$ 
balls were expected to 
hash to any given bin, but
that was assuming ideal fully random hash functions). With our
capacity constraints, we only expect to move $O((\balls/\bins)(\log c)/c)$
other balls. The price we pay for guaranteeing a maximum
load of $\ceil{c\balls/\bins}$ is thus only a multiplicative factor $1+O((\log c)/c)=1+o(1)$ in the expected number of 
ball moves. 

%

For $c\in(1,2]$, we parameterize by $\eps=c-1>0$.
\begin{theorem}\label{thm:moves-eps}
For a given load balancing parameter $c=1+\eps\in(1,2]$, the expected
  number of bins visited in a search is $O(1/\eps^2)$. When a ball is
  inserted or deleted, the expected number of other balls that have to
  be moved between bins is $O(1/\eps^2)$.  When a bin is inserted or
  deleted the expected number of balls that have to be moved is
  $O(\balls/(\bins\eps^2))$.
\end{theorem}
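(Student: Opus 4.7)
The plan is to reduce Theorem~\ref{thm:moves-eps} to a single estimate, namely that the ``pass count'' $D(b)$ at any fixed bin $b$ -- the number of balls whose hash bin precedes $b$ in cyclic order but whose landing bin lies strictly past $b$ -- satisfies $\E[D(b)] = O(1/\eps^2)$. By Invariant~\ref{inv:forward}, the displacement of any ball from its hash bin to its landing bin is at most $D$ evaluated at that hash bin, so the expected search cost is $1 + O(1/\eps^2)$. Using the rule that separates capacity changes from ball and bin updates, each ball insertion or deletion decomposes into at most $\ceil{c} = O(1)$ chain events (each a hole-filling scan or an overflow-forwarding scan seeded at a single bin) of expected length $O(1/\eps^2)$, giving the $O(1/\eps^2)$ bound. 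Each bin insertion or deletion decomposes into $O(\ceil{c\balls/\bins}) = O(\balls/\bins)$ capacity changes plus one bin event, each contributing a chain of expected length $O(1/\eps^2)$, hence the $O(\balls/(\bins\eps^2))$ bound.

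The key reformulation of $D(b)$ is the random-walk identity
\[
D(b) = \max_{a}\, \bigl( B(a,b] - C(a,b] \bigr)_+,
\]
where the maximum ranges over bins $a$ preceding $b$ on the (unrolled) cycle, $B(a,b]$ counts balls whose hash lies in the arc $(a,b]$, and $C(a,b]$ is the total capacity of bins in $(a,b]$. The maximum is well-defined on the cycle because total capacity $\ceil{c\balls} > \balls$ guarantees at least one non-full bin, at which the cycle may be cut and unrolled into a line. Conditioning on hash positions, $B(a,b] - C(a,b]$ has mean $\balls\ell - c\balls\ell = -\eps \balls\ell$ and variance $\Theta(\balls\ell)$ across an arc of length $\ell$, so we are looking at the maximum of a random walk with negative drift $-\eps \balls$ and variance $\Theta(\balls)$ per unit length; the expected reflected supremum of such a walk is classically $\Theta(1/\eps^2)$.

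To make this rigorous under the available 5-wise independence, I would stratify over dyadic arc lengths $\ell \in [2^{-j-1},2^{-j}]$ and, for each $j$ and each threshold $k$, bound $\Pr[\exists\text{ arc of length }\ell\text{ with }B - C \geq k]$ by a union bound over $O(\bins\cdot 2^{-j})$ right endpoints combined with a 5th-moment tail bound on $B(a,b]$ in the spirit of Patrascu--Thorup. The bad event requires $k \lesssim \eps \balls \ell$ to fail, and the polynomial decay of the moment bound is geometrically summable across scales; integrating the resulting tail on $D(b)$ over $k$ then yields $\E[D(b)] = O(1/\eps^2)$.

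The main obstacle is handling simultaneously (i) the limited 5-wise independence of each hash function, (ii) the big-versus-small bin rule, which makes capacities depend on bin IDs rather than on hash positions, and (iii) the fact that the ball and bin hash functions are independent but both random. The cleanest path is to first condition on the bin hash function and verify that the capacity profile $C(a,b]$ concentrates at $c\balls\ell \pm O(1)$ uniformly over arcs (from the capacity-assignment rule plus a moment bound on the bin count in an arc), then apply the moment tail bound to $B(a,b]$ using the independence of the ball hash function. Once this single-bin estimate is in place, the three conclusions of the theorem follow from the linearity arguments of the first paragraph.
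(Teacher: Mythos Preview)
Your reduction in the first paragraph contains a genuine error: the pass count $D(b)$ does \emph{not} bound the displacement of a ball hashing to $b$. Take bins of capacity $1$ and balls $q_1,q_2,q_3$ hashing to bins $1,2,1$ respectively. Then $q_3$ lands in bin $3$ with displacement $2$, but only $q_3$ passes bin $1$, so $D(\text{bin }1)=1<2$. More generally, a long run of full bins after $b$ can be sustained by balls hashing \emph{past} $b$, so the run length (and hence the search cost and the number of forwardings) can be arbitrarily larger than $D(b)$. Your random-walk identity $D(b)=\max_a\bigl(B(a,b]-C(a,b]\bigr)_+$ is correct, but $D(b)$ is simply not the quantity that controls search cost or move count.

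The right object is $F(b)$, the number of consecutive full bins starting from (or containing) the hash location of a passive ball or a bin; this is exactly what the paper bounds in Theorem~\ref{thm:exp-bins-eps}. The paper's route is to observe (Lemma~\ref{lem:search-conseq}) that $F(b)\ge d$ forces some interval $I\ni b$ of length $\ge\lfloor dr/(2\bins)\rfloor$ on which either the ball count exceeds $(1+\delta)\balls|I|/r$ or the (weighted) bin count falls below $(1-\delta)\bins|I|/r$, with $\delta=\Theta(\eps)$. Crucially, these two deviations live on independent probability spaces (ball hash versus bin hash), so each is bounded separately by the fourth-moment ``all intervals around a point'' estimate~\req{eq:fourth-lp} from \cite{patrascu12charhash}, giving $\Pr[F(b)\ge d]=O(1/(d^2\eps^4))$ and hence $\E[F(b)]=O(1/\eps^2)$. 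Your proposed conditioning on the bin hash function and then arguing uniform concentration of $C(a,b]$ would have to reproduce~\req{eq:fourth-lp} for the bin process anyway, and loses the clean product structure the paper exploits. Once $\E[F(b)]=O(1/\eps^2)$ is in hand, your high-level accounting in the first paragraph (at most $\lceil c\rceil=O(1)$ chain events per ball update, $O(\balls/\bins)$ per bin update, each chain bounded by $F$ at some bin) is correct and matches the paper's derivation of Theorem~\ref{thm:moves-eps} from Theorem~\ref{thm:exp-bins-eps}.
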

The bounds of Theorem \ref{thm:moves-eps} are similar to those
obtained in \cite{patrascu12charhash} for linear probing. The
challenge here is to deal with the fact that bins are randomly placed,
as opposed to linear probing where every hash location has a bin of
size 1. Nevertheless we will be able to reuse some of the key lemmas
from \cite{patrascu12charhash}. The proof of Theorem \ref{thm:moves-c}
is far more challenging, and the main focus of this paper.

\begin{remark} The bounds from Theorems \ref{thm:moves-c} and
\ref{thm:moves-eps} also hold in the simpler case where all bins have
a fixed capacity $C$ and we define $c=1+\eps=C\bins/\balls$.  We
note that our updates change the value of $\balls$ and $\bins$, hence
of $c=C\bins/\balls$. For the bounds to apply, we always use the
smaller value of $c$ in connection with each update. Thus, for the
bounds on the moves in connection with a ball insertion or bin
removal, we use the value of $c$ before the update.  For the bounds on
the moves in connection with a ball deletion or bin addition, we use
the value of $c$ after the update. 
\end{remark}

\paragraph{Additional results.} In this paper, we are also going
to discuss how to efficiently find the balls to be moved in connection
with system updates. We are also going to discuss high probability
bounds that are, essentially, a factor $O(\log n)$ worse than the
above expected bounds. Finally, we note that in practice, it may be
relevant to study our problem with weighted balls. 
In this case, the sum of the weights of the balls assigned to a bin should not exceed its capacity. 
With weights, the results get less
clean, e.g., a single weight might be bigger than the allowed
capacity. Also, there may be situations where we have to move a large
number of light balls to maintain balance. However, if we use integer
capacities as described above, and if no balls
has weight above $1$, then the above results hold if we replace ``number
of balls'' with ``weight of balls''. We note that in the non-weighted case, the probability bounds in this paper are about sums of binary $0-1$ variables. However, the extend easily to sums of real variables in range $[0,1]$. To avoid corner cases, we should assume no ball has weight zero, i.e. all weights are bounded below from zero by some constant.

\paragraph{Empirical Results and Industrial use.}
We confirm effectiveness of our algorithm in various practical
settings via an empirical study
in section~\ref{sec:simulation}. Furthermore, we note that
versions of our algorithm
have been deployed in a number of industrial applications.
The first industrial application of our algorithm was in Google's
Cloud Pub/Sub as described in \cite{MZ17:google}. This application
needed good load balancing in a variety of instances with different 
characteristics including the lightly loaded case. History
independence was also necessary in that application. 

Surprisingly, within a few months of the first release of this paper
on arXiv \cite{MTZ16}, our algorithm got picked up by the video
streaming company Vimeo. In a Vimeo blog post \cite{Rod16}, citing
our paper, Rodland
describes how Vimeo used our algorithmic idea to solve the scaling issues
they had in handling almost a billion requests per day (they have 170
million users). They implemented a version of algorithm with balancing
parameter 1.25, and then the problems vanished as seen in Figure
\ref{fig:vimeo}. In \cite{Rod16}, they also explain that they had
given up on traditional consistent hashing and power of two choices
because of load balancing problems. Finally \cite{Rod16} explains that
an open source version of the code has been realased HAProxy 1.7.0,
and we expect that other companies will start using it.

\begin{figure}
\centerline{\includegraphics[width=10cm]{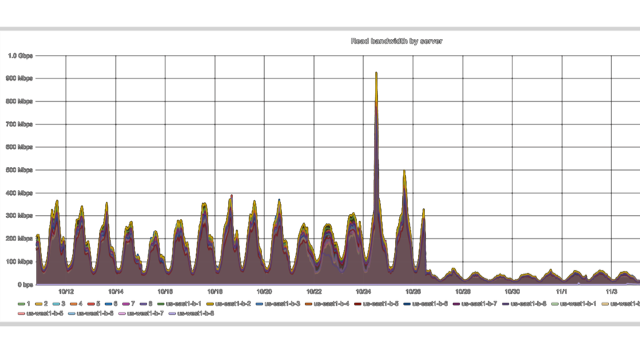}}
\caption{This figure is copied from \cite{Rod16}. It shows Vimeo's use
  of shared cache bandwidth over time. The bandwidth problems
  disappeared when they switched to a version of our algorithm on October
  26. \label{fig:vimeo}}
\end{figure}

The Vimeo story is a good example of how theory can have impact. We
offer a simple algorithmic solution to the load balancing
issue. We are not claiming to have a theoretical model that captures
all the important aspects of performance since it depends on the concrete
implementation context. What we do offer is a theoretical analysis,
showing that for every possible input, the algorithm has very good
expected performance on important combinatorial parameters, including
guaranteed load balancing for all possible inputs.  This is
something that can never be verified by tests, but it is very valuable
for the trust in an online dynamic system that is meant to run
``forever'' not knowing future inputs.

\paragraph{Note about simplicity and general applicability.} 
Consistent hashing is a simple versatile scheme that has been
implemented in many different systems with different constraints
and performance measures~\cite{OV11,GF12}.  Our consistent hashing scheme, respecting
capacities via forwarding, is almost as simple, and should work with
most of these implementations.
The classic implementation of consistent hashing is the distributed
system Chord \cite{chordSIGComm,chord} which has more than ten thousand
citations. The Chord papers \cite{chordSIGComm,chord} give a thorough description of
the many issues affecting the design. Below we only consider
the aspects affecting the message complexity of finding a ball among the
bins.

In Chord, they have a system of pointers so that given an arbitrary
point on the cycle, they can find the next bin in the clockwise
order using $O(\log n)$ messages. This is how they find the bin a ball
hashes to. In simple consistent hashing, this is where the ball is to
be found.  With our forwarding, starting from the bin a ball hashes to,
we need to visit succeeding bins until we either find the ball or a
non-full bin. Chord does maintain explicit successor pointers between
neighboring bins, so we only have to pay $O(1)$ extra messages per bin
visited. By Theorem \ref{thm:moves-eps}, we expect to visit
$O(1/\eps^2)$ bins, so our total expected message cost is $O(\log
n + 1/\eps^2)$. The extra $O(1/\eps^2)$ is negligible if
$\eps=\omega(1/\sqrt{\log n})$.

Of course, there may be other systems with different costs associated,
but generally we assume it to be significantly more expensive to find
the bin a ball hashes to than it is to go from one bin
to its succeeding neighbor. 
Since the costs depend on the implementation context,
our focus here is on fundamental combinatorial properties like
loads, the number of bins searched, and the number of balls that are
moved when the system is updated.

\drop{
\paragraph{Computing moves when the system is updated}
So far, we have not been concerned with the time it takes to compute
which clients have moved when the system is updated. 
We can, in fact, identify these moves in expected time proportional
to the above bounds on expected
number of moves. 
This is most challenging to prove in the case where $c=\omega(1)$ and
$c\balls/\bins=O(1)$. In this case, when a ball is inserted, we
have only $O(1)$ time, and this has to include $\Theta(c)$ bin capacity 
increases. To get down to constant time, we relax the constraint on how many bins that
have capacity $\ceil{c\balls/\bins}$, and how many that have capacity $\floor{c\balls/\bins}$.
We elaborate more in Section~\ref{subsec:computemoves}.}

\subsection{Power of multiple choices as an alternative?}\label{sec:power}
We have proposed applying the idea of forwarding in the style of linear probing to
bound the loads in consistent hashing. The reader may be wondering if alternative ideas 
in dynamic load balancing can be applied to deal with our natural problem. 
Here, we discuss them, and highlight their drawbacks or challenges to convert them to
working solution.
In particular, we discuss the possibility of instead using the power of multiple choices
\cite{power-of-two-choices,cuckoo,mitzenmacher2001power,berenbrink,talwar14,Voc03}. The discussion is speculative in that none of these
techniques have been analyzed in our dynamic context where both balls and
bins  can be added and removed. We can try to guess what may happen,
but have limited knowledge without a proper analysis.

The most basic form is that we have a fixed set of
$\bins$ bins. Balls are added, one by one. Each ball is given $d$
uniformly random bins to choose from, and picks the least loaded, breaking
ties arbitrarily. With
$\balls$ balls, w.h.p., we end up with a maximum load of
$\balls/\bins+\frac{\ln\ln \bins}{\ln d}+\Theta(1)$ \cite{berenbrink}.
An interesting twist suggested by V{\"{o}}cking is that if several
bins have the same smallest load, we always pick the left-most
\cite{Voc03}. Surprisingly, this reduces the max load to
$\balls/\bins+\Theta(1+\frac{\ln\ln \bins}{d})$ \cite{Voc03}.
We note that to get a constant ratio between maximum and average load
when the average load is constant, we do need a super constant number
of choices, e.g., $d=\Omega(\log\log \bins)$ with left-most choice.

The above mentioned bounds are proved in the ideal case where
we pick uniformly between the bins.  Consider now first the case of
simple consistent hashing where both balls and bins are placed at
random on a unit circle, and a ball goes to the succeeding bin in
clockwise order. This case was studied in \cite{BCM03,BCM04}, where it
was proved that if $\balls=\bins$, then the maximum load is
$O(\log\log \bins)$. However, with a concrete example, \cite{Wie07}
showed that we cannot in general hope to get max-load
$\balls/\bins+O(\log\log \bins)$ when $\balls\gg\bins\log\log\bins$.
This is again for the case of simple consistent hashing where bins are
just placed randomly on the circle. However, using, e.g., virtual
bins, we know that that we can obtain a more uniform bin cover such
that each bin represents a fraction $(1\pm\eps)/\bins$ of the unit
cycle and where a ball lands in a bin with this probability. With
$\eps=1/2$, the main result from \cite{Wie07} implies that using the
power of $d$ choices, w.h.p., we get a maximum load of
$\balls/\bins+O(\frac{\log\log \bins}{\log d})$.

We still have to consider what happens in our dynamic case where balls
may be deleted and bins can be added and/or removed. The results from
\cite{CFMMRSU98} indicate that to delete a ball, it may suffice to
just remove it without moving any other balls. However, if a bin 
is removed, we have to move all its balls. In order to claim any
bounds, we would need a careful analysis, but the best we can possibly
hope for is to match the above bounds for the basic case where $\balls$ 
balls are just added picking uniformly between $\bins$ fixed bins.

\paragraph{Cuckoo hashing}
We will now discuss how the power of multiple choices could possibly be used
in the style of Cuckoo hashing \cite{pagh04cuckoo} to provide 
balancing guarantees like those in our solution.
With user specified load balancing parameter $c=(1+\eps)$, we
assigned capacities $\floor{c\balls/\bins}$ or $\ceil{c\balls/\bins}$
to each bin in such a way that the total capacity was
$\ceil{c\balls}$ and such that we only changed at most $\ceil{c}$
bin capacities in connection with each ball update.
We then used forwarding in the style of linear
probing to respect these capacities. Obviously, one could try to adapt
many other hash table schemes to respect these capacities. Most
notably, we could hope to adapt Cuckoo hashing \cite{pagh04cuckoo}.

We now review the basic results for Cuckoo hashing in the ideal case
where we have a fixed set of $\bins$ bins, all of the same capacity
$C$, and where each of $\balls$ balls gets $d$ uniformly independent
choices between these bins. The basic feasibility question is
how large $C$ has to be before we expect to be able to place all
balls. This question is studied in \cite{FKP16}, but here
we also want to update the system efficiently as balls and bins are
inserted and remvoed.  In the original Cuckoo hashing
\cite{pagh04cuckoo}, the capacities are all 1. With 2 choices, the
balls are placed into $\bins=(2+\eps)\balls$ bins for any
constant $\eps$. However, \cite{FPSS05} proves that using
$d=O(\log(1/\eps))$ choices, we can place the balls in only
$\bins=(1+\eps)\balls$ bins, so the maximum load of $1$ is at most
$(1+\eps)$ times the average load, as desired for load balance
$(1+\eps)$.  For efficient insertions, \cite{FPSS05} prove that if
$d\geq 5+3\ln(1/\eps)$, then a ball can be placed in $d^{O(\log
  (1/\eps))}=(1/\eps)^{O(\log\log(1/\eps))}$ expected time
and in $o(n)$ time with high probability. 
The high probability bound on the insertion time is improved to a polylog in \cite{FPS13}. Moreover, using a larger value of $d=
O\left(\frac{\log (1/\eps)}{\eps}\right)$, it was recently
shown \cite{FJ17} that the expected search time can be reduced to $O(1)$
independent of $\eps$. 
  In \cite{DW07} they consider
having only $d=2$ choices, but instead they increase the
capacity. They show that a bin capacity of size $O(\log(1/\eps))$
suffices to distribute the balls with load balance $(1+\eps)$. For
efficient insertions, they prove that if the capacity is above
$16\ln(1/\eps)$, then a ball can be placed in
$\log(1/\eps)^{O(\log\log (1/\eps)}$ expected time. The bounds
require fairly complex hash functions, but \cite{DW07} states
that the hash functions can be implemented in $O(\bins^{5/6})$ space. 
A further discussion of hash functions for Cuckoo hashing is found in
\cite{ADW14}.

We note that it is would require a full analysis to figure out
to which degree the above bounds transfer to our fully dynamic case
where bins can be added and removed, and where capacities adapt to the
load based on the balancing parameter. Also, we would most likely need
to use virtual bins to get a reasonably uniform bin cover. The best we
can hope for is to match the above bounds for the simpler
case with uniformly distributed choices between a fixed set of bins.
This may indeed be possible, and would be interesting.

\paragraph{Stepping back,} the first contribution of this paper is to raise
the problem of getting consistent hashing to obey a user specified load 
balance paremeter $c=(1+\eps)$ for all possible inputs.
We present the first solution with proven bounds 
on the efficiency of searches, ball and bin updates. When $\eps$
is a positive constant, our expected bounds are only a constant factor from
the general lower bounds. 

It may very well be possible to get a different solution based on
Cuckoo hashing instead of linear probing, getting bounds similar to
those reviewed above for a fixed set of bins, all with the same
capacity.  However, even if this can be done, it would not always be
better than our solution. One of the attractions of $d\geq 2$ choices
is that a ball only has to be searched in $d$ bins, but these are $d$
random bins, as opposed to the single segment of consecutive bins
searched with our solutions. If we, as Vimeo, use balancing parameter
$1+\eps=1.25$, then, by Theorem \ref{thm:moves-eps}, we only
expect to consider a constant number of consecutive bins. Thus, if, as
in the Chord implementation, the cost of accessing a random bin is
much bigger than that of getting from a bin to its successor, then our
search is nearly twice as efficient as checking 2 random choices (for
successful searches, in expectation, we can bring 2 down to 1.5). With
Cuckoo hash tables, it is often seen as an advantage that the $d$
choices can be checked in parallel, but here we imagine a large system
processing many requests in parallel, and then the total number of
messages is important. With updates we have the same advantage
of only working within a single segment of bins whereas Cuckoo hashing has to consider
arbitrary bins. However, the Cuckoo updates are still to be defined
and analyzed for us to understand their efficiency when bins are inserted and deleted and
capacities change. Moreover, for the implementation of Cuckoo hashing, 
we probably need the added complexity of, e.g., virtual bins to get a more
uniform bin cover. Thus, we expect our linear forwarding solution to remain relevant
regardless of future developments with multiple choices. The situation
is like linear probing versus Cuckoo hashing for regular hash
tables. Both are very important solutions to an extremely important
problem, and each has its advantages in different contexts, e.g.,
with Cuckoo hashing, the queries are easy while the updates are more 
challenging. For our problem of 
consistent hashing with bounded loads, we prove here
that the forwarding of linear probing works in a simple
practical solution. A corresponding
analysis for Cuckoo hashing is yet to be done.

\drop{
As a side note, in case of non-uniform bin capacities, the
results from \cite{BBFN14} indicate that we could get some 
of the same performance as uniform capacities, but again, we expect
problems for lighter loads.

\paragraph{Old Power of two choices.... not sure which parts we should use}
On the other hand, there are many static (bins)
hashing schemes that achieve good load balancing guarantees using
power of two choices ideas
\cite{power-of-two-choices,cuckoo,mitzenmacher2001power}; there have
been also some interesting works on more accurate analysis of load
balancing aspects of these algorithms \cite{berenbrink,talwar14}.
Similar guarantees are achieved for more general case of weighted
balls, and other variants of two choices algorithms
\cite{talwar07,peres10}. The maximum load of power of two choices
hashing algorithms is $O(\log\log \bins)$ which a big improvement over
consistent hashing maximum load, however they do not apply to dynamic
settings where bins are added or removed periodically. Moreover, in
the lightly loaded case, we get down to a strict constant bound on the
maximum load.}

\drop{\section{Rebuttal (not for publication)}
This paper has been rejected before. 
In general, there seems to be no disagreement
that this is a simple practical algorithm with a non-trivial
analysis. The objections are mostly about importance and related work.
Below are some concerns raised by reviewers that we find hard to address directly
in the intro.
\begin{itemize}
\item ``The biggest drawback in my opinion is that the method mainly shows
its advantages in the lightly loaded case, which may not be the most
common in practice.'' 

It may be that the lightly loaded case is not the most common, but the problem
itself is hugely important with thousands of citations of the
original papers. One has to look at the product. If a case pertains to only 
10\% of a billion people, then it affects 100 million, which is still a lot.

More importantly, we are not just doing the lightly loaded case. 
This is very important for applications on the Cloud where we face a variety of instances
with different characteristics. 
We present a single solution that works for all load levels, which is
very desirable when you design an online dynamic system that is meant to 
``run'' forever. This is unlike the situation with static problems,
where you get one instance at the time, and can check its parameters
before choosing the appropriate algorithm.

\item We have received a lot of criticism on the related work in Section
  \ref{sec:power}: referees seem to disagree what should and what
  should not be mentioned. This is a speculative discussion about how
  techniques studied only for a fixed set of bins might possibly be
  adapted to a dynamic set of bins. It is, however, impossible to
  discuss all the techniques known from hash tables and balls into
  bins. To find a reasonable collection, we have talked with many
  experts (Pagh, Mitzenmacher, Czumai, Wieder, Dietzfelbinger plus
  suggestions from referees). We already feel that this speculative
  section is too long, but otherwise we get rejected by referees feeling
  that something is missing.  We are very open to further suggestions.

  One reviewer wrote `` I would like to see more convincing arguments
  why previous work can not be applied to this problem.'' We never
  claimed none of the previous methods cannot be applied.  
  In fact, we think ideas from Cuckoo hashing
  could may also work, but a proof would require a full analysis of what
  happens with bin updates etc, and we do not see an easy proof for that.

\item Another reviewer suggests that the diameter of networks
  can be reduced to a small constant in practice. With degree $B$,
  the diameter is at least $\log_B n$, and things have to work in a
  complex dynamic environment. The Chord papers
  \cite{chordSIGComm,chord} do a very good job explaining their choice
  of parameters. They are popular with thousands of 
  citations, and we are referring to Chord as a standard
  implementation. The exact values of the parameters are immaterial to
  our paper. The only point we make is that it is harder to get to a
  random bin (successor of a random location on the circle) than it is to get from one bin to its neighbor.
\end{itemize}

}
\section{High Level Analysis}\label{sec:highlevel}
To analyze the expected number of moves, we shall use a general
probabilistic understanding of configurations encountered. By a
configuration, we refer to the situation between updates where
the allocation has settled to satisfy Invariant \ref{inv:forward} and with
no overfull bins.

\subsection{Uniqueness of loads}\label{sec:unique-loads}
As a first step in understanding configurations, we argue that
Invariant \ref{inv:forward} uniquely determines the load of all
bins. This is for a given set of balls and bins with given capacities,
but independent of history. This is a very simple generalization of
the argument that the set of cells that are filled by linear probing is 
unique and history independent.

First we note that Invariant \ref{inv:forward} implies that which bins are
full is independent of the order in which the balls are
inserted. It only depends on which balls and bins are present and where
they hash to. More precisely, we have
\begin{lemma}\label{lem:fulls}
A bin $b$ is full if and only if there is an interval of consecutive bins
$B=b_1,...,b_k$ ending in $b=b_k$ such that the total number of balls
hashing to these bins is at least as big as their total capacity.
\end{lemma}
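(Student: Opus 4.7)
\medskip
\noindent\textbf{Proof plan.} I would prove the two directions separately, in each case using Invariant~\ref{inv:forward} (``no ball passes a non-full bin'') as essentially the only tool.

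For the forward direction, suppose $b$ is full. The plan is to walk counter-clockwise from $b$ and let $b_1,\dots,b_k=b$ be the maximal run of consecutive full bins ending at $b$. Either the bin immediately preceding $b_1$ is non-full, or every bin on the cycle is full. In the first case, Invariant~\ref{inv:forward} forbids any ball from passing that non-full predecessor of $b_1$, so no ball forwarded from outside the interval can land inside $b_1,\dots,b_k$; consequently every ball in $b_1,\dots,b_k$ must have hashed to some bin in the interval. Since all $k$ bins are full, the number of balls hashing to the interval is at least the total capacity of the interval. In the second case (the whole cycle is full), I just take the interval to be the entire cycle; then trivially the number of balls hashing to it equals the total number of balls, which is at least (in fact equal to) the total capacity.

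For the reverse direction, I assume an interval $b_1,\dots,b_k$ whose total number of hashing balls is at least its total capacity, and I argue by contradiction that $b_k$ must be full. Let $A$ be the number of balls hashing to the interval, $C$ its total capacity, and $B$ the number of balls currently placed in the interval. Since no bin exceeds its capacity (the system is settled), $B\le C$. If $b_k$ were not full, Invariant~\ref{inv:forward} would imply that no ball passes $b_k$; in particular, every ball hashing into the interval must end up inside the interval, so $A\le B$. Combining $A\ge C$, $A\le B$, and $B\le C$ forces equality throughout, meaning every bin in the interval is at its capacity, including $b_k$, contradicting the assumption that $b_k$ is non-full.

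Finally, the uniqueness claim stated just before the lemma follows immediately: the characterization in the lemma depends only on the set of balls, the set of bins, their hash locations, and the capacities---never on the order of insertions. Once the set of full bins is pinned down, the load of every non-full bin is just the number of balls hashing to it that are not forwarded past a full chain, and the loads of full bins are their capacities, so the whole load vector is history-independent.

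The only step that needs real care is making sure both directions squeeze all the information out of Invariant~\ref{inv:forward}; in particular, in the forward direction one must not confuse ``balls \emph{in} the interval'' with ``balls \emph{hashing to} the interval'' (balls hashing to the interval may have been forwarded out past $b_k$ if $b_k$ is full), and in the reverse direction one must handle the wrap-around case when the candidate interval is the whole cycle by noting that then $A=C$ is immediate.
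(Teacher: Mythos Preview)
Your proof is correct and follows essentially the same approach as the paper. The forward direction is identical (take the maximal run of full bins ending at $b$; the non-full predecessor blocks any balls from outside, so the balls filling the run all hash to it). For the reverse direction, the paper argues by cases---either every ball hashing to the interval stays inside (then a counting argument forces all bins full), or some ball escapes past $b_k$ (then $b_k$ is full directly by Invariant~\ref{inv:forward})---whereas you package the same two observations as a proof by contradiction with the inequalities $C\le A\le B\le C$; these are the same argument in different clothing. Your extra care with the all-bins-full edge case is harmless (in the paper's setting total capacity strictly exceeds the number of balls, so it cannot arise) but not wrong.
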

\begin{proof} If $b$ is full, we take $B=b_1,...,b_k$ to be the maximum
interval of full bins, that is, the bin $b'$ preceding $b_1$ is not
full.  By Invariant \ref{inv:forward}, this means that no balls
hashing to or before $b'$ can end in $B$, so $B$ must be filled with
balls hashing to $B$. In the other direction, the
result is trivial if all balls hashing to $B$ end in 
$B$, since there are enough balls to fill all bins. However,
if a ball hashing to $B$
ends up after $b_k$, then $b_k$ is full by Invariant \ref{inv:forward}.
\end{proof}
In fact, the hashing of balls and bins determines completely the number
of balls landing in any given bin $b$. If the bin is full, this follows
from Lemma \ref{lem:fulls}. Otherwise, the number of 
balls landing in $b$ is determined by Lemma \ref{lem:non-fulls} below.
\begin{lemma}\label{lem:non-fulls}
If a bin $b$ is not full, consider the longest interval $b_1,\ldots,b_k$
of full bins leading to $b$, that is, $b$ succeeds $b_k$   and the
predecessor $b'$ of $b_1$ is not full. If the predecessor of $b$ is
empty, with have $k=0$, hence no bins in $b_1,\ldots,b_k$.

Then the balls landing in
$b_1,\cdots,b_k,b$ are exactly the balls hashing to these bins.
If $s$ balls hash to $b_1,\cdots,b_k,b$ and $b_1,\cdots,b_k$ have capacities
$C_1,\cdots,C_k$, then we have exactly $s-\sum_{i=1}^k C_k$ balls landing
in $b$.
\end{lemma}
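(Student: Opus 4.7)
The plan is to use Invariant \ref{inv:forward} twice: once to show that no ball from outside the interval $b_1,\ldots,b_k,b$ can land inside it, and once to show that no ball hashing inside the interval can escape it. Once both inclusions are established, a pigeonhole count gives the exact load of $b$.

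First I would argue the ``no entry'' direction. Suppose a ball $q$ hashes to some bin $c \notin \{b_1,\ldots,b_k,b\}$ but is placed in a bin of this interval. Travelling clockwise from $c$ to its final bin, the path must enter the interval through its first element $b_1$, hence must pass the predecessor $b'$. Since $b'$ is non-full by hypothesis, this contradicts Invariant \ref{inv:forward}. Therefore every ball that lands in $b_1,\ldots,b_k,b$ must hash to one of these bins.

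Next I would handle the ``no escape'' direction. Any ball $q$ hashing into $\{b_1,\ldots,b_k\}$ ends up somewhere, and the bins it passes are all full by Invariant \ref{inv:forward}. Since $b_1,\ldots,b_k$ are full but $b$ is not, $q$ cannot pass $b$, so $q$ must stop in one of $b_i,\ldots,b_k,b$. Likewise, a ball hashing directly to $b$ cannot pass the non-full bin $b$ itself, so it stops at $b$. Combining this with the previous paragraph, the set of balls landing in $b_1,\ldots,b_k,b$ is exactly the set of balls hashing there; call this count $s$.

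Finally, since each $b_i$ is full it contains exactly $C_i$ balls, accounting for $\sum_{i=1}^k C_i$ of the $s$ balls. The remaining $s - \sum_{i=1}^k C_i$ balls must reside in $b$, which is the claimed formula. The edge case $k=0$ is handled trivially: then the predecessor of $b$ is itself non-full, the interval reduces to $\{b\}$, and the argument above shows that $b$'s load equals the number of balls hashing to $b$. I do not expect any single step to be difficult; the only subtlety is taking care with the cyclic geometry when arguing that a ball entering the interval from outside must pass through $b'$, which is handled by the maximality of the full run $b_1,\ldots,b_k$.
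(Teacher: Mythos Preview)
Your proposal is correct and follows exactly the approach the paper takes: the paper's proof is the one-line remark that the result follows from Invariant~\ref{inv:forward} together with the fact that $b$ and $b'$ are not full, and your two directions (``no entry'' uses the non-fullness of $b'$, ``no escape'' uses the non-fullness of $b$) are precisely the unpacking of that sentence, followed by the obvious count.
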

\begin{proof} The result follows by Invariant \ref{inv:forward} together
with the fact that $b$ and $b'$ are not full.
\end{proof}
Summing up the two previous lemmas, we have
\begin{lemma}\label{lem:bin-loads}
The set of balls and the set of bins with their capacitities uniquely
determine how many balls land in each bin.
\end{lemma}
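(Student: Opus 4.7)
The plan is to derive Lemma~\ref{lem:bin-loads} as an immediate corollary of the two preceding lemmas, with essentially no new probabilistic or combinatorial work required. The key observation is that the previous two lemmas together pin down the load of every single bin in terms only of the hash locations of balls and bins and the bin capacities, with no reference to the insertion/deletion history.

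First I would argue that the set of full bins is determined. By Lemma~\ref{lem:fulls}, a bin $b$ is full if and only if there exists an interval of consecutive bins ending at $b$ whose total hashed mass is at least their total capacity. Since this criterion depends only on where the balls and bins hash and the capacities (not on history), the partition of bins into full and non-full bins is uniquely determined. For every full bin $b$, its load equals its capacity by definition, so the load is determined.

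Next I would handle the non-full bins. For any non-full bin $b$, Lemma~\ref{lem:non-fulls} identifies the unique maximal run $b_1,\ldots,b_k$ of full bins immediately preceding $b$ (possibly $k=0$) and states that the balls landing in $b_1,\ldots,b_k,b$ are exactly those hashing into this block. Consequently the load of $b$ equals the number of balls hashing to the block $b_1,\ldots,b_k,b$ minus $\sum_{i=1}^k C_i$. Each quantity on the right is a function only of hash locations and capacities, so the load of $b$ is uniquely determined.

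Since every bin falls into exactly one of the two cases and in each case its load is a deterministic function of the hashing data and the capacities, the loads of all bins are uniquely determined by the set of balls and the set of bins with their capacities. There is no real obstacle here — the work has already been done in Lemmas~\ref{lem:fulls} and~\ref{lem:non-fulls}, and this lemma is essentially just a bookkeeping consolidation; the only thing to be careful about is ensuring the two lemmas jointly cover every bin, which the case split on full/non-full does.
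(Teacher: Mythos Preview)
Your proposal is correct and matches the paper's approach exactly: the paper simply states that the lemma follows by ``summing up the two previous lemmas,'' and your case split on full versus non-full bins, invoking Lemma~\ref{lem:fulls} and Lemma~\ref{lem:non-fulls} respectively, is precisely what is intended.
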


\subsection{Expected distance to non-full bin}
Suppose we have $\balls$ balls and $\bins$ bins that are currently in
the system.  We refer to these balls and bins as {\em active}. We will
also study {\em passive\/} balls and bins that are not currently
in the system, yet which have hash values that will be used if they
get inserted. For some $\bar c>1$, the total capacity will be
exactly $\bar c\balls$.  Since no bin has capacity below $1$, we
always have $\bar c\balls/\bins\geq 1$. In our analysis, we will only
assume that each bin has a capacity between $\bar c\balls/(2\bins)$
and $2\bar c\balls/\bins$, and that the concrete capacites are independent
of the hashing of balls and bins.  We note that this is always satisfied when
bin capacities are at least $1$ and differ by at most $1$.

Theorem \ref{thm:exp-bins-c} below gives our main technical understanding of
configurations for larger $c$. It does not make any assumptions about
how we reached the configuration as long as Invariant \ref{inv:forward}
is satisfied with the desired bin capacities. Theorem  \ref{thm:exp-bins-c} 
may seem too complicated, but this is exactly the technical challenge of this paper: {\em the devil in
the details}. We need to exploit every detail of the theorem
to get the bounds we have claimed in the introduction. The proof
of Theorem   \ref{thm:exp-bins-c} is based on a very delicate analysis
of the interaction between the balls and bins.
\begin{theorem}\label{thm:exp-bins-c} Consider a configuration with $\balls$ active balls and $\bins$ active bins 
and total capacity $\bar c\balls$ for some $\bar c\geq 2$. Suppose, moreover, that
each bin has capacity between $\bar c\balls/(2\bins)$ and $2\bar c\balls/\bins$. Then
\begin{itemize}
\item[(a)] 
Starting from the hash location of a given passive ball or active or passive bin,
the expected number of consecutive full bins is $O(1/\bar c)$. 
\item[(b)] If we start
from a given active bin of capacity at least $2$, the expected number of 
consecutive full bins is $O((\log \bar c)/\bar c^2)$.
\item[(c)] The expected number of balls hashing directly to any given active bin
is $O(\balls/\bins)$. The expected number of balls forwarded into the bin
is $O((\balls/\bins)((\log\bar c)/\bar c^2))$. Finally, if a bin $i$ is not active, and its active
successor $i'$ is given an extra capacity of one, then the expected number
of full bins starting from $i'$ is $O((\log \bar c)/\bar c^2)$.
\end{itemize}
The above statements are satisfied if the balls and bins are hashed
independently, each using 5-independent hash functions or simple
tabulation hashing. The statement of (c) may seem a bit cryptic, and will make more sense in the context of the analysis it is used in below.
\end{theorem}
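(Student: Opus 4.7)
My approach reduces the combinatorial question to a probabilistic one via Lemma~\ref{lem:fulls}. For the first $k$ bins clockwise of the starting location $x$ to all be full, there must exist an arc $I$ of the unit circle, ending at or past the $k$-th bin and starting at or before the first bin after $x$, such that $X_I$ (the number of balls hashing into $I$) is at least $Y_I$ (the total capacity of the bins in $I$). Over the randomness of the two independent hash functions, a fixed arc of length $L$ has $\E[X_I] = mL$ and $\E[Y_I] = \bar c m L$, so the expected deficit is $(\bar c - 1)mL \geq mL$ because $\bar c \geq 2$. Each of the three parts then reduces to bounding the probability that some arc anchored near $x$ is ``overloaded'' in the sense $X_I \geq Y_I$, or (for (b) and the third claim of (c)) the stronger event $X_I \geq Y_I + 1$ or $+2$.

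\textbf{Part (a) via dyadic decomposition.} I would parameterize candidate arcs by a dyadic length $L \in \{2^{-j} : j \geq 0\}$ and by a left endpoint at one of the active balls or bins. Under 5-independence or simple tabulation, fourth-moment tail bounds in the style of \cite{patrascu12charhash} control $\Pr[X_I \geq Y_I]$: once $\bar c m L$ exceeds a constant, the tail decays polynomially in $\bar c m L$, so the contribution to the expected run length from that scale -- roughly the $O(nL)$ bins covered times the single-arc probability times the number of candidate arcs -- telescopes to $O(1/\bar c)$. Scales with $mL \ll 1$ contribute only through the event that any ball lands in the arc at all, which already has probability $O(mL)$, and these too sum geometrically.

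\textbf{Parts (b) and the third claim of (c).} If the starting bin $b^\star$ has capacity $C^\star \geq 2$ (resp.\ an extra $+1$ of capacity), then every arc covering $b^\star$ automatically carries this extra slack in its $Y_I$, so for $b^\star$ to begin a run of full bins we need $X_I \geq Y_I$ with that slack already folded in. Compared to the generic arc, this demands an additional deviation of order $1$ on top of the deficit $(\bar c - 1)mL$. At the critical scale $L \asymp 1/(\bar c m)$, where the expected deficit is itself only $\Theta(1)$, this extra unit cuts the tail probability by a further factor of roughly $1/\bar c$, and the savings apply on each of the $O(\log \bar c)$ dyadic scales between $1/(\bar c m)$ and $1/m$, giving the $O((\log \bar c)/\bar c^2)$ bound.

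\textbf{Rest of (c), and the main obstacle.} The $O(m/n)$ bound on balls hashing directly into an active bin follows by conditioning on the predecessor bin's hash: the expected preceding arc length is $O(1/n)$, so its expected ball mass is $O(m/n)$. The expected number of balls \emph{forwarded} into the bin is the expected ball mass hashing to the maximal full run immediately preceding it; by (b) applied in the reversed orientation (which is symmetric by construction) that run has expected length $O((\log \bar c)/\bar c^2)$, and each preceding bin carries $O(m/n)$ balls in expectation, giving the claimed $O((m/n)(\log \bar c)/\bar c^2)$. The main obstacle, and the ``devil in the details'' of the theorem, is obtaining sharp and uniform-in-scale joint concentration for $X_I - Y_I$ -- two independent hash families, varying numbers of balls and bins per arc -- and carefully patching the boundary regimes $mL \approx 1$ and $\bar c m L \approx 1$, where either the excess or the fluctuation is $\Theta(1)$. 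This is the step that actually uses the full hypothesis set (bin-capacity bounds, $5$-independence, and $\bar c \geq 2$), and where every factor in the statement must be extracted deliberately.
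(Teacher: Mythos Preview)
Your high-level shape is right---dyadic decomposition, moment bounds, exploiting independence of the two hash families---but several concrete mechanisms are missing or wrong, and these are exactly where the theorem earns its factors.

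\textbf{Separation, not joint concentration.} You propose bounding $\Pr[X_I\geq Y_I]$ directly as a joint deviation of $X_I-Y_I$. The paper never does this. Because balls and bins hash independently, it fixes a pair of parameters $(t,\ell)$ and splits the event ``there is a run of $\approx t$ full bins covering an interval of length $\approx\ell$'' into a pure \emph{ball} event $A(t,\ell^+)$ (``at least $t\bar c m/(2n)$ balls hash to a window of length $2\ell^+$'') and a pure \emph{bin} event $B(t^+,\ell^-)$ (``at most $t^+-2$ bins hash to a half-window of length $\ell^-/2$''), then multiplies $\Pr[A]\Pr[B]$. This is not a cosmetic simplification: the two events live at different scales (ball counts scale with $m/n$, bin counts with $1$), and they are bounded with \emph{different} tools---a fourth-moment bound for $B$, and for $A$ either a fourth-moment bound or the elementary $a$-subset bound $\Pr[X\geq x]=O((\mu/x)^a)$ depending on regime. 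A joint analysis of $X_I-Y_I$ would blur this and you would not recover the sharp exponents.

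\textbf{The mechanism behind (b).} Your explanation for the extra $1/\bar c$ in part (b)---``an additional deviation of order $1$ on top of the deficit''---is not a proof. The paper's actual device is combinatorial: to fill $t$ bins including the starting bin of capacity $\geq 2$, at least $x\geq 2$ balls (for $t=1$) or $x\geq 3$ balls (for $t\geq 2$) must hash to the window, so one may apply $\Pr[X\geq x]=O((\mu/x)^a)$ with $a=2$ or $a=3$ rather than $a=1$. Since $\mu/x=\Theta(s/\bar c)$, raising the exponent $a$ by one buys exactly one extra factor of $1/\bar c$ on the dominant term. In part (a), by contrast, the starting bin might have capacity $1$, so $t=1$ forces $a=1$. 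This distinction in the allowable $a$ is the entire content of (a) versus (b); your proposal does not identify it.

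\textbf{Double dyadic sum.} You parameterize only by arc length $L$. The paper runs a \emph{two}-parameter dyadic sum: first over $t=2^i$ (number of full bins), and for each $t$ over $\ell=2^j\ell(t)$ with $\ell(t)=\Theta(tr/n)$, with separate endpoint terms at $\ell=\ell(t)$ and $\ell=c\,\ell(t)$. The cross-structure is what lets the product $\Pr[A]\Pr[B]$ telescope correctly across all regimes.

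\textbf{Part (c): forwarded balls.} Your argument multiplies the expected length of the preceding full run by the expected load of each bin; these are dependent quantities, so the product of expectations is not justified. The paper instead argues: for any ball to be forwarded into $q$, the predecessor $q^-$ must be full even with one extra unit of capacity, hence (b) bounds $\Pr[\text{any forwarding}]$ by $O((\log\bar c)/\bar c^2)$; then it multiplies this probability by the deterministic cap $2\bar c m/n$ on how many balls can land in $q$. No ``reversed orientation'' is used.
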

The worst-case for our bounds is when the capacities are $1$ and
$2$. This case explains why (a) would not work for an active ball
since an active ball by itself could fill a bin of capacity
1. However, when a ball $q$ is inserted, we do forwarding to the
nearest non-full bin in the configuration before $q$ is insertion where
$q$ is still passive, and therefore Theorem \ref{thm:exp-bins-c} (a) gives
the expected number of fulls bins passed.

\begin{corollary}\label{thm:find}
With balancing parameter $c=1+\eps\in(1,2]$, the expected number of
bins visited in a search is $1+O((\log c)/c)$ if
$c\geq 2$, and $O(1/\eps^2)$ if $c\leq 2$.
\end{corollary}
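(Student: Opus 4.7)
The plan is to reduce the search problem to bounding the expected chain length $F$ of consecutive full bins beginning immediately after $h(q)$ in the current (``with $q$'') configuration, and then to bound $E[F]$ using Theorem~\ref{thm:exp-bins-c}(a) in the regime $c\ge 2$ and a companion chain-length estimate in the regime $c\le 2$.

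First I would observe that the search visits at most $F+1$ bins. By Invariant~\ref{inv:forward}, the ball $q$ (if present) must reside in the run of full bins containing $h(q)$; either we find $q$ there or we terminate at the first non-full bin following $h(q)$, so in either case at most $F+1$ bins are visited.

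For the case $c\ge 2$, the plan is to apply Theorem~\ref{thm:exp-bins-c}(a) through a ``passive probe'' reduction. Introduce, for the purpose of analysis only, a passive ball $q'$ with $h(q')=h(q)$ but with an ID smaller than $q$'s, so that by the tie-breaking rule $q'$ precedes $q$ in the cyclic order. Since $q'$ is passive, it does not affect the set of active balls or bins, and so the bin loads are unchanged; consequently the run of full bins measured from $h(q')$ in the augmented configuration is exactly the run measured from $h(q)$ in the original configuration. Part (a) then yields $E[F]=O(1/\bar c)=O(1/c)$, so the expected number of bins visited is $1+O(1/c)$, which is the desired $1+O((\log c)/c)$.

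For the case $c\le 2$, the hypothesis $\bar c\ge 2$ of Theorem~\ref{thm:exp-bins-c} fails, so a separate chain-length estimate is needed. Here the required bound $E[F]=O(1/\eps^2)$ follows from an analysis analogous to the classical linear-probing argument with load factor $1/(1+\eps)\approx 1-\eps$, adapted (as in the machinery underlying Theorem~\ref{thm:moves-eps}) to the setting where bins are themselves randomly placed rather than occupying every cell; this parallels \cite{patrascu12charhash}. The same passive-probe reduction then gives bins visited $=O(1/\eps^2)$.

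The main obstacle is justifying the passive-probe step: we need to know that Theorem~\ref{thm:exp-bins-c}(a) applies even when the starting hash location is not independent of the configuration but is forced to coincide with that of an active ball. The device above sidesteps this by exploiting the fact that passive insertions do not perturb the configuration and that part (a) bounds the expected run length uniformly in the starting hash location. The secondary obstacle is that the $c\le 2$ regime relies on a chain-length bound outside the scope of Theorem~\ref{thm:exp-bins-c}, requiring the independent linear-probing-style argument referenced above.
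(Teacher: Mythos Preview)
Your reduction to ``search $\le F+1$'' is fine, but the passive-probe step does not work, and the gap is exactly the one you flag and then claim to have sidestepped. Theorem~\ref{thm:exp-bins-c}(a) does \emph{not} bound the run length uniformly in the starting location; its proof fixes $h(q)$ first and then uses that the remaining \emph{active} balls hash independently of that point. If your probe $q'$ satisfies $h(q')=h(q)$ with $q$ still active in the configuration, then after fixing $h(q')$ you have also pinned down the location of an active ball, and the ball count in every interval around $h(q')$ is deterministically biased upward by the contribution of $q$. This is precisely why part~(a) excludes active balls: with unit-capacity bins (which occur when $\bar c\,\balls/\bins=1$), the bin $q$ hashes to is \emph{always} full in the ``with $q$'' configuration, so $F\ge 1$ deterministically and $E[F]\ge 1$, not $O(1/c)$. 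Your argument therefore only yields $2+o(1)$ visited bins for large $c$, missing the $1+O((\log c)/c)$ claim.

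The paper's fix is to pass to the configuration \emph{without} $q$, rather than to add a passive probe on top of the configuration with $q$. In the ``without $q$'' system $q$ is genuinely passive, so Theorem~\ref{thm:exp-bins-c}(a) (for $c\ge 2$) and Theorem~\ref{thm:exp-bins-eps} (for $c\le 2$) apply directly: the expected number $F'$ of consecutive full bins from $h(q)$ is $O(1/c)$, respectively $O(1/\eps^2)$. The missing combinatorial step is to argue that in \emph{any} valid ``with $q$'' configuration, the bin containing $q$ lies no later than the first non-full bin $b^*$ of the ``without $q$'' configuration---equivalently, $q$ sits no further than if it were inserted last. This follows from Lemma~\ref{lem:bin-loads}: inserting $q$ last changes only the load of $b^*$ (by $+1$), so if in some history $q$ resided strictly beyond $b^*$, then deleting $q$ and filling holes would touch only bins at or beyond $q$'s position and leave $b^*$'s load unchanged, contradicting the uniqueness of loads for the ``without $q$'' ball set. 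Hence the search visits at most $F'+1$ bins, which gives the stated bounds.
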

\begin{proof} 
If the ball $q$ searched is not in the system, then the
search is to only up to and including the first non-full bin. However,
if $q$ is in the system, the latest it can be placed is if was
added last, which corresponds is the first non-full bin if in the
system without $q$. So in both cases, the expected
search is bounded as one plus the number of consecutive full bins
starting from an passive ball as in Theorem \ref{thm:exp-bins-c} (a).
\end{proof}

\subsection{Bounding the expected number of moves}
We are now going to prove Theorem \ref{thm:moves-c} applying
Theorem \ref{thm:exp-bins-c} to settled configuations. If
the settled configuations has $\balls$ active balls, $\bins$
active bins, and total capacity $\bar c m$, then we
need to make sure that $\bar c\geq c$ where $c$ is the fixed parameter chosen 
to control the bin capacity.
The bounds only get better with larger $\bar c$. We know that that
before and after every update we have total capacity $\ceil{c
\balls/\bins}=\bar c\balls/\bins$ so $\bar c\geq c$. However, 
we will also argue about intermediate configurations, e.g., when
we insert a ball, we first implement capacity increases one by one,
settling into a valid configuration after each capacity increase, but
these capacity increases will only increase $\bar c$ since $m$ has
not increased yet.

The moves have already been described in Subsection \ref{sec:our-forward},
but we review them below for the sake of analysis.

First we discuss plain inserations and deletions without any changes
to bin capacities.

\paragraph{Plain insertion}
Consider an insertion of a ball $q$ hashing to some bin $i$. For now,
we ignore that some capacities have to be increased. First we place
$q$ in $i$, which may become overfull. If so, we forward some ball to
the succeeding bin $i'$, repeating until we reach a bin that has room
because it was not full. The important thing to note is that we move at most one ball per
full bin encountered, and that we stop when we meet a bin that was not
full before the insertion. 
By Theorem \ref{thm:exp-bins-c} (a) applied to the configuration
before the insertion, the expected number of moves, excluding the
initial placement, is $O(1/\bar c)=O(1/c)$.

\paragraph{Plain deletion}
When we delete a ball $q$ from a bin $i$, we create a hole in $i$ that
we try to fill with a ball $q'$ that has passed bin $i$ in the sense
that $q'$ resides in a later bin $i'$ but hashes to bin $i$ or some
earlier bin. If we succeed, we recursively try to fill the new hole
left by $q'$ in $i'$. Suppose the last hole created is in bin
$i''$. We have now completed the deletion with a new valid
configuration. We know that all bins from $i$ to the bin preceding
$i''$ are full while $i''$ is not full. The number of moves, including the
initial deletion of $b$, is bounded by
the number of bins from $i$ to $i''$. By Theorem \ref{thm:exp-bins-c}
(a) applied to the configuration after the deletion, it follows that
the expected number of moves, including the final removal, is
$1+O(1/\bar c)$ but now $\bar c$ is measured right after the deletion
where we have only $\balls-1$ balls so $\bar
c=\ceil{c\balls/\bins}m/(\balls-1)>c$.

\paragraph{Full deletion} 
To complete a deletion, we have to do $c\pm 1$ capacity decreases.
Since the ball has been removed, we have $\balls'=\balls-1$ balls
during the capacity decreases that  all
together bring the the capacity down to $\ceil{c\balls'}$. Doing
one capacity decrease at the time, we know that every configuration
encountered has total capacity $D\geq \ceil{c\balls'}$, and hence
$\bar c=D/\balls'\geq c$. We recall that the bins picked for capacity
decreases are chosen independently of the random choices made by the
hash functions.

When we decrease the capacity of a bin $i$, if the bin now has one
ball more than its capacity, we perform exactly the same process as
when we hashed a ball to a full bin, forwarding a ball until
we reach a bin that was not already full.  The number of moves is
bounded by the number of consecutive full bins starting from $i$ in
the configuration before the capacity of bin $i$ was decreased.

A crucial observation is that since the lowest possible capacity is $1$,
bin $i$ had capacity at least two before the decrease, so by
Theorem \ref{thm:exp-bins-c} (b) applied to
the configuration before the capacity decrease, the expected number of moves
is $O((\log\bar c)/\bar c^2)=O((\log c)/c^2)$. We are doing at most $\ceil c$
such capacity decreases, so the total expected number of moves, including
the plain deletion of $q$ itself, is bounded by
\[1+O(1/c)+(1+c)O((\log c)/c^2)=1+O((\log c)/c).\]

\paragraph{Full insertion} 
A full insertion is symmetric to a full deletion, doing
$c\pm 1$ capacity increases before we do the plain insertion. The
expected number of moves, including the 
the addition of $q$ itself is $1+O((\log c)/c)$.

\paragraph{Bin updates and forwarding}
When we do bin updates, there may be a lot possibilities for which
balls to move. We will argue that the concrete choices do not matter
for our overall bounds.

Note that for the placement of balls in bins according to Invariant
\ref{inv:forward}, we can think of all passive bins as active bins
with capacity zero. Then adding a bin with capacity $C$ is like
increasing the capacity to $C$, while removing it is like decreasing
its capacity to $0$. The lemma below considers the general impact of
changing the capacity of a bin.

\begin{lemma}\label{lem:mov-cap-dec}
Suppose we have a valid configuration, and decrease the capacity of
some bin $i$ from $C^+$ to $C^-$ while leaving all other capacities
unchanged. We assume that the total capacity remains strictly bigger than
the total load. Then the total number of forwardings done from overfull
bins to reach a new valid configuration depends only on which
balls are in the system.
The number of forwardings also
bounds the maximal number of moves done if we conversely increase the capacity
from $C^-$ to $C^+$, filling holes to restore Invariant \ref{inv:forward}.
\end{lemma}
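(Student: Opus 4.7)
The plan is to use Lemma \ref{lem:bin-loads} as the main tool. By that lemma, once we fix the set of active balls and bins together with their capacities, the load of each bin in the settled configuration is uniquely determined (independent of history or of the particular ordering of forwardings). So the initial configuration (with bin $i$ at capacity $C^+$) and the final configuration (with bin $i$ at capacity $C^-$) each have uniquely determined per-bin loads $L_b$ and $L_b'$.

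Next I would introduce, for each pair of consecutive bins $b$ and its successor, a ``crossing count'' $p_b$ equal to the number of balls that hash to bin $b$ or earlier in the maximal full run leading up to $b$ but land in a bin strictly after $b$. Using the cumulative identity ``(balls hashing to a run of bins) minus (capacity used along it),'' the values $p_b$ and $p_b'$ are determined by the two load vectors. Since a capacity \emph{decrease} can only move balls forward, every forwarding step moves some ball from one bin to the next and thereby increments exactly one crossing count by exactly one, so the total number of forwardings equals $\sum_b (p_b' - p_b)$. This depends only on the two configurations and hence only on which balls are in the system, which is the first claim.

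For the second claim, I would observe the symmetric fact: when we instead raise the capacity of bin $i$ from $C^-$ to $C^+$, each hole-filling move takes a ball from a later bin into an earlier one and decrements exactly one crossing count by exactly one. The process terminates as soon as Invariant \ref{inv:forward} is restored, which (again by Lemma \ref{lem:bin-loads}) happens only when the crossing counts have decreased back to the values $p_b$ corresponding to the $C^+$ configuration. Thus, however the choices are made, the total number of hole-fillings is at most $\sum_b (p_b' - p_b)$, the number of forwardings in the reverse process.

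The main obstacle is the accounting step: one must verify that every forwarding from an overfull bin really does increment exactly one crossing count by one, and that, symmetrically, every hole-filling move decrements exactly one crossing count by one -- across every possible sequence of non-deterministic choices of which ball to move. Here the uniqueness supplied by Lemma \ref{lem:bin-loads} is what lets us equate the total change in $\sum_b p_b$ (a quantity depending only on the two configurations) with the step-by-step count of forwardings (or hole-fillings). A secondary subtlety is making sure that in the hole-filling direction the process cannot overshoot and go past the configuration that matches $C^+$ -- which follows because, once the crossings match $p_b$, no hole is passed by any ball, so no further move is triggered.
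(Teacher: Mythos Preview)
Your approach is essentially the paper's: your crossing count $p_b$ is what the paper calls the forwarding count, and the change $p_{i_j}'-p_{i_j}$ is exactly the paper's $|Q_j|$, so both proofs compute the same invariant $\sum_j|Q_j|=\sum_b(p_b'-p_b)$ and appeal to Lemma~\ref{lem:bin-loads} to conclude it is history-independent.

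One slip to fix in the hole-filling direction. A hole-filling move takes a ball $q$ from some bin $b'$ directly back to a bin $b$ that $q$ passes; this decrements $p_{b''}$ for \emph{every} bin $b''$ from $b$ through the predecessor of $b'$, not ``exactly one.'' That is precisely why the bound is only ``at most'' (and why the paper likewise says $\sum|Q_j|$ is an \emph{upper bound} on the number of hole-filling moves). Your stated conclusion is right, but the justification should read ``decrements at least one crossing count''; with ``exactly one'' you would be proving equality, which is false in general.

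A minor wording point: $p_b$ is not determined by the load vector alone (two settled configurations with identical loads but different hash patterns can have different $p_b$). What determines $p_b$ in a settled configuration is the set of balls with their hash locations together with the capacities---which is exactly ``which balls are in the system,'' so your final dependence claim is unaffected.
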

\begin{proof} The proof of Lemma \ref{lem:mov-cap-dec} is fairly standard,
thinking of forwarding as a flow of balls.

We know from Lemma \ref{lem:bin-loads} that the balls and bins with
capacities uniquely determine the load of any bin. In particular,
this determines all bin loads before we do the capacity decrease.

When the capacity of bin $i$ is decreased to from $C^+$ to $C^-$, bin $i$
may become overfull, and then we have to forward some balls.
Then, repeatedly, we take an arbitrary ball from any overfull bin and
forward it to the next bin, until no overfull bin remains.  When we
forward from a ball from an overfull bin it remains at least full, so
if $i'$ is the last bin we forward from, then all bins from $i$ to
$i'$ are full. Since the total capacity remains strictly bigger than the
total load, we cannot keep forwarding the whole way around the cycle.

Let $i_0,\ldots,i_k$ be the bins from $i$ to $i'$.
For $j=0,\ldots,k-1$, let $Q_j$ be the set of balls that got forwarded
from $i_j$. Then $Q_0$ are all balls forwarded from $i_0$ so $|Q_0|$ 
is exactly the number of balls with which $b_0$ exceeded the new capacity $C^-$.
For $j=1,\ldots,k-1$, $Q_j$ consists of balls that either
started in bin $i_j$ or came from $Q_{j-1}$. Since bin $i_j$
ends up full, $|Q_j|-|Q_{j-1}|$ is exactly the number of balls bin $i_j$ 
needed to fill its capacity. We have $|Q_j|=|Q_{j-1}|$ if bin $i_j$ was
full before the capacity decrease; otherwise $|Q_j|<|Q_{j-1}|$. 
The process stops at bin $i_k$ because adding $|Q_{k-1}|$ balls to bin $i_k$
does not exceed its capacity. 

From the above description, it follows that the numbers $|Q_0|,\ldots,|Q_{k-1}|$ are
uniquely determined by the bin loads before the capacity decrease, and
these were uniquely determined by the balls and bins with capacites.
In particular, this determines
the total number of forwardings  $\sum_{j=1}^{k-1} |Q_j|$.

We now consider the reverse operation, increasing the capacity of bin $i$
from $C^-$ to $C^+$. Even before the capacity increase, the total capacity 
exceeds the total number of balls, so there is some first non-full bin $i''$
following $i$ in the clockwise order. By Invariant \ref{inv:forward} there
is no ball located after bin $i''$ that hash to bin $i''$ or before.

After increasing the capacity of bin $i$, we get $C^+-C^-$ new holes in
bin $i$, and then start trying to fill new holes recursively: if we
have a new hole in some bin $i'$, we look for a ball hashing to a
later bin that hash to bin $i'$ or an earlier bin. We know that we never
have to look past bin $i''$. Let $i'$ be the bin furthest from $i$
that loses a ball, creating a new hole that cannot be filled.

Let $i_0,\ldots,i_k$ be the bins from $i$ to $i'$.
For $j=0,\ldots,k-1$, let $Q_j$ be the set of balls 
that were used to fill a hole in bins $i_0,\ldots,i_j$ with a ball
starting from bins $i_{j+1},\ldots,i_k$.  In particular $Q_0$ is
the set of balls moved to fill the holes in $i$, and then 
$\sum_{i=0}^{k-1}|Q_j|$ is an upper bound on the number of moves to fill holes.

We now observe that $Q_0,\ldots,Q_{k-1}$ is a forwarding sequence
from a valid configuration after the capacity increase for bin $i$ and to 
a valid configuration before the capacity increase, that is, we get back
to the start if we for $j=1,\ldots,k-1$, forward the balls in $Q_j$ from
bin $i_j$ to $i_{j+1}$. As we saw above, this implies that
the balls and bins with capacites uniquely determines
$|Q_0|,\ldots,|Q_{k-1}|$, hence also our upper bound 
$\sum_{j=1}^{k-1} |Q_j|$ on the number of hole filling moves.
\end{proof}

The strength of Lemma \ref{lem:mov-cap-dec} is that when analyzing 
the number moves in connection with a capacity change, then we can look
at forwardings in any order we want, and get bounds that hold for any
sequence of moves following the description in Subsection \ref{sec:our-forward}.

\paragraph{Closing a bin}
When closing a bin $i$, we are going to lose its capacity $C\leq
\ceil{c\balls/\bins}$. Our first action is to increase
by one the capacities of $C$ other bins. Doing the increases first, we
make sure that the total capacity before every increase is always at
least $c\balls$. An increase is the inverse of the decrease that we
studied under full deletions, so the expected number of moves for each
increase is bounded by $O((\log c)/c^2)$. The expected total number of
moves resulting from all $C$ bin increases is thus
\[C\cdot O((\log c)/c^2)=O((\balls/\bins)(\log c)/c).\]
We are now ready to start closing $i$, having made sure that the total
capacity remains above $c\balls$. The closing have the same effect
as setting the capacity to zero, which may become overfull. We now
have to forward balls from overfull bins. 
By Lemma \ref{lem:mov-cap-dec},
when bounding the number of forwarding, we can perform them in any order
that is convinient for the analysis.

The first forwarding we do in our analysis is to transfer all the balls in $i$ to its
successor $i'$. The balls from $i$ either hashed directly to $i$, or
they were forwarded from the predecessor of $i$.  By Theorem
\ref{thm:exp-bins-c} (c), we expect to have $O(\balls/\bins)$ balls
that hashed directly to $i$ and $O((\balls/\bins)(\log c)/c)$ forwarded to
$i$ from its predecessor.  The balls from $i$ are now all moved to
$i'$. Bin $i$ is no longer active.

Now $i'$ may be overfull, and then we repeatedly forward balls from 
overfull bins until no overfull bins remain and Invariant \ref{inv:forward}
is restored. Let $i''$ be the last bin receiving a ball
from this forwarding. Then the total number of forwardings
is at most $C$ times the number of bins from $i'$ to the predecessor
of $i''$. All bins from $i'$ to $i''$ are
now full, but if we gave $i$ an extra capacity of one, 
then $i''$ would have received one less ball, hence
not be full. Applying by Theorem \ref{thm:exp-bins-c} (c),
to this situation, we know that the expected number
full bins starting from $i'$ to the predecessor of $i''$
is $O((\log c)/c^2)$, so the expected number
of forwardings is $C\,O((\log c)/c^2)\leq\ceil{c\balls/\bins}
O((\log c)/c^2)=O((\balls/\bins)(\log c)/c)$.
It might seem that we could here replace $C$ with its
expectation $O(\balls/\bins)$, but then we would be
multiplying two expectations that are not independent.

Summing up, we have proved that when closing $i$, besides the transfer to $i$
of $O(\balls/\bins)$ expected balls hashing directly to $i$, we have $O((\balls/\bins)(\log c)/c)$
ball moves. Here $\bins$ is the number of bins before the closing.

\paragraph{Opening a bin} When we open a bin it is like increasing
its capacity from zero to the desired capacity. Thanks to 
Lemma \ref{lem:mov-cap-dec}, we can bound the number of moves simply 
refering to the bound from above on the number of forwards used when we
above closed a bin, decreasing its capacity to zero. The subsequenct
unit capacity decreases are also symmetric to the unit capacity
increases we did when we closed a bin, so get the same overall bounds
for opening a bin as we did for closing a bin, that is,
besides the transfer to $i$
of $O(\balls/\bins')$ expected balls hashing directly to $i$, we have $O((\balls/\bins')(\log c)/c)$
ball moves. Here $\bins'$ is the number of bins after the opening of bin $i$,
but this only makes our bound better than if we used the
number $\bins=\bins'-1$ of balls before the opening.

This completes the proof of Theorem \ref{thm:moves-c} assuming the
correctness of Theorem \ref{thm:exp-bins-c}.

\subsection{Small capacities}
In the same way that we proved Theorem \ref{thm:moves-c} assuming 
Theorem \ref{thm:exp-bins-c}, we can prove Theorem \ref{thm:moves-eps} using
the following theorem.
\begin{theorem}\label{thm:exp-bins-eps} 
Consider a configuration with $\balls$ active balls and $\bins$ active bins 
and total capacity $(1+\bar\eps)\balls$ for some $\bar\eps\in (0,1]$. 
Suppose, moreover, that
each bin has capacity between $(1+\bar\eps)n/(2\bins)$ and $2(1+\bar\eps)\balls/\bins$. Then
starting from the hash location of a given passive ball or active or passive bin,
the expected number of consecutive full bins is $O(1/\bar \eps^2)$. 

Here we assume that balls and bins are hashed
independently, each using 5-independent hash functions or simple
tabulation hashing.
\end{theorem}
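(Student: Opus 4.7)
}
The plan is to mirror the structure of the proof of Theorem~\ref{thm:exp-bins-c} but to invoke the sharper 4th-moment tail bounds for 5-independent hashing developed for linear probing in \cite{patrascu12charhash}. Let $x$ denote the starting hash location, let $b_1,b_2,\dots$ be the active bins clockwise from $x$, with $b_0,b_{-1},\dots$ preceding it, and let $R$ be the length of the run of consecutive full bins starting at $b_1$. I will start from $E[R]=\sum_{k\geq 1}\Pr[R\geq k]$. By Lemma~\ref{lem:fulls}, $R\geq k$ forces the existence of some $a\leq 1$ such that the arc spanned by the $J:=k-a+1$ bins $b_a,\dots,b_k$ receives at least as many ball hashes as their total capacity; this is the consistent-hashing analog of the standard ``overfull prefix'' characterization used in the analysis of linear probing.

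Next, for each $k$ and $J\geq k$, I would bound the probability of the overfull event on the particular $J$-bin window ending at $b_k$. Under the hypothesis its total capacity is at least $J(1+\bar\eps)m/(2n)$, and since capacities are assigned independently of the hashing its mean is $J(1+\bar\eps)m/n$, while the expected number of balls hashing into the random arc is $Jm/n$. Hence the overfull event requires a combined deviation of order $\bar\eps Jm/n$ jointly across the ball hash and the bin hash. Since both hash functions are 5-independent (or simple tabulation), I would apply a 4th-moment Chebyshev inequality, controlling the ball-count fluctuation against the arc length and the arc length / bin count fluctuation against $J/n$ separately, to obtain $\Pr[\text{specific }J\text{-window overfull}]=O\!\bigl(1/(\bar\eps^4 J^2)\bigr)$ once $J$ is large enough that the required deviation exceeds one standard deviation. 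Summing over $J\geq k$ yields the preliminary bound $\Pr[R\geq k]=O\!\bigl(1/(\bar\eps^4 k)\bigr)$.

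To convert this into $E[R]=O(1/\bar\eps^2)$, I would split the outer sum at the threshold $k_0=\Theta(1/\bar\eps^2)$: the contribution from $k\leq k_0$ is at most $k_0=O(1/\bar\eps^2)$ via the trivial bound $\Pr[R\geq k]\leq 1$, and for $k>k_0$ I would strengthen the tail to $\Pr[R\geq k]=O\!\bigl(1/(\bar\eps^4 k^2)\bigr)$ using a dyadic ``length $\times$ location'' union bound in the style of \cite{patrascu12charhash}, whose sum is then $O(1/\bar\eps^2)$. The three cases of starting location (hash of a passive ball, of an active bin, of a passive bin) are treated uniformly, since in each case $x$ is fixed and the remaining ball-hash and bin-hash randomness retains the required 5-independence structure.

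The main obstacle is the concentration step. In linear probing, cell positions are deterministic, so the only source of randomness driving the overfull event is the ball hash. Here both balls and bins are randomly placed, and the arc length spanned by any given $J$-bin window depends on the bin hash, coupling the two sources of randomness. I would address this by first conditioning on the bin hashes, which reduces the inner event to a pure ball-hash deviation on a now-fixed arc of known length to which the 4th-moment lemma of \cite{patrascu12charhash} applies directly; then I would take expectation over the bin hashes, using only the deterministic lower bound $J(1+\bar\eps)m/(2n)$ on capacity so as to decouple the two sources of randomness. A secondary difficulty is upgrading the naive $1/(\bar\eps^4 k)$ tail to $1/(\bar\eps^4 k^2)$, which I expect to follow by adapting the dyadic cluster-length trick of \cite{patrascu12charhash} to bins rather than cells, where ``length'' refers to the number of bins $J$ in the window and ``location'' refers to how the window straddles $x$.
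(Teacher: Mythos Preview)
Your high-level plan is in the right direction, but the concentration step has a genuine gap. You propose to ``condition on the bin hashes, which reduces the inner event to a pure ball-hash deviation on a now-fixed arc of known length,'' and then to decouple by ``using only the deterministic lower bound $J(1+\bar\eps)m/(2n)$ on capacity.'' This does not work. After conditioning on the bin hashes, the arc $I$ spanned by your $J$-bin window has some fixed length $|I|$, and the expected ball count in $I$ is $m|I|/r$, not $Jm/n$. If the bins happen to be sparse near $x$, then $|I|$ can be much larger than $Jr/n$, making $m|I|/r$ exceed the capacity lower bound $J(1+\bar\eps)m/(2n)$; in that case the overfull event requires no ball deviation at all and your fourth-moment bound gives nothing. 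The deterministic capacity lower bound controls the target, but it says nothing about the (random) mean you are comparing against, so the decoupling fails.

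The paper's fix is precisely the dichotomy you hint at earlier but then abandon: it proves (Lemma~\ref{lem:search-conseq}) that $d$ consecutive full bins around $p$ forces the existence of an interval $I\ni p$ of length at least $\ell=\lfloor dr/(2n)\rfloor$ on which \emph{either} (i) the ball count $X_I$ exceeds $(1+\delta)m|I|/r$, \emph{or} (ii) the weighted bin count $Y_I$ falls below $(1-\delta)n|I|/r$, where $\delta=\Theta(\bar\eps)$. Each of (i) and (ii) is now a pure single-hash-function deviation event, measured against the arc length $|I|$ rather than the bin count $J$. The paper then invokes the existential-interval bound from \cite{patrascu12charhash}---your ``dyadic length $\times$ location'' lemma, stated there as $\Pr[\cD_{\ell,\delta,p}]=O((\alpha\ell+(\alpha\ell)^2)/(\delta\alpha\ell)^4)$---once for balls and once for bins, obtaining $\Pr[d\text{ full bins}]=O(1/(\delta^4 d^2))$ directly. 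This avoids your intermediate $O(1/(\bar\eps^4 k))$ step and the separate ``upgrade,'' because the existential lemma already absorbs the union over window lengths and positions. Note also that the paper parameterizes by arc length $\ell$, not by bin count $J$; this is essential, since the \cite{patrascu12charhash} lemma is stated for fixed-length intervals on the cycle, and translating bin counts to arc lengths is exactly where the bin-hash randomness must be controlled via event (ii).
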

As in the analysis for Theorem \ref{thm:moves-c}, we will be operating
with an $\bar\eps\geq \eps$ such that the total capacity is exactly
$(1+\bar\eps)\balls$. Applying Theorem \ref{thm:exp-bins-eps} will then
yield a bound of $O(1/\bar\eps^2)=O(1/\eps^2)$. However, it
could be that while $\eps\leq 1$, we end up with $\bar\eps>1$. In
this case, we will instead apply Theorem \ref{thm:exp-bins-c},
and get a bound of $O((\log (1+\bar\eps))/(1+\bar\eps))=O(1)=O(1/\eps^2)$.

We note that Theorem \ref{thm:exp-bins-eps} is much simpler than
Theorem \ref{thm:exp-bins-c}. The point is that Theorem \ref{thm:exp-bins-c}
is used
to prove a loss by a factor $1+o(1)$ relative to simple consistent hashing
without capacities. For  Theorem \ref{thm:moves-eps}, the loss factor
is $O(1/\eps^2)$, which is much less delicate to deal with, e.g., for
the number of balls in a bin, instead of analyzing the expected number,
which is $O(\balls/\bins)$, we can just use the hard capacity, which is at most
$2(1+\bar\eps)\balls/\bins=O(\balls/\bins)$.

\section{Analysing expectations with large capacities}
In this section, we are going to prove Theorem \ref{thm:exp-bins-c}.

\subsection{Basic probability bounds}
We first briefly review the probability bounds we will use, and what
demands they put on the hash functions used. The general scenario
is that we have some bounded random variables $X_1,\ldots,X_n=O(1)$ and
$X=\sum_{i=1}^n X_i$. Let $\mu=\E[X]$. Assuming that the $X_i$ are 4-independent,
we have the fourth moment bound
\begin{equation}\label{eq:fourth}
\Pr[|X-\mu|\geq x]=O\left((\mu+\mu^2)/x^4\right).
\end{equation}
Deriving \req{eq:fourth} is standard (see, e.g., \cite{pagh07linprobe}). 
We will typically have the variable $X_i$ denoting that a ball (or a bin) hash
to a certain interval, which may be defined based on the hash of a certain
query ball $q$. If the hash function is 5-independent, the $X_i$ are
4-independent.

%

The fourth moment bound is very useful when $\mu\geq 1$. For smaller $\mu$,
we are typically looking for upper bounds $x$ on $X$, and there we
have much better bounds in the combinatorial case where $X_i$ indicates
if ball (or bin) $i$ lands in some interval. Suppose we know that the $X_i$ are
$a$-independent for some $a\leq x$.  
The probability that $a$ given balls land
in the interval is $p^a$, so the expected number of
such $a$-sets is ${\balls \choose a} p^a$. If we get $x$ or
more balls in the interval, then this includes at least ${x \choose a}$ such $a$-sets
in the interval. Thus, by Markov's inequality, with independence $a\leq x$,
\begin{equation}\label{eq:d>k}
\Pr[X\geq x]= (\mu^{\underline{a}}/a!)/(x^{\underline{a}}/a!)=O((\mu/x)^a)\textnormal{ for }a=O(1)
\end{equation}
where $x^{\underline a}$ is defined to be $x(x-1)\cdots(x-a+1)$.
We shall only use \req{eq:d>k} with $a\leq 3$. One advantage to this is that
our results will hold, not only with $5$-independent hashing, but also
with simple tabulation hashing. The point is that in  \cite{patrascu12charhash}
it is shown that while simple tabulation is only 3-independent, it does
satisfy the fourth moment bound \req{eq:fourth} even with a fixed hashing
of a given query ball. According to the experiments \cite{patrascu12charhash},
simple tabulation hashing is an order of magnitude  faster than 5-independent
hashing implemented via a degree-6 polynomial.

\subsection{Proof of Theorem \ref{thm:exp-bins-c}}
We now focus on the proof of Theorem \ref{thm:exp-bins-c}. Cheating a
bit, we will assume $\bar c\geq 64$. We will instead handle $\bar
c<64$ as a small capacity in Section \ref{sec:small-cap} where we
consider any $\bar c=1+\bar\eps=O(1)$.
\drop{ , and instead prove
  Theorem \ref{thm:exp-bins-eps} for any $\bar\eps=O(1)$, noting that
  for $\bar\eps,\bar c=\Theta(1)$, the bounds in both theorems are
  just constant.}  

Also, to increase readability, since the parameter $c\leq \bar c$ does not
appear in the analysis, we will just write $c$ instead of $\bar c$ below.
First we focus on proving Theorem \ref{thm:exp-bins-c} (a) and (b) as restated
in the following lemma.
\begin{lemma}\label{thm:cool-small}
Starting from the hash location of a given $q$, which is either a
passive ball or an active or passive bin, the expected number of
consecutive full bins is $O(1/c)$. If $q$ is a given bin with
capacity at least $2$, the expected number of consecutive full bins
is $O((\log c)/c^2)$.
\end{lemma}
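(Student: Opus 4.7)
My approach is to bound $\E[L]=\sum_{k\ge 1}\Pr[L\ge k]$ by reducing the event $\{L\ge k\}$ to a deviation event about the number of balls hashing to an arc versus the total capacity of bins hashing to that arc, and then controlling the resulting event by a dyadic union bound combined with the fourth moment bound \req{eq:fourth}.

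\textbf{Step 1 (structural reduction).} Let $x$ denote the hash location of $q$, and let $b_1,b_2,\ldots$ be the successive bins clockwise of $x$ (with $b_1=q$ when $q$ is an active bin). Suppose $L\ge k$, so $b_1,\ldots,b_k$ are all full. Applying Lemma~\ref{lem:fulls} to $b_k$ produces a maximal run of consecutive full bins ending at $b_k$; since this run also contains $b_1,\ldots,b_{k-1}$, it starts at some $b_j$ with $j\le 1$, and the preceding non-full bin $b_{j-1}$ sits at or counterclockwise of $x$. Writing $\ell$ for the arc length $b_k - b_{j-1}$, I conclude that on this arc the number of balls is at least the total capacity. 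Thus $\{L\ge k\}$ is contained in the existence of an arc of length $\ell\ge \ell_k^\ast$ (for some $\ell_k^\ast$ at least the distance from $x$ to $b_k$) whose balls dominate its capacity.

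\textbf{Step 2 (dyadic bound).} I would bucket $\ell$ into dyadic scales $\ell\in[2^i/n,\,2^{i+1}/n)$. On an arc of length $\ell$ the ball count $X$ has mean $m\ell$ and the capacity $Y$ has mean $\bar c\, m\ell$, with $X$ and $Y$ independent (balls and bins use independent hash functions). The event $X\ge Y$ requires $(X-Y)$ to deviate from its mean by $(\bar c-1)m\ell=\Omega(\bar c\, m\ell)$. Since the ball indicators are 4-independent and since each bin's capacity is bounded by $O(\bar c\, m/n)$, the fourth moment bound \req{eq:fourth} applied separately to $X$ and to the scaled sum $Y$ gives a per-arc bound polynomial in $1/(\bar c\, m\ell)$ and $(m/n)$. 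Union-bounding over the $O(2^i)$ essentially distinct arcs within each dyadic bucket (the event only changes as the endpoints cross ball or bin hash values, and there are $\Theta(2^i)$ of these in expectation, which can be controlled using a further $\req{eq:d>k}$-style bound), and then summing over scales $i$, yields a bound on $\Pr[L\ge k]$ that decays in $k$.

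\textbf{Step 3 (parts (a) and (b)).} For part (a), plugging the per-$k$ tail into $\sum_k\Pr[L\ge k]$ gives $\E[L]=O(1/c)$. For part (b) the conditioning on $q$ being an active bin with capacity $C_q\ge 2$ must be exploited: the arc produced in Step 1 then always contains $q$, so it always contains at least $C_q\ge 2$ units of capacity at position $x$ (rather than at a generic point). This effectively tightens the required deviation and, combined with reapplying the fourth moment estimate, replaces one factor of $1/c$ by $1/c^2$ in the per-$k$ tail. Summing these $O(1/c^2)$ contributions over the scales where they apply before the fourth moment tail starts decaying as $2^{-\Omega(i)}$ --- which happens at scales of order $(\log c)/n$ --- produces the claimed $O((\log c)/c^2)$ bound.

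\textbf{Main obstacle.} The delicate part is part (b). Two things conspire against us: first, capacities are not $0/1$ indicators, so one must carefully bound the contribution of the capacity variance (which, when the per-bin cap is $\Theta(cm/n)$, is of order $c^2m^2\ell/n$ and can compete with the ball variance); second, the $\log c$ in the bound is not slack, and recovering it requires a tight accounting of dyadic scales rather than a lossy geometric summation. The conditioning on $C_q\ge 2$ is what shaves the extra $1/c$, and making this effect precise --- without weakening the independence assumptions needed for \req{eq:fourth} and \req{eq:d>k} to apply, including under simple tabulation --- is where most of the technical work lies.
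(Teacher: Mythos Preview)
Your proposal has the right overall shape, but there is a genuine gap in the probabilistic engine you plan to use, and your mechanism for part (b) is not the one that actually works.

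\textbf{The missing decomposition.} You bucket only the arc length $\ell$, and then try to bound $\Pr[X\ge Y]$ on each arc. The paper instead does a \emph{double} dyadic decomposition: it parameterizes both the number of full bins $t$ and the interval length $|I|$ separately. For each pair of scales it defines two \emph{independent} events --- a ball event $A(t,\ell^+)$ (at least enough balls to fill $t$ bins land in a fixed interval of length $2\ell^+$ centered at $h(q)$) and a bin event $B(t^+,\ell^-)$ (too few bins land in a fixed half-interval of length $\ell^-/2$). Because balls and bins hash independently, $\Pr[A]\Pr[B]$ bounds each cell. Your ``union bound over $O(2^i)$ essentially distinct arcs'' is both unnecessary and lossy: once $h(q)$ is fixed you can work with centered intervals and there is nothing to union over.

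\textbf{Fourth moment alone is not enough for the ball event.} For the ball event $A$, the paper does \emph{not} use the fourth moment bound \req{eq:fourth} when $t<c$; it uses the combinatorial bound \req{eq:d>k}, which gives $\Pr[A(t,s\ell(t))]=O((s/c)^a)$ for any $a\le 3$ with $a\le x$. The fourth moment bound on $X$ is essentially Markov (i.e.\ $a=1$) in the lightly loaded regime $\mu=O(st m/n)\ll 1$, and this regime is exactly where the action is: with $m/n$ as small as $\Theta(1/c)$ and $t=O(1)$, fourth moment yields only $O(s/c)$, not the $O((s/c)^2)$ or $O((s/c)^3)$ you need. The fourth moment bound \emph{is} used, but only for the bin event $B$ (where the mean is $\Theta(st)\ge 1$) and for $A$ once $t\ge c$ (where the ball mean is also large).

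\textbf{The real mechanism for part (b).} Your explanation --- that having $C_q\ge 2$ ``tightens the required deviation'' --- is not how the extra $1/c$ is won. The point is purely about the exponent $a$ available in \req{eq:d>k}. In the general case, a single full bin may have capacity $1$, so for $t=1$ you can only guarantee $x\ge 1$ ball and must take $a=1$, giving $\Pr[A]=O(1/c)$; for $t\ge 2$ you get $x\ge 2$ and can take $a=2$. When $q$ itself has capacity $\ge 2$, already $t=1$ forces $x\ge 2$ (hence $a=2$, giving $O(1/c^2)$), and $t\ge 2$ forces $x\ge 3$ (hence $a=3$, giving $O(1/c^3)$). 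The $(\log c)$ then appears because with $a=2$ the sum $\sum_{j=0}^{\log c}(2^{j+1}/c)^2/(2^jt)^2$ has $\Theta(\log c)$ equal terms; with $a=3$ it is geometric and the $\log c$ disappears, leaving the $t=1$ contribution $O((\log c)/c^2)$ as the dominant one. None of this comes out of a fourth moment deviation argument on $X-Y$.
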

\begin{proof}
We bound the expected number $d$ of consecutive full bins around $h(q)$. These
should include $q$ if $q$ is a bin, and the bin $q$ would hash to if $q$ is a passive
ball. If this bin is not full, then $d=0$ and we are done. Otherwise, let
$I=(a,b]\ni h(q)$ be the interval covered by these full bins, that is, $a$ is the
location of non-full bin preceding the first full bin, and $b$ is
the location of the last full bin. In our analysis, we assume that $h(q)$
is fixed before the hashing of any other balls and bins. 

We will study the event that $t^-\leq d<t^+$ and 
$\ell^-<|I|\leq \ell^+$. First we note that $d\geq t^-$ and $|I|\leq \ell^+$
imply the event:
\begin{description}
\item[$A(t^-,\ell^+)$:] Enough balls
to fill $t^-$ bins hash to $(h(q)-\ell^+,h(q)+\ell^+)$.
\end{description}
%

Also $d< t^+$ 
implies that at most $t^+-2$ bins land in
$I\setminus\{b\}=(a,b)$. We are discounting position $b$, because 
we could have additional bins hashing to $b$ that are not full. Note that
we could have $h(q)=b$. No matter how $I=(a,b]\ni h(q)$ is placed,
we have either $(a,b)\supseteq [h(q)-\ceil{\ell^-/2},h(q))$, or
$(a,b)\supseteq [h(q),h(q)+\ceil{\ell^-/2})$.
Thus $d< t^+$ and $|I|>\ell^-$ imply the event:
\begin{description}
\item[$B(t^+,\ell^-)$:] Either at most $t^+-2$ bins hash to 
$[h(q)-\ceil{\ell^-/2},h(q))$ or 
at most $t^+-2$ bins hash to $[h(q),h(q)+\ceil{\ell^-/2})$. 
\end{description}
Since balls and bins hash independently, after $h(q)$ has been fixed,
the events $A(t^-,\ell^+)$ and $B(t^+,\ell^-)$ are independent, 
so 
\[\Pr[t^-\leq d<t^+ \wedge \ell^-<|I|\leq \ell^+]\leq 
\Pr[A(t^-,\ell^+)]\Pr[B(t^+,\ell^-)].\]
For $i=0,1,\ldots$, let $t=2^i$, $t^-=t$, and $t^+=2t$. Moreover, define
$\ell(t)=\floor{8tr/\bins}$. Recall here that $r\geq \bins$ is the range we hash balls and bins to
to. We will bound $\Pr[t\leq d<2t]$ as follows.  
\begin{align}
\Pr[t\leq d<2t]\leq &\Pr[t\leq d<2t\wedge\, |I|\leq \ell(t)]\nonumber\\
&+
\sum_{j=0}^{\ceil{\log_2 c}}
\Pr[t\leq d<2t\wedge 2^j \ell(t)<|I|\leq 2^{j+1} \ell(t)]\nonumber\\
&+\Pr[t\leq d<2t\wedge c\ell(t)<|I|]\nonumber\\
\leq &\Pr[A(t,\ell(t))]\label{eq:compute}\\
&+
\sum_{j=0}^{\ceil{\log_2 c}}\Pr[A(t,2^{j+1} \ell(t))]\Pr[B(2t,2^j\ell(t))]\nonumber\\
&+\Pr[B(2t,c\ell(t))]\nonumber.
\end{align}
The main motivation for the definition of $\ell(t)$ is that with
$\ell^-=s\ell(t)$, $s\geq 1$, we can get a good fourth moment bound on
$\Pr[B(2t,\ell^-)]$. 

We consider the case where among all bins different from $q$ (which may be a ball or a bin), at most
$x=t^+-2=2(t-1)$ hash to $[h(q)-\ceil{\ell^-/2},h(q))$. We note that there are at least $\bins-1$ bins that are different from $q$.
We will
pay a factor 2 in probability to cover the equivalent case 
where they hash to $[h(q),h(q)+\ceil{\ell^-/2})$.  The
expected number of bins different from $q$ hashing to
$[h(q)-\ceil{\ell^-/2},h(q))$ is $\mu\geq (\bins-1)\ceil{\ell^-/2}/r$, and we
want this to be at least $2x=4(t-1)$. This is indeed the case with
$\ell^-\geq \ell(t)$ since $\ell(t)=\floor{8tr/\bins}\geq 8(t-1)r/(\bins-1)$.
Since $\bins\geq 2$, we have
\[\mu\geq (\bins-1)\ceil{\ell^-/2}/r\geq \bins s\ell(t)/(4r)\geq st\geq 1.\]
Applying the fourth moment bound \req{eq:fourth}, we now get 
\begin{equation}\label{eq:B4th}
\Pr[B(2t,s\ell(t))]=O((\mu+\mu^2)/(\mu-2(t-1))^4)=O(1/\mu^2)=(1/(st)^2).
\end{equation}
Next, we will develop different bounds for $\Pr[A(t,s\ell(t))]$ that are all
useful in different contexts. The interval $(h(b)-s\ell(t),h(b)+s\ell(t))$
has length $2s\ell(t)-1$, so the expected number of active balls hashing to it is
$\mu<2s\ell(t)n/r=O(st\balls/\bins)$.  However, to satisfy $\Pr[A(t,s\ell(t))]$,
we need enough balls to fill $t$ bins in that interval. 
Their total capacity is $x\geq tc\balls/(2\bins)$.
Suppose we also know that $x\geq a$ and that the balls hash $a$-independently.
Since we only assume 3-independence, $a\leq 3$. However, if $q$ is a ball,
with $h(q)$ fixed, we are left with 2-independence. 
Now, from \req{eq:d>k}, we get the bound
\begin{equation}\label{eq:A2}
\Pr[A(t,s\ell(t))]= O((\mu/x)^a)=O((s/c)^a).
\end{equation}
Thus, from \req{eq:compute}, we get 
\begin{align}
\Pr[t\leq d<2t]
\leq &\Pr[A(t,\ell(t))]+(\sum_{j=0}^{\ceil{\log_2 c}}\Pr[A(t,2^{j+1} \ell(t))]\Pr[B(2t,2^j\ell(t))])+\Pr[B(2t,c\ell(t))]\nonumber\\
=&O(1/c^a+(\sum_{j=0}^{\ceil{\log_2 c}}(2^{j+1}/c)^a(1/(2^jt)^2))+1/(ct)^2\label{eq:t<c}\\
=&\left\{\begin{array}{ll}
O(1/c)&\textnormal{if }a=1\\
O(1/c^2+(\log c)/(ct)^2)&\textnormal{if }a=2\\
O(1/c^3+1/(ct)^2)&\textnormal{if }a=3
\end{array}\right.\nonumber
\end{align}
When we start from the hash of an passive ball or bin, or an active bin whose
capacity might be only one, we can use $a=1$ for
$t=1$ bin and $a=2$ for $t\in[2,c]$. We now get
\begin{align*}
\E[d\cdot[d\leq c]]
&\leq \sum_{t=2^j,j=0,\ldots,\floor{\log_2c}} 2t\Pr[t\leq d<2t]\\
&=O(1/c)
+\sum_{t=2^j,\; j=1,\ldots,\floor{\log_2c}} t\,O(1/c^2+(\log c)/(ct)^2)\\
&=O(1/c).
\end{align*}
Above, for a Boolean expression $B$, we
have $[B]=1$ if $B$ is true, and $[B]=0$ if $B$ is false.

In the case where we start from a bin $q$ with capacity at least $2$, we can use $a=2$ for
$t=1$ bin. After the fixing of $h(q)$, the balls are still hashed 3-independently
and the capacity is at least $3$ with $t\geq 2$ bins, so for $t\in[2,c]$,
we can use $a=3$. Thus, when $q$ is a bin with capacity at least $2$, we get
\begin{align*}
\E[d\cdot[d\leq c]]&=\sum_{t=2^j,j=0,\ldots,\floor{\log_2c}} 2t \Pr[t\leq d<2t]\\
&=O((\log c)/c^2)
+\sum_{t=2^j,\; j=1,\ldots,\floor{\log_2c}} t\,O(1/c^3+1/(ct)^2)\\
&=O((\log c)/c^2).
\end{align*}
For $t\geq c$, we are instead going to use a fourth moment bound on
$\Pr[A(t,\ell(t)]$. To satisfy $A(t,\ell(t))$, we need 
$x\geq tc\balls/(2\bins)$ active balls to hash to $(h(q)-\ell(t),h(q)+\ell(t))$
but the expectation is only $\mu=(2\ell(t)-1)n/r\leq 2\floor{8tr/\bins}n/r\leq 16t\balls/\bins$.
Since $c\geq 64$, we have $x\geq 2\mu$. Moreover, $\mu\geq 16$ since $c\balls/\bins\geq 1$ and
$t\geq c$. Hence, by \req{eq:fourth}, we get
\begin{equation}\label{eq:A4th}
\Pr[A(t,\ell(t))]=O((\mu+\mu^2)/(t-\mu)^4)=O(\mu^2/t^4)=O((t\balls/\bins)^2/((tc\balls/\bins)^4))=
O(1/(tc)^2)
\end{equation}
For $t\geq c$, this replaces the $1/c^a$ term in \req{eq:t<c},
so we get
\begin{align*}
\Pr[t\leq d<2t]
\leq &\Pr[A(t,\ell(t)]+\sum_{j=0}^{\ceil{\log_2 c}}\Pr[A(t,2^{j+1} \ell(t))]\Pr[B(2t,2^j\ell(t))]+\Pr[B(2t,c\ell(t))]\\
=&O(1/(ct)^2)+\sum_{j=0}^{\ceil{\log_2 c}}(2^{j+1}/c)^a(1/(2^jt)^2+1/(ct)^2)\\
=&\left\{\begin{array}{ll}
O(\log c)/(ct)^2)&\textnormal{if }a=2\\
O(1/(ct)^2)&\textnormal{if }a=3
\end{array}\right.
\end{align*}
In the general case where we start from any passive ball, or active or
passive bin, we have $a=2$ for $t\geq 2$, so we get
\begin{align*}
\E[d\cdot[d\geq c]]&=\sum_{t=2^jc,j=0,\ldots,\infty} 2t \Pr[t\leq d<2t]\\
&=\sum_{t=2^jc,\;j=0,\ldots,\infty} 2t\,O((\log c)/(ct)^2)
=O(((\log c)/c^2)\textnormal,
\end{align*}
and then
\[\E[d]=\E[d\cdot[d<c]]+\E[d\cdot[d\geq c]]=O(1/c)+O(((\log c)/c^2)=O(1/c).\]
This completes the proof of the first statement of the theorem.

For the case where $q$ is a bin with capacity at least $2$, we have
$a=3$ for $t\geq 2$, and therefore
\begin{align*}
\E[d\cdot[d\geq c]]&=\sum_{t=2^jc,\;j=0,\ldots,\infty} 2t \Pr[t\leq d<2t]\\
&=O(\sum_{t=2^jc,\; j=0,\ldots,\infty} t\;O(1/(ct)^2)\\
&=O(1/c^2).
\end{align*}
Therefore, when $q$ is a bin with capacity at least $2$,
\[\E[d]=\E[d\cdot[d<c]]+\E[d\cdot[d\geq c]]=O(((\log c)/c^2)+O(1/c^2)=O(((\log c)/c^2).\]
This completes the proof of the theorem.
\end{proof}
We will now prove Theorem \ref{thm:exp-bins-c} (c) restated below.
\begin{lemma}
The expected number of balls hashing directly to any given active bin
$q$ is $O(\balls/\bins)$. The expected number of balls forwarded into $q$ from its
predecessor $q^-$ is $O(\balls/\bins(\log c)/c^2)$. Finally, if a bin
is not active, and its active successor $q^+$ is given an extra
capacity of one, then the expected number of of full bins starting
from $q^+$ is $O((\log c)/c^2)$.
\end{lemma}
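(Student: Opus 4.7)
The plan is to establish the three assertions separately, in each case reducing to machinery already developed. For the first, I would write the expected number of balls hashing directly to $q$ as $\sum_b \Pr[q \text{ is the first active bin clockwise after } h(b)]$; conditioning on $h(q)=y$ and using a symmetric/tail bound argument on bin positions (analogous to \req{eq:d>k} applied to bins rather than balls) shows each such probability is $O(1/\bins)$, giving the total bound $O(\balls/\bins)$.

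For the second assertion, I would first apply Invariant~\ref{inv:forward} together with Lemma~\ref{lem:fulls}: any ball forwarded into $q$ from $q^-$ must hash into the arc $J$ covered by the longest chain of consecutive full bins ending at $q^-$, so the number of forwarded balls is bounded above by the number of balls hashing to $J$. By the symmetry of the hash functions under reflection of the cycle, the distribution of the backward chain length ending at $q^-$ matches that of the forward chain starting at $q^-$; since $q^-$ is an active bin whose capacity is at least $2$ under the assumed regime (large $\bar c$ together with the bin-capacity bounds), Lemma~\ref{thm:cool-small} bounds the expected backward chain length by $O((\log c)/c^2)$. Converting from an expected bin count to an expected arc length introduces a factor of $\Theta(r/\bins)$ per bin gap, giving $\E[|J|]/r=O((\log c)/(\bins c^2))$, and multiplying by the $\balls$ balls yields the target $O((\balls/\bins)(\log c)/c^2)$.

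For the third assertion, I would re-run the proof of Lemma~\ref{thm:cool-small} from $q^+$, treating $q^+$ as an active bin of effective capacity $C+1\geq 2$. The two structural perturbations --- a missing predecessor bin, and an extra unit of capacity at $q^+$ --- enter the events $A(t,\ell)$ and $B(t,\ell)$ only through additive $O(1)$ corrections to the bin count in $B$ and to the ball-count threshold in $A$. For $t\geq 1$ these corrections are dominated by the $\Theta(t)$ bins and $\Theta(tc\balls/\bins)$ balls already involved, so the tail estimates in the $a=3$ case of the proof of Lemma~\ref{thm:cool-small} carry through and give the bound $O((\log c)/c^2)$.

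The main technical obstacle lies in the second assertion, since the arc $J$ and the backward chain length $L$ are coupled through which bins are full (a joint function of ball and bin hashes). To handle this rigorously, I would write $\E[|J|/r]=\sum_{L\geq 1}\E[|\mathrm{gap}_L|/r\cdot\mathbf{1}[L\leq L(q^-)]]$, where $|\mathrm{gap}_L|$ is the length of the $L$-th bin-to-bin gap clockwise ending at $q^-$ and $L(q^-)$ is the backward chain length; conditioning on bin hashes first makes each $|\mathrm{gap}_L|$ deterministic, and the tail estimates from the proof of Lemma~\ref{thm:cool-small} then control $\Pr[L(q^-)\geq L]$. Summing the resulting series reproduces the $O((\log c)/c^2)$ bound on $\E[L(q^-)]$, with the extra $1/\bins$ factor coming from $\E[|\mathrm{gap}_L|]=\Theta(r/\bins)$.
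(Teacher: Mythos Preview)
Your Part~1 is fine and matches the paper's approach.

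Your Part~2 has genuine gaps. First, $q^-$ is the predecessor of the \emph{given} bin $q$, so $q^-$ is itself random; you cannot invoke Lemma~\ref{thm:cool-small} at $q^-$ as though it were a given location. Second, your claim that $q^-$ has capacity at least $2$ is false in general: capacities lie in $[\bar c\balls/(2\bins),\,2\bar c\balls/\bins]$ and only $\bar c\balls/\bins\geq 1$ is assumed, so the minimum capacity may well be $1$. Third, and most seriously, your conversion ``expected balls in $J$ $\approx \balls\,\E[|J|]/r$'' is invalid: the arc $J$ is determined by which bins are full, which depends on the ball hashes, and conditioned on $|J|$ being long the ball density inside $J$ is \emph{higher} than average (that is precisely why those bins are full). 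Your proposed fix of conditioning on the bin hashes does not help, because even with the bins fixed the chain length $L(q^-)$ still depends on the balls, so $\E[|\mathrm{gap}_L|\cdot\mathbf 1\{L\leq L(q^-)\}]$ does not factor as $\E[|\mathrm{gap}_L|]\cdot\Pr[L\leq L(q^-)]$.

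The paper's argument sidesteps all of this. The number of balls forwarded into $q$ that actually land there is at most the capacity of $q$, deterministically $\leq 2\bar c\balls/\bins$. Hence it suffices to bound the \emph{probability} that any ball is forwarded from $q^-$ into $q$. If that happens, then $q^-$ would remain full even with one extra unit of capacity; equivalently, there is a run $d\geq 1$ of full bins around the \emph{fixed} point $h(q)-1$, one of whose bins has effective capacity at least $2$. The capacity-$\geq 2$ branch of the proof of Lemma~\ref{thm:cool-small}, anchored at $h(q)-1$, gives $\Pr[d\geq 1]\leq \E[d]=O((\log c)/c^2)$. Multiplying by the deterministic cap $2\bar c\balls/\bins$ yields $O((\balls/\bins)(\log c)/c)$. (This is what the paper's own proof obtains; the extra factor $1/c$ in the lemma's displayed bound appears to be a slip, and only $O((\balls/\bins)(\log c)/c)$ is used downstream in the bin-closing argument.)

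For Part~3 your idea is basically right but the framing is off. There is no ``missing predecessor bin'': $q$ is passive, so there was never a bin at $h(q)$ to remove. Anchor the analysis at the fixed location $h(q)$, not at the random $q^+$. The run of full bins starting at $q^+$ is contained in the run around $h(q)$; if that run is nonempty it contains $q^+$, which by hypothesis has effective capacity $\geq 2$, so the $a\geq 2$ branch of the proof of Lemma~\ref{thm:cool-small} applies verbatim and gives $\E[d]=O((\log c)/c^2)$.
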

\begin{proof}
For the first statement, we note that the expected number of balls hashing to
each location $h(q)-i$ is $\bins/r$ for any $0 \leq i \leq r$.
These are not added to $q$ if
some bin hashes to $[h(q)-i,h(q))$, which is an independent event because balls and bins
hash independently. The expected number of bins hashing to $[h(q)-i,h(q))$ is
$\mu=i(\bins-1)/r$. For $i\geq r/(\bins-1)$, we have $\mu\geq 1$, and
then, by \req{eq:fourth}, the probability of getting no bins in $[h(q)-i,h(q))$
is $O((\mu+\mu^2)/(\mu-0)^4)=O(1/\mu^2)=O((r/(\bins i))^2)$.
The expected number of balls hashing directly to $q$ is thus bounded by
\[n/r\cdot \left(\floor{r/(\bins-1)}+\sum_{i=\floor{r/(\bins-1)}+1}^\infty (r/(\bins i))^2\right)
=O(\balls/\bins).\] We also have to consider the probability that the preceding
bin $q^-$ forwarded balls to $q$. For this to happen, we would need
$q^-$ to be filled even if we increased its capacity by $1$, and then
$q^-$ would have capacity at least $2$. This is bounded by the
probability of having an interval $I\ni h(q)-1$ with $d\geq 1$
consecutive full bins including one with capacity at least $2$.  This
is what we analyzed in the proof of Lemma \ref{thm:cool-small}, so we
get $\Pr[d\geq 1]\leq \E[d]=O((\log c/c^2)$.  By the capacity
constraint, the maximal number of balls that can be forwarded to and
end in $q$ is $2c\balls/\bins$, so the expected number is 
\[O((\log c/c^2)2c\balls/\bins=O((\balls/\bins)(\log c)/c).\]
Next we ask for the expected number $d$ of full bins
starting from the active successor $q^+$ of a 
given passive bin $q$, when $q^+$ is given an extra
capacity of one. Again this implies that $q^+$ has capacity at least $2$,
and then the analysis from the proof of Lemma \ref{thm:cool-small}
implies that $\E[d]=O((\log c)/c^2)$.
\end{proof}

\section{Small capacities}\label{sec:small-cap}
In this section we will prove Theorem \ref{thm:exp-bins-eps} restated
below with $\eps$ instead of $\bar\eps$ and allowing any positive $\eps=O(1)$ 
instead
of just $\eps\in (0,1]$.
\begin{lemma}\label{lem:exp-bins-eps}
Consider a configuration with $\bins$ active balls and active $\bins$ bins 
and total capacity $(1+\eps)\balls$ for some $\eps=O(1)$. 
Suppose, moreover, that
each bin has capacity between $(1+\eps)n/(2\bins)$ and $2(1+\eps)\balls/\bins$. Then
starting from the hash location of a given passive ball or active or passive bin,
the expected number of consecutive full bins is $O(1/\eps^2)$. 

Here we assume that balls and bins are hashed
independently, each using 5-independent hash functions or simple
tabulation hashing.
\end{lemma}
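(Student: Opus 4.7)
}
The plan is to mirror the proof of Lemma \ref{thm:cool-small}. Fix $h(q)$, let $d$ denote the number of consecutive full bins covering $h(q)$, and let $I=(a,b]\ni h(q)$ be the covered interval. Define the events $A(t^-,\ell^+)$ (enough balls in the length-$2\ell^+$ window around $h(q)$ to fill $t^-$ bins of minimum total capacity) and $B(t^+,\ell^-)$ (at most $t^+-2$ bins hash to one of the two half-windows of length $\lceil\ell^-/2\rceil$) exactly as there. Since balls and bins are hashed independently, conditioning on $h(q)$ yields
\[
\Pr[t^-\leq d<t^+\wedge \ell^-<|I|\leq \ell^+]\leq \Pr[A(t^-,\ell^+)]\,\Pr[B(t^+,\ell^-)],
\]
and a dyadic decomposition over $t=2^i$ and length scales $s=2^j$ then controls $\Pr[t\leq d<2t]$.

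The central change from the large-$c$ analysis is that with $c=1+\eps=O(1)$ the multiplicative-gap bound $\Pr[A(t,s\ell(t))]=O((s/c)^a)$ is useless once $s\geq 1$, since the ratio $\mu_A/x$ is no longer small. I would instead control $A$ via the fourth moment inequality \req{eq:fourth}, choosing $\ell(t)=\Theta(tr/n)$ with the constant tuned so that the expected ball count $\mu_A=\Theta(s\,tm/n)$ in a length-$2s\ell(t)$ window falls short of the required $x\geq t(1+\eps)m/(2n)$ by an additive $\Omega(\eps tm/n)$. The fourth moment inequality then gives a bound on $\Pr[A(t,s\ell(t))]$ that decays polynomially in $1/(\eps t)$, which combines with the unchanged estimate $\Pr[B(2t,s\ell(t))]=O(1/(st)^2)$ from Lemma \ref{thm:cool-small} to yield, after summing over the relevant length scales, a bound $\Pr[t\leq d<2t]$ that is polynomially small in $\eps^2 t$.

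Finally I would write $\E[d]=\sum_{t\geq 1}\Pr[d\geq t]$ and split at the threshold $T=\Theta(1/\eps^2)$. For $t\leq T$ the trivial bound $\Pr[d\geq t]\leq 1$ already contributes $O(1/\eps^2)$, and for $t>T$ the combined fourth-moment estimates on $A$ and $B$ give a tail summing to $O(1/\eps^2)$ as well. The main obstacle is that all of the $1/\eps^2$ slack must now be extracted from additive deviation estimates on the balls rather than from a convenient multiplicative slack in a Markov-style bound: the window scale $\ell(t)$ has to be chosen finely enough that the deviation $x-\mu_A$ has the correct $\Theta(\eps)$-relative magnitude, and the $A$- and $B$-bounds must be paired carefully across all dyadic length scales $s=2^j$ up to $j=O(\log(1/\eps))$, replacing the clean multiplicative telescoping of the large-$c$ proof by a more delicate additive one.
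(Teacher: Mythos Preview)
Your plan has a genuine gap at the intermediate length scales, and it is exactly the place where the large-$c$ proof structure breaks down.

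Recall what drives the proof of Lemma~\ref{thm:cool-small}: the ball event $A(t,s\ell(t))$ asks that at least $x=t\cdot\frac{c\balls}{2\bins}$ balls hash to a window of length $2s\ell(t)$, while the bin event $B(2t,s\ell(t))$ asks that at most $2t-2$ bins hash to a half-window of length about $s\ell(t)/2$. With $\ell(t)=\Theta(tr/\bins)$, the bin event is a genuine lower-tail deviation only once $s\gtrsim 4/\gamma$ for whatever constant $\gamma$ you put in $\ell(t)=\gamma tr/\bins$; below that, the expected bin count in the half-window is at most $2t$ and $\Pr[B]\approx 1$. Conversely, the ball event is an upper-tail deviation only when the expected ball count $\mu_A=2s\ell(t)\balls/r=2s\gamma t\balls/\bins$ is below $x\approx t\balls/(2\bins)$, i.e.\ only for $s\lesssim 1/(4\gamma)$; once $s$ crosses this, $\mu_A\geq x$ and $\Pr[A]\approx 1$. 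No choice of the constant $\gamma$ closes the resulting gap: for any $\gamma$, there is a nonempty range of dyadic $s$ (a factor~$\Theta(1)$ wide) in which \emph{both} $\Pr[A(t,2^{j+1}\ell(t))]$ and $\Pr[B(2t,2^{j}\ell(t))]$ are $\Theta(1)$, so the corresponding summand in your decomposition is $\Theta(1)$ and the bound on $\Pr[t\le d<2t]$ collapses to the trivial one. Your remark that the shortfall $x-\mu_A$ is $\Omega(\eps t\balls/\bins)$ across a range of scales is precisely what fails: the shortfall passes through zero as $s$ increases by a constant factor, so only $O(1)$ dyadic scales can carry an $\eps$-relative deviation for the balls.

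The reason the large-$c$ argument works is that there the threshold $x$ carries an extra factor $c$, which buys a range of scales $s\in[1,c]$ over which $A$ remains a genuine deviation; when $c=1+\eps$, that range of scales shrinks to nothing. The underlying point is that with $c=O(1)$ you can no longer decouple balls from bins using absolute windows: the only robust statement is that in the \emph{actual} interval $I$ covered by the run, either $X_I$ exceeds its mean $\balls|I|/r$ by a $(1+\delta)$ factor or the weighted bin count $Y_I$ falls below its mean $\bins|I|/r$ by a $(1-\delta)$ factor (Lemma~\ref{lem:search-conseq}). This is an existential deviation over all intervals $I\ni h(q)$ of length at least $\ell=\Theta(dr/\bins)$, not a deviation in a single fixed window, and a naive union bound over such intervals loses extra powers of $1/\eps$. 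The paper therefore invokes the interval–uniform fourth-moment bound \req{eq:fourth-lp} from \cite{patrascu12charhash}, which shows that $\Pr[\cD_{\ell,\delta,p}]=O((\alpha\ell+(\alpha\ell)^2)/(\delta\alpha\ell)^4)$ with no additional loss over the single-interval estimate; applying it once to balls and once to weighted bins gives $\Pr[d\text{ full bins}]=O(1/(\eps^4 d^2))$ and hence $\E[d]=O(1/\eps^2)$. It is this interval-union tool, not a recalibration of $\ell(t)$, that replaces the multiplicative $c$-slack you are trying to reproduce.
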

We shall sometimes use a weighted count for number $Y_I$ of bins
hashing to an interval $I$, where the weight of a bin is its capacity
divided by the average capacity $(1+\eps)\balls/\bins$. These weights are
between $1/2$ and $2$.  The total capacity of the bins hashing into
$I$ is precisely $Y_I(1+\eps)\balls/\bins$.

Below we choose $\delta$ such that $(1+\delta)/(1-\delta)=(1+\eps)$.
For $\eps\leq 1$, we have $\delta\geq \eps/3$, but we also have for
any $\eps=O(1)$ that $\delta=\Omega(\eps)$. For $d\geq 6$, our goal is to show that
the probability of getting $d$ consecutive full bins is 
\begin{equation}
O(1/d^2\delta^4)=O(1/d^2\eps^4).
\end{equation}
It will then follow that the expected number of full bins is $O(1/\eps^2)$,
as required for Lemma \ref{lem:exp-bins-eps}.

Before doing the probabilistic analysis, we consider the combinatorial consequences
of having $d$ consecutive full bins.
\begin{lemma}\label{lem:search-conseq}  Let $p$ be the hash location
of an open or closed bin or an inactive ball from which
the number of successive full bins is $d\geq 2$. If we
do not have $d$ balls hashing directly to $p$, 
there there is an interval
$I$ containing $p$ of length at least $\ell=\floor{dr/(2\bins)}$, where either 
\begin{itemize} 
\item [(i)] the number balls $X_I$ landing in $I$
is $X_I\geq (1+\delta) \balls |I|/r$, or
\item[(ii)] the weighted number of bins $Y_I$ in $I$ is
$Y_I\leq (1-\delta) \bins |I|/r$.
\end{itemize}
\end{lemma}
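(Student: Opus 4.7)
The plan is to pin down a canonical interval $I_0$ whose bins are all full, and either use $I_0$ itself via a dichotomy or (when $I_0$ is too short) extend it to length exactly $\ell$ and lean on the raw total-capacity lower bound. The algebraic identity $(1+\delta)/(1-\delta)=1+\eps$ is the hinge that turns ``every bin in $I_0$ is full'' into either the ball-surplus condition~(i) or the bin-deficit condition~(ii).

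Let $b_d$ be the position of the $d$th successive full bin clockwise from $p$, let $b_{-k},\ldots,b_{-1},b_1,\ldots,b_d$ be the maximal run of consecutive full bins ending at $b_d$, and let $a$ be the position of the non-full bin immediately counterclockwise of $b_{-k}$; such an $a$ exists because the total capacity $(1+\eps)\balls$ strictly exceeds the total load $\balls$. Define $I_0=(a,b_d]$, so that every bin in $I_0$ is full. Since the first bin clockwise from $p$ already belongs to this run (as $d\geq 2$), $a$ lies strictly before $p$ and hence $p\in I_0$. By Invariant~\ref{inv:forward}, no ball hashing at or before $a$ can pass the non-full bin at $a$ and land in $I_0$; so every ball occupying a bin of $I_0$ must hash into $I_0$, and
\[X_{I_0}\;\geq\;\sum_{b\in I_0}C_b\;=\;Y_{I_0}(1+\eps)\balls/\bins\;\geq\;d(1+\eps)\balls/(2\bins),\]
using the per-bin capacity lower bound on the $d$ bins $b_1,\ldots,b_d$.

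Now split into two cases. If $|I_0|\geq \ell$, set $I=I_0$; either $Y_I\leq(1-\delta)\bins|I|/r$, which is~(ii), or $Y_I>(1-\delta)\bins|I|/r$ and then
\[X_I\;\geq\;Y_I(1+\eps)\balls/\bins\;>\;(1-\delta)(1+\eps)\,\balls|I|/r\;=\;(1+\delta)\balls|I|/r,\]
which is~(i). Otherwise $|I_0|<\ell$; extend $I_0$ to any interval $I\supseteq I_0$ of length exactly $\ell$ still containing $p$. Then $X_I\geq X_{I_0}\geq d(1+\eps)\balls/(2\bins)$, and combining $|I|=\ell\leq dr/(2\bins)$ with $1+\delta\leq 1+\eps$ yields $(1+\delta)\balls|I|/r\leq d(1+\eps)\balls/(2\bins)\leq X_I$, so~(i) again holds.

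The main thing to be careful about is the setup of $I_0$: checking under the various tie-breaking conventions that $p$ really lies in $I_0$ and that every bin in $(a,b_d]$ is genuinely full, so that the lower bound on $X_{I_0}$ by total capacity is valid. The hypothesis excluding $d$ balls hashing directly to $p$ is not strictly needed in the combinatorial argument above; it is stated because when this lemma is later applied probabilistically in the proof of Lemma~\ref{lem:exp-bins-eps}, the event ``$\geq d$ balls hash exactly to $p$'' is disposed of as a separate tail bound, and the witnessing interval $I$ here is meant to reflect a genuine spatial deviation across an interval of positive length rather than a point mass concentrated at $p$.
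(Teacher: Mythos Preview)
Your approach mirrors the paper's: take the interval of full bins around $p$, case-split on its length against $\ell$, and in the long case run the dichotomy via $(1+\delta)/(1-\delta)=1+\eps$. The short-interval case (your Case~2) is fine and matches the paper exactly.

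The gap is in Case~1. You take $I_0=(a,b_d]$ closed at $b_d$ and assert that every bin in $I_0$ is full, so that $X_{I_0}\geq \sum_{b\in I_0}C_b=Y_{I_0}(1+\eps)\balls/\bins$. But $Y_{I_0}$ counts \emph{all} bins hashing to positions in $(a,b_d]$, and there can be a non-full bin sitting at the very position $b_d$ with a higher ID than the $d$th full bin. (Recall that bins colliding at a location are filled bottom-up by ID; the $d$th full bin may be followed, at the same position, by a higher-ID bin that is not full.) When that happens your key inequality fails, $X_{I_0}<Y_{I_0}(1+\eps)\balls/\bins$, and the dichotomy breaks: neither (i) nor (ii) need hold for $I_0$.

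The paper handles precisely this by passing to the \emph{open} interval $I^-=(a,b)$ (with $b$ the position of the last full bin in the maximal run), so that position $b$ and any non-full bins there are excluded from $Y_{I^-}$. That in turn forces one to check $p\in(a,b)$, i.e.\ $p<b$, and this is exactly where the hypothesis is used: if fewer than $d$ bins hash to $p$, the $d$ successive full bins cannot all sit at position $p$, hence $b>p$. (Note that the lemma statement's ``$d$ balls hashing directly to $p$'' is evidently a slip for ``$d$ bins''; the paper's own proof and the subsequent probabilistic application both speak of bins.) Your closing remark that the hypothesis ``is not strictly needed in the combinatorial argument'' is therefore incorrect---it is precisely what rescues the endpoint issue you yourself flagged as delicate but did not resolve.
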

\begin{proof} Let $I=(a,b]$ be the longest interval covered by consecutive full
bins around $p$. More precisely, $a$ is the hash location of the
non-full bin preceding the first full bin in the interval. We note
here that the first full bins hash strictly after $a$ because bins
hashing to the same location always get filled bottom-up. Since the preceding bin at $a$ is not full, bins in $I$ can only
be filled with balls hashing to $I$.

Since we do not have $d$ bins hashing to $p$, and we had $d$ full bins
starting from $p$, we must have $p<b$, so $p\in (a,b)$. Also, we have
at least $d$ full bins in $I$, so $I$ must contain at least
$d(1+\eps)n/(2\bins)$ balls.  Suppose $|I|\leq \ell$.  We can then
expand $I$ in either end to an interval $I^+\supseteq I$ of length
$\ell=\floor{dr/(2\bins)}$, and then $X_{I^+}\geq X_{I}\geq
(1+\eps)dn/(2\bins)\geq (1+\eps)n|I^+|/r> (1+\delta)n|I^+|/r$, so (i) is
satisfied for $I^+$.  Thus we may assume that $|I|\geq\ell+1$.

We now look at the interval $I^-=(a,b)$ which contains $p$ and is of length
at least $\ell$. All bins in $I^-$ are full.  Even though our last full bin
hashed to $b$, we have to exclude $b$ because  we might also have non-full bins
hashing to $b$.  

Since all bins hashing to $I^-$ are filled with balls hashing
to $I^-$, we have $X_{I^-}\geq (1+\eps) Y_{I^-} \balls/\bins$. We now reach a contradiction if (ii) and (i) are
false for $I^-$, for then 
\[X_{I^-}< (1+\delta) \balls |{I^-}|/r= (1+\eps)(1-\delta) \balls |{I^-}|/r<
(1+\eps) Y_{I^-} \balls/\bins.\]
\end{proof}

We note that applying  \req{eq:d>k} with $a=2$, we get that the probability
of $d$ bins hashing to $p$ is $O(1/d^2)=O(1/d^2\delta^4)$. Hence it
suffices to bound the probability of (i) and (ii).  We will do this
using a technical result from \cite{patrascu12charhash}.

To state the result from \cite{patrascu12charhash}, we first reconsider
the standard fourth moment bound \req{eq:fourth}. We 
are hashing $\balls$ balls into $[r]$. Let $\alpha=n/r$ be the density of the 
balls in $r$. Let
$I$ be an interval that may depend on the hash location of the inactive
query ball, and let $X_I$ be the number of active balls hashing to
$I$. Then $\E[X_I]=\alpha|I|$, and hence we can state \req{eq:fourth}
as 
\begin{equation}\label{eq:fourth-PT}
\Pr[|X_I-\alpha|I||\geq\delta\alpha|I|]=O\left(\frac{\alpha|I|+(\alpha|I|)^2}{(\delta\alpha|I|)^4}\right)
\end{equation}
Now \cite{patrascu12charhash} considered the general
event  $\cD_{\ell,\delta,p}$ that, for a given $p\in [r]$ 
that may depend on the hash value of an inactive query ball, 
there exists an interval $I$ containing $p$ of length at least $\ell$
such that the number of keys $X_I$ in $I$ deviates from the
mean $\alpha|I|$ by at least $\delta\alpha|I|$. As a perfect generalization of \req{eq:fourth}, \cite{patrascu12charhash} proved 
\begin{equation}\label{eq:fourth-lp}
\Pr[\cD_{\ell,\delta,p}]=
O\left(\frac{\alpha\ell+(\alpha\ell)^2}{(\delta\alpha\ell)^4}\right)
\end{equation}
The proof of \req{eq:fourth-lp} in  \cite{patrascu12charhash} only
assumes \req{eq:fourth-PT}, even though \req{eq:fourth-PT} only considers
a single interval whereas \req{eq:fourth-lp} covers all intervals around
a given point of some minimal length. 

The bound \req{eq:fourth-lp} immediately covers the probability that
there exists an interval $I\ni p$ of length at least $\ell$ which
satisfies (i).  As stated earlier, we may assume that $d\geq 6$, and
since $r\geq \bins$, we get that $\ell=\floor{dr/(2\bins)}\geq
dr/(3\bins)$. Moreover, the total capacity is $(1+\eps)\balls \leq O(\balls)$, and
the minimum bin capacity is $1$, so $\bins\leq O(\balls)$, and therefore
$\alpha\ell\geq dn/(3\bins)=\Omega(1)$. Thus the probability of (i)
is bounded by 
\[O\left((\alpha\ell+(\alpha\ell)^2)/(\delta\alpha\ell)^4\right)
=O(1/(\delta^4d^2))=O(1/(\delta^4d^2)).\]
The analysis above could be made tighter if $\balls \gg \bins$, but then the error probability
would just be dominated by that for (ii) on bins studied in the next part.

We now want to limit the probability of getting too few bins, that
is, for some $I\ni p$ containing $p$ of length at least $\ell\geq dr/(3\bins)$, 
\begin{itemize}
\item[(ii)] the weighted number of bins in $I$ is
$Y_I\leq (1-\delta) \bins |I|/r$ 
\end{itemize}
It makes no difference to \req{eq:fourth-lp} from \cite{patrascu12charhash}
that we apply it to bins instead of balls. We note 
the bin counts are weighed with weights below $2$. This is not
an issue because weights where considered in \cite{patrascu12charhash}, and the equations 
\req{eq:fourth} and  \req{eq:fourth-lp} both hold for weights bounded by
a constant. Another technical issue is if we in
Lemma \ref{lem:exp-bins-eps} start from an active query bin at
$p$. Then this bin is always included in the interval $I\ni p$.  It
should only have been included with probability $|I|/r$. A simple
solution is to only count the start bib with this probability, yielding a slightly
smaller value $Y'_I\leq Y_I$. We then instead bound the probability
that $Y'_I\leq (1-\delta) \bins |I|/r$, which is implied by (ii). Now the bin density is exactly $\alpha=m/r$
and $\alpha\ell\geq (\bins/r)dr/(3\bins)\geq d/3\geq 2$,
so the probability of (ii) is bounded by
\[O\left((\alpha\ell+(\alpha\ell)^2)/(\delta\alpha\ell)^4\right)
=O(d^2/(\delta d)^4)=O(1/(\delta^4d^2).\]
We have thus shown that $O(1/(\delta^4d^2))$ bounds the probability of both (i)
and (ii) in Lemma \ref{lem:search-conseq}, hence the
probability of getting $d\geq 1/\delta^2$ successive full bins.
It follows that the expected number of successive bins is
bounded by 
\[1/\delta^2+\sum_{d=\ceil{1/\delta^2}}^\infty O(1/(\delta^4d^2))
=O(1/\delta^2)=O(1/\eps^2)\]
This completes the proof of Lemma \ref{lem:exp-bins-eps}, and hence of
Theorem \ref{thm:exp-bins-eps}.

\section{Computing moves when the system is updated}\label{subsec:computemoves}
In this paper, we have not been concerned about computing the moves
that have to be done. In many applications this is a non-issue since
we can afford to recompute the configuration from scratch before and
after each update. These are also the applications where history
independence matters.  Our more efficient computation of moves is
based on a global simulation of the system in RAM. Our implementation
is tuned to yield the best theoretical bounds. 

We will now first describe an efficient implementation when balls are just
inserted in historical order, much like in the standard implementation
of linear probing. We will describe how to compute the moves
associated with an update in expected time proportional to the bounds
we gave above for the expected number of moves.

Given a ball $q$, we want to find the bin it hashes to in
expected constant time. This is the bin succeeding $q$ clockwise when
bins and balls are hashed to the cycle. For that we use an array
$B$ of size $t=\Theta(\bins)$. Entry $i$ is associated with the interval
$I_i=[i/t,(i+1)/t)$. In $B[i]$, we store the head of a list with the
  bins hashing to $I_i$ in sorted order. The list has expected
  constant size. When a ball $q$ comes, we compute the interval
  $I_i\ni h(q)$, and check $B[i]$ for its succeeding bin. If $L[i]$
  has no bin hashing after $q$, we go to $L_{i+1}$.  The expected time is
$O(1)$. As bins are
  inserted and deleted, we use standard doubling/halving techniques in the
back ground to make sure that $L$ always have $\Theta(\bins)$ entries.

For each bin, we store the balls landing in it in a hash table. Recall
here that when a ball is inserted, we first find the bin it hashes to,
but if it is full, we have to place it in the first non-full
succeeding bin. Such insertions are trivially implemented in time
proportional to the number of bins considered, which is exactly what
we bounded when we considered the number of moves.

We now turn to deletions. A deletion is essentially like a deletion in linear
probing. When we take out a ball $q$ from a bin $b$, we try to refill
the hole by looking at succeeding bins $b'$ for a ball $q'$ that hash
to or before $b$. We then move $q'$ to $b$, and recurse to fill the
hole left by $q'$ in $b'$.  The last hole created is the bin where $q$ would
land if $q$ was inserted last in the current configuration. We are
willing to pay time proportional to number of bins from the one $q$
hashed to, and to the one it would land in if inserted last.

To support efficient deletions, we let each bin $b$ have a forwarding
count for the number of balls
that hashed preceding $b$ but landed in a bin succeeding $b$. Also,
we divide the balls landing in $b$ according to which bins they originally
hashed to. More precisely, we maintain
a doubly-linked list with the bins that balls landing in $b$ hashed to, 
sorted in the same order as the bins appear on the cycle. With 
each of these preceding bins $b^-$, we maintain a list with the balls that
hashed to $b^-$ and landed in $b$. The total space for these lists is 
proportional to the number of balls landing in $b$, so the total space
remains linear. We assume that we can access the sorted bin list
from each end.

Now, if a deletion of a ball $q$ creates a hole in $b$, then this hole
can be filled if and only if the forwarding counter of $b$ is
non-zero. If so, we consider the succeeding bins $b'$ one by one. In
$b'$, we go to the beginning of the sorted bin list to find the first bin $b^-$ that a
ball landing in $b'$ hashed to. If $b^-$ equals or precedes $b$, we
can use any ball $q'$ from the list of $b^-$ in $b'$ and fill the hole
in $b$. Checking the bin $b'$ and possibly moving a ball takes
constant time. If a hole is created in $b'$ and $b'$ has non-zero
forwarding counter, we recurse.

Another issue is that we have is to locate a ball to be deleted in one of the
above lists. For that we employ an independent hash table for the
current balls point to their location. The hash table could be
implemented with linear probing which works in expected constant time
using the same 5-independent or simple tabulation hash function that
we used to map balls to the cycle
\cite{pagh07linprobe,patrascu12charhash}. Below it is assumed that we
always update the hash table with the location of the balls as we move
them around. An alternative solution would be that we searched the
ball starting from the bin it hashed to, and then only used a local
hash table for each bin.

When a ball $q$ is inserted, we have to update the above information.
If it hashes to a bin $b$ and lands in a bin $b'$, then we
have to increment the forwarding counter for all bins starting from
$b$ and going to the bin preceding $b'$---if $b=b'$, no forwarding
counter is incremented. Now inside $b'$, we have to go backwards in
the list of bins that balls landing in $b'$ hashed to, searching for $b$,
inserting $b$ if necessary. Next we add $q$ to the list of $b$ in $b'$. 
The bins considered inside $b'$ is 
a subset of the bins we passed from $b$ to $b'$, so the time bound is
not increased.

When bins are inserted and deleted, we implement the effect using insertion
and deletion of the affected balls as described above.

The most interesting challenge is that when a ball is inserted or
deleted, we have to change the capacity of $c$ bins. This
becomes hard when $c=\omega(1)$ where we only have $O(1)$
expected time available. 

As the system is described above, we let
the  lowest $\ceil{c\balls}-\bins\floor{c\balls/\bins}$ bins have large capacity
$\ceil{c\balls/\bins}$ while the rest have small capacity $\floor{c\balls/\bins}$. However,
we only have to guarantee that no bin has capacity above $\ceil{c\balls/\bins}$.

We now relax things a bit. With no history independence, we just maintain
the list of current bins in the order they were inserted (if one is deleted
and reinserted, we count it as new). The large bins form a prefix of this
list. We also relax the requirement
on the number of large bins to be $\ceil{c\balls}-\bins\floor{c\balls/\bins}\pm c$.
This means that when a ball is inserted or deleted, it doesn't have
to be exactly $c$ bins whose capacity we change.
Next we partition the list of current bins into groups of length at most $c$
with the property that the combined length of any two consecutive groups is
at least $c$. This partition is easily maintained in constant time
per bin update, including pointers so that we can from each bin can find the 
head of the group it belongs to. 

The changes to bin capacities are now done for one or two groups at the
time, telling only the group heads what the capacity is. Thus,
to check if a bin is full, we have to ask the group head about 
the capacity.

Now, when we change the capacity of a group, we find out if the
change in capacity means that some bins need changes to their
information. A bin requiring action is called critical. More precisely, a
large bin is critical if it is full. A small bin is critical if it is
full and its forwarding count is non-zero.  This implies that it would
be full if it became large.  By Lemma \ref{thm:cool-small}, a bin is
critical with probability $O((\log c)/c^2)$, so we only
expect $O((\log c)/c)=o(1)$ critical bins in a group. To identify critical
bins efficiently, we divide groups into subgroups of size at most $\sqrt c$,
using the same algorithm as we used for dividing into groups. For each
group and for each subgroup, we count the number of critical bins. Now
a group only has a critical bin with probability $O((\log c)/c)$, which
is our expected cost to check if there are critical bins. If
so, we check which subgroups have a positive count of critical bins.
If a subgroup has critical bins, we find them by checking all the bins
in the subgroup. Altogether, we pay $O(\sqrt c)$ time per critical
bin, so the expected time to identify the critical bins is
$O(\sqrt c (\log c)/c^2))=o(1)$.

We next consider an implementation with history independence. For
history independence, we first do a random permutation of the bin and
ball IDs. We can use the classic $\pi(x)=ax+b \bmod p$ where $p$ is a
random prime and $a$ and $b$ are uniformly random in $\mathbb Z_p$.
These permuted IDs are still history independent. The point now is
that if we have a set $X$ of $\Theta(k)$ permuted IDs, then we can
maintain order in this set by bucketing based on
$\floor{\pi(x)/k}$. The buckets are then in relative sorted order, and
we can easily maintain order within each bucket since each ID is
expected to end in bucket with $O(1)$ other IDs. As usual, we can use
doubling/halving if the set $X$ grows or decreases by a constant factor.  Now it is
easy to maintain an ordered list of permuted bin IDs where a prefix of
these are large bins, just like in our original description. Also, for
each bin $b$, we maintain the list of balls that hash to it, ordered
based on the permuted IDs.  We note that the balls in this list are
extracted based on the hashing to the unit cycle which is independent of
the permutation of the ball IDs. The balls hashing to $b$ are placed in $b$ and
succeeding bins based on the ordering of this list, that is, first
we fill $b$ with the balls with the lowest permuted IDs. In particular,
the list of balls hashing to $b$ but landing in bin $b'$ is just a segment
of the sorted list of balls hashing to $b$.


\section{High Probability Bounds}\label{sec:whp}
We are now going to present high probability bounds, that is, bounds
that holds with probability $1-1/\bins^\gamma$ for any desired
constant $\gamma$. The analysis assumes fully random hashing.
The 5-independent hashing that sufficed for our expected bounds
does not suffice for high probability. However, some of our high probability bounds
can also be obtained with simple tabulation \cite{patrascu12charhash}.

\begin{theorem}\label{thm:whp}
With balancing parameter $c=(1+\eps)=O(1)$, w.h.p., the maximal number
of bins considered in connection with a search is 
$O((\log n)/\eps^2)$.
This bound also holds for the number of balls that have to be moved when a ball is
added or removed. When a bin is added or removed, w.h.p.,
the number of balls to move is 
$O((\balls/\bins)(\log n)/\eps^2)$.

With balancing parameter $c=\omega(1)$, w.h.p., the maximal number
of bins considered in connection with a search is 
$1+O\left((\log \bins)/c\right)$ if  $\balls>\bins/2$, and 
$1+O\left(\frac{\log \bins}{(c\balls/\bins)\log(\bins/\balls)}\right)$
if  $\balls\leq \bins/2$.
This bound also holds for the number of balls that have to be moved when a ball is
added or removed. When a bin is added or removed, w.h.p.,
the number of balls to move is 
and $O((\balls/\bins)\log n)$ if $\balls>\bins/2$, and 
$O\left(\frac{\log \bins}{\log(\bins/\balls)}\right)$ if $\balls\leq \bins/2$.
\end{theorem}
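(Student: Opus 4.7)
The plan is to extend the expected-value analysis of Sections \ref{sec:highlevel} and \ref{sec:small-cap} to tail bounds, exploiting the fact that with fully random hashing the fourth-moment inequalities \req{eq:fourth} and \req{eq:fourth-PT} can be upgraded to multiplicative Chernoff bounds. As in the earlier proofs, the central quantity is the number $d$ of consecutive full bins starting from a given hash location: every claim in Theorem \ref{thm:whp} reduces, via the case analysis used for Theorems \ref{thm:moves-c} and \ref{thm:moves-eps}, to a high-probability upper bound on $d$, together with, in the bin-update case, a tail bound on the number of balls hashing directly to the inserted or removed bin.

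To bound $d$ with high probability, I would reuse the combinatorial reduction of Lemma \ref{lem:search-conseq}: if $d$ full bins follow a point $p$, then some interval $I \ni p$ of length $\ell = \Theta(dr/n)$ either contains too many balls or too few weighted bins. Under full independence both events are controlled by multiplicative Chernoff. In the small-$\eps$ regime the deviation factor is $1+\delta$ with $\delta = \Theta(\eps)$ and expectation $\mu = \Theta(d)$, so Chernoff gives $\exp(-\Omega(\eps^2 d))$ and hence $d = O((\log n)/\eps^2)$ w.h.p. For large $c$ with $m = \Theta(n)$, the required ball count in $I$ is $t = \Theta(cd)$ against an expectation $\mu = \Theta(d)$, so the Chernoff tail $(e\mu/t)^t = (e/c)^{\Theta(cd)}$ yields $d = O((\log n)/c)$. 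The union bound over interval lengths around $p$ is handled, as in \req{eq:fourth-lp}, by a dyadic decomposition; the union bound over the $O(n)$ relevant starting points costs only an extra $\log n$ factor absorbed into the constant $\gamma$.

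The main technical obstacle is the large-$c$, $m \leq n/2$ regime, where the bin capacities $\lceil cm/n \rceil$ may sit at the enforced floor of $1$ and the Chernoff tail changes form. When $cm/n \geq 1$ the above analysis gives $d = O(\log n / ((cm/n)\log c))$, which is dominated by the claimed $O(\log n / ((cm/n)\log(n/m)))$ in the interesting range $c \geq n/m$. When $cm/n < 1$ every bin has capacity exactly $1$, so $d$ full bins require $d$ balls in an interval of expectation $\mu = dm/(2n)$, and the Chernoff tail $(em/(2n))^d$ gives $d = O(\log n / \log(n/m))$. The delicate step is to write both subcases uniformly in terms of the exponent $t\log(t/(e\mu))$ from $(e\mu/t)^t$ so that the claimed $\log(n/m)$ factor emerges cleanly at the regime boundary and matches the theorem's uniform statement.

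The ball-update and bin-update bounds then follow from the $d$-bound by the accounting of Section \ref{sec:highlevel}: ball insertion and deletion drive at most one cascade of length $O(d)$ plus $O(c)$ unit-capacity changes, whose own cascade lengths concentrate by exactly the same analysis applied to intermediate configurations. For a bin update, we additionally need a tail bound on the number of balls hashing directly to the bin; this concentrates via Chernoff around its mean $\Theta(m/n)$ and, in the light-load regime $m \leq n/2$, yields the claimed $O(\log n / \log(n/m))$ maximum direct-ball count by the same Poisson-tail computation used for the $cm/n<1$ case above.
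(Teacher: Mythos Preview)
Your proposal handles the small-$c$ regime and the basic run-length bound for large $c$ correctly, in essentially the same way the paper does. The genuine gap is in the large-$c$ ball-update case, where you write that the $O(c)$ unit capacity changes have ``cascade lengths [that] concentrate by exactly the same analysis applied to intermediate configurations.'' This does not give the claimed bound.

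Concretely: the single-run analysis you outline shows only that the maximal run length is $O((\log \bins)/c)$ w.h.p.; it does \emph{not} give an individual high-probability bound of $O((\log \bins)/c^2)$ for a capacity-$\geq 2$ bin (the $(\log c)/c^2$ in Theorem~\ref{thm:exp-bins-c}(b) is an expectation, not a tail). Hence summing over the $\lceil c\rceil$ capacity changes yields only $c\cdot O((\log \bins)/c)=O(\log \bins)$, which is a factor $c$ too large. Moreover, the cascades for the different capacity changes are correlated through the shared ball and bin hashes, so you cannot simply invoke concentration on their count or their sum without further work.

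The paper addresses exactly this point with a substantial additional argument: it fixes the set $C$ of the $\lceil c\rceil$ affected bins, partitions $C$ into $O(1)$ well-separated subsets $C_k$ (using that, w.h.p., only $O(1)$ bins of $C$ fall in any interval of length $\Theta((\log\bins)r/\bins)$), and then bounds the probability that the \emph{total} run length over $C_k$ equals $d$ via ``combined'' ball and bin events on the union of the associated intervals. The key device is a coarse dyadic encoding of the per-bin interval lengths so that only $2^{O(|C_k|)}$ interval configurations need be union-bounded, and this factor is then absorbed by the Chernoff exponent. In the lightly loaded subcase $\bins=\omega(\balls)$ a further refinement is needed (restricting to the $d'\le d$ bins of $C_k$ that are actually full, and using that capacity decreases act only on bins of capacity $\geq 2$ so at least $2d'$ balls are forced). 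None of this machinery is hinted at in your proposal, and without it the ball-update bound for $c=\omega(1)$ does not go through.
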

If $\balls=\Omega(\bins)$, the above bounds are at most a log-factor
worse than the expected bounds. In the very lighlty loaded case where
$\balls=o(\bins)$, things have to get worse, e.g., even for simple
consistent hashing without forwarding, we expect some bins to have
$\Theta\left(\frac{\log \bins}{\log(\bins/\balls)}\right)$ balls,
matching our high probability bound for the number of balls to be
moved when a bin is closed.

The rest of this section is devoted to proving Theorem \ref{thm:whp}.
We will often study some run $R$ of consecutive bins full bins, preceeded by
an non-full bin at some position $x$. Let $y$ be the position of
the last bin in $R$. Because the preceeding bin is not full, 
all balls landing in the bins of $R$ must
hash to the interval $(x,y]$ covered by $R$. We note that the bin 
succeeding $R$ could be in position $y$, colliding with the last bin of $R$.

\paragraph{Large capacities}
We first study the case with super constant balancing 
parameter $c=\omega(1)$. What makes this case tricky to analyze is that
we in connection with ball updates have to make up to $\ceil{c}$ capacity
changes. Recall that bins have capacity $c\balls/\bins$ rounded up or
down. Also, recall that $c\balls/\bins\geq 1$.

For a given number $d$, we will bound the probability that a given bin
$q$ is in a maximal run $R$ of $d$ full bins. Based on $d$ we are
going to pick a threshold $\ell$. If the run covers an interval $I$ of
length at most $\ell$, then $I\subseteq (h(q)-\ell,h(q)+\ell]$, and
then we know that at least $dc\balls/(2\bins)$ hash to
$(h(q)-\ell,h(q)+\ell]$.  We call this the {\em ball event}. On the other
hand, if the $R$ covers interval of length bigger $\ell$, then, either
we have at most $d$ bins landing in $[h(q)-\ceil{\ell/2},h(q))$ or at
  most $d$ bins landing in $[h(q),h(q)+\ceil{\ell/2})$. We call this
the {\em left and right bin events}. 

It is convenient to reparameterize with  $s=\ell\bins/(dr)$, and we will
always have $s\geq 6$.

We will now bound the probability of the ball event. The expected
number of balls that hash to $(h(q)-\ell,h(q)+\ell]$ is
  $\mu_A=2\ell\balls/r= 2ds\balls/\bins$. Using the Poisson bound, the
  probability that at least $x_A=dc\balls/(2\bins)$ hash to this
  interval is bounded by
\begin{equation}\label{eq:balls-whp}
P_A=(e\mu_A/x_A)^{x_A}=((e2ds\balls/\bins)/(dc\balls/(2\bins))^{dc\balls/(2\bins)}=
((4es/c)^{dc\balls/(2\bins)}.
\end{equation}
Next we consider any one of the two bin events. In both cases, we want
at most $d$ balls to hash into an interval $I$ of length at
$\ell/2$. Ignoring the bin $q$, we expect at least
$\mu_B=(\ell/2)(\bins-1)/r=(ds/3)$ bins to land in $I$, and $(ds/3)\geq 2$. Therefore, by the Chernoff bounds, the probability of
each bin event is bounded by
\begin{equation}\label{eq:bins-whp}
P_B=\exp(-\mu_B/8)=\exp(-ds/24).
\end{equation}
Now, if $\bins=\Omega(\balls)$, we pick $s=c/(8e)$, which is bigger than $6$
since $c=\omega(1)$. With this parameter choice, the probability that
any of our events happen is bounded by $\exp(-\Omega(dc))$, which
hence also bounds the probability that bin $q$ is in a maximal run of 
length $d$. It follows that, w.h.p., the maximal run length is $O((\log n)/c)$.
In particular, it follows that for some sufficiently large $c=O(\log n)$,
w.h.p., there is no full bin.

If $\bins=o(\balls)$, we pick $s=24(c\balls/\bins)\ln(\bins/\balls)$.
Then 
\[P_A=(4e s/c)^{c\balls/(2\bins)}=(96e\balls/\bins\lg(\bins/\balls))^{c\balls/(2\bins)}=(\balls/\bins)^{\Omega(dc\balls/\bins)}\textnormal,\]
while 
\[P_B=\exp(-ds/24)=(\balls/\bins)^{dc\balls/\bins}\]
Thus we conclude that the probability that a bin $q$ is in a maximal run of 
length $d$ is bounded by $(\balls/\bins)^{\Omega(dc\balls/\bins)}$. It
follows that w.h.p., the maximal run length is $O((\log n)/(c\balls/\bins\log(\bins/\balls))$.

Above we studied the maximal run length, which also bounds the number of bins
we have to search if the first bin is full. It also bounds the number of
balls that have to be moved when a ball is added or removed, except
for the effect of changing $\ceil{c}$ bin capacities, each by 1. 
Recall here that adding and removing a ball has the same cost by
symmetry, so it suffices to consider a ball removal where capacities
are decreased by 1.

\paragraph{$\ceil{c}$ capacity decreases}
We will now study the number of moves in connection with
at most $\ceil{c}$ capacity decreases. Let $C$ be the
set of affected bins. The cost is bounded by the sum over the bins $p\in C$
of the length of the maximal run of full bins containing $p$. We already have
a bound on the maximal run length that we can multiply by $c$, but
we will show that, w.h.p., the sum of these run lengths is only
a constant factor bigger than the maximal run length. Since one bin
makes no difference here, we can assume that $c=\ceil{c}$.
The argument below assumes that $c\leq \sqrt{\balls}$. The
case $c>\sqrt{\bins}$ is quite extreme and can be handled with a different
argument.

Based on the previous analysis for a single run, we argue that any run
of full bins is contained in an interval of length at most
$O((\log\bins) r/\bins))$. To see this, recall that we gave high
probability bounds on (1) the maximum length $d$ of a run of full bins,
and (2) the length $\ell$ such that intervals of length $\ell$ have more
than $d$ bins. Here (1) and (2) implies that no run of full bin
can span an interval longer than $\ell$. For the case where
$\bins=\Omega(\balls)$, we had $s=c/(4e)$ and $d=O((\log n)/c)$, hence
\[\ell=sdr/\bins=(c/(4e)O((\log n)/c)r/\bins=O((\log\bins) r/\bins).\]
For the case, $\bins=\omega(\balls)$, we had
$s=24(c\balls/\bins)\ln(\bins/\balls)$ and 
$O((\log n)/(c\balls/\bins\log(\bins/\balls))$, hence
\[\ell=sdr/\bins=24((c\balls/\bins)\log (\bins/\balls))O((\log n)/((c\balls/\bins)\log (\bins/\balls)))r/\bins=O((\log\bins) r/\bins).\]
We fix the hashing of the bins in $C$ before we start hashing any of
the other balls and bins. We say that two bins from $C$ are
well-separated if they are at least $4\ell$ apart on the cycle. We
will partition $C$ into a constant number of sets, $C_1,..,C_{O(1)}$
so that bins in the same set are all well-separated. To do that we
partition the cycle into intervals of length between $4\ell$ and
$8\ell$. Consider one of these intervals $I$.  A bin from $C$ lands in
it with probability $O((\log \bins)/\bins)$ and
$c=|C|\leq\sqrt{\bins}$, so with high probability, we get only a
constant number of bins from $C$ in any $I$. To get our first set
$C_1$, we pick an arbitrary bin from every other interval, and
likewise we get $C_2$ picking a bin from each of the other
intervals. We have now picked a bin from every interval, so if we
repeat this a constant number of times, we get the desired
partitioning $C_1,..,C_{O(1)}$.

We now focus on any one of the well-separated $C_k$. Let
$c_k=|C_k|\leq c$. For each bin $p\in C_k$, we have a run of full bins,
and we want to bound the probability that the total length of these
runs is $d$. 

The run of full bins that includes $p\in C_k$ covers some interval
$I_p$ of length $\ell_p$, and let $\ell^*=\sum_{p\in C_k}\ell_p$.  As
when we studied the run of a single bin, we are going to define a
threshold length $\ell=O((\log n)r/$ and $s=\ell \bins/(dr)$.  Our
{\em combined ball event\/} is that there exists intervals $I_p$ of total
length $\ell^*=\ell$ with at least $dc\balls/(2\bins)$ hashing to
them. Our new {\em combined bin event\/} is that there exists intervals $I_p$
of total length $\ell^*=\ell$ with at most $d$ bins hashing inside of
them (not including the last position). If both events fail, then the
total run length cannot be $d$.

For each $p$, we will use a coarse overestimate $\ell^+_p$ of
$\ell_p$. We say that $\ell_p$ is on level $0$ if $\ell_p\leq
\ell/c_k$. Otherwise $\ell_p$ is on level $j$ where
$2^{j-1}\ell/c_k<\ell_p\leq 2^j\ell/c_k$. We set
$\ell^+_p=2^j\ell/c_k$. With these overestimates, we have
$I_p\subseteq I_p^+=(h(p)-\ell_p^+, h(p)+\ell_p^+]$ for all $p$.  The
bins in $C_k$ are well-separated so that there is no overlap between
any two $I_{p'}^+$.  Moreover, $\sum_{p\in C_k}\ell_p^+\leq
\ell+\sum_{p\in C_k}2\ell_p\leq 3\ell^*$, so $|\sum_{p\in C_k}
|I_p^+|\leq 6\ell^*$. Our combined ball event implies that $\ell^*\leq \ell$
and that $dc\balls/(2\bins)$ balls hash to $\bigcup_{p\in C_k}
I_p=\bigcup_{p\in C_k} I_p^+$.

Assume the $I_p^+$ were fixed. They have total length at most $6\ell$,
so the expected number of balls hashing to $\bigcup_{p\in C_k} I_p^+$
is at most $\balls (6\ell/r)=6ds\balls/\bins$.  The
probability that at least $dc\balls/(2\bins)$ balls hash to $\bigcup_{p\in C_k}
I_p^+$ is therefore at most
\begin{equation}\label{eq:comb-ball}
((e6ds\balls/\bins)/(dc\balls/(2\bins))^{dc\balls/(2\bins)}=
(12es/c)^{dc\balls/(2\bins)}.
\end{equation}
The important point here is that there are only $4^{c_k}$ possible
choices for the $I_i^+$. More precisely, for each level $j\geq 0$, we have a bit map telling
which of bins from level $\geq j$ that are on level $>j$. There are at most $c_k/2^{j-1}$ bins
on level $\geq j$, which is the size of the bit map for that level. The total number of
bits needed to describe all $I_p^+$ is thus at most $4c_k$, so 
there at most $2^{4c_k}$ combinations. This means that even
when we quantify over all possible combinations of
the intervals $I^+_p$, as long as their total length is $\ell$,
the total probability of the combined ball event is
bounded by 
\[2^{4c_k}((12es/c_k)^{dc\balls/(2\bins)}.\]
For the bins, we use the same idea, but starting from level $-1$ when
$\ell/(2c_k)\leq \ell_p<\ell/c_k$. We now use an underestimates
$\ell_p^-$ of $\ell_p$. We set $\ell_p^-=0$ if $\ell_p<\ell/(2c_k)$. Otherwise, if
$\ell_p$ is on level $j\geq -1$, we set $\ell_p^-=2^{j-2}\ell/c_k$. We
also have, for each bin a bit that tells if the interval $I_p$ is
mostly before or mostly after $h(q)$. If before, then $I_p$ contains
$I_p^-=[h(p)-\ell_p^-,h(p))$. Otherwise $I_p$ contains
$I_p^-=[h(p),h(p)+\ell_p^-)$. Including the before/after bitmaps, we
can completely describe all the $I_p^-$ using $6c_k$ bits, so there
are at most $2^{6c_k}$ combinations. Moreover, the total size of the
small $\ell_p<\ell/(2c_k)$ is at most $\ell/2$. For the $\ell_p\geq
\ell/(2c_k)$, we get $\ell_p^+\geq \ell_p/4$, so we conclude that
$\sum_{p\in C_k} \ell_p^-\geq \ell/8$. Assume that the $I_p^-$ are
fixed. The combined bin event implies that at most $d$ bins hash to
them. Considering only the $\bins-c$ bins outside $C$, we expect
at least $(\bins-c)(\ell/8)/r=(\bins-c)sd/(8\bins)\leq sd/10$ bins. With $s\geq
20$, the probability of the bin event is
$\exp(-(sd/10)/8)=\exp(-sd/80)$. Considering all possible configurations
of the bins $I_p^-$, we get that the probability of the combined bin event
is bounded by $2^{6c_k}\exp(-sd/80)$. 

If $\bins=\Omega(\balls)$, we pick $s=c_k/(24e)$ and conclude that the
probability of combined ball and bin events is bounded by 
$2^{O(c_k)}\exp(-\Omega(dc_k))=\exp(-\Omega(dc_k))$ for $d=\omega(1)$.
This bounds the probability that the sum of run lengths from the
bins in $C_k$ is $d$, so we conclude that, w.h.p., the total
run length is $O((\log n)/c)$. There are only $O(1)$ sets $C_k$ in $C$,
so we get the same bound on the total run length over all bins in $C$.

We would like to follow the same idea in the lightly loaded case where
$\bins=o(\balls)$. If we set $s=80 (c\balls/\bins)\ln(\bins/\balls)$,
we get that the probability of the combined ball and bin events is bounded by
$2^{O(c_k)}(\balls/\bins)^{\Omega(dc\balls/\bins)}$. However, this
time, the $2^{O(c_k)}$ may be so large. One issue is that $c_k$ may be
much larger than $d$.

We now note that if $d$ is total number of bins in the maximal runs of full 
bins involving bins in $C_k$, then at most $d$ bins from $C_k$ can be full.
Let us assume we guess the number $d'\leq d$ of full bins from $C_k$, and
let $C'\subseteq C_k$ be the subset of these bins. We can then apply the
the preceeding argument to $C'$ instead of $C_k$ with $d'$ instead of $c_k$.
Note that we are not conditioning on the bins from $C'$ all being full. 
We are merely studing the probability that $d$ is the total number of 
bins in the maximal runs of full bins involving bins in $C'$. To get
bounds for $C_k$, we have to sum over all $d'\leq d$ over
all ${c_k\choose d'}$ choices of $C'\subseteq C_k$. If $d'=O(1)$, we
are done, for then we involve only $O(1)$ maximal runs, so we may
assume that $d'=\omega(1)$.

When $C'$ is fixed, we only have $2^{6d'}$ choices for the coarse interals,
so for given $d'=\omega(1)$, the total number of configurations we have to consider
is 
\[{c_k\choose d'}2^{6d'}<({c_k}^{d'}/d'!)2^{6d'}<{c_k}^{d'}.\]
We will now sharpen our probability bounds for the combined bin and
ball events. Setting $s=160 (c\balls/\bins)\ln(\bins/\balls)$,
we get that the probability of the combined bin event is 
$\exp(-sd/80)=(\balls/\bins)^{2d c\balls/\bins}$. For $\bins=\omega(\balls)$,
this upper bound is maximized when $\balls/\bins$ is minimized, and
we know that $\balls/\bins\geq 1/c$, so 
$(\balls/\bins)^{2d c\balls/\bins)}\leq (1/c)^{2d}$ which is at least
the square of ${c_k}^{d'}$. Thus, covering all combinations, we conclude
that the combined bin event has probability at most $(\balls/\bins)^{d c\balls/\bins)}$.

When it comes to the ball bound, we need to exploit a subtle point
that we also used in our expected bound; namely that when we decrease
the capacity, it only has effect if the bin was full before the
decrease, and then the capacity was at least $2$. Our ball event
assuming that all bins in $C_k'$ are full thus implies that at least
$2d'$ balls hash to the intervals $I_p^+$ for $p\in C_k'$. We also
know that we need $dc\balls/(2\bins)$ to fill $d$ bins. Poisson bound
from \req{eq:comb-ball} can be refined to 
\begin{align}
P=(12es/c)^{\max\{dc\balls/(2\bins),2d'\}}&=
(12 e 160 (c\balls/\bins)\ln(\bins/\balls)/c)^{\max\{dc\balls/(2\bins),2d'\}}\nonumber\\
&=O((\balls/\bins)\log(\bins/\balls))^{\max\{dc\balls/(2\bins),2d'\}}.\label{eq:P}
\end{align}
For $\bins/\balls=\omega(1)$ and $\bins/\balls\leq c/4$, this bound grows
with $\bins/\balls$, so in this case, 
\[P\leq O(\log c/c)^{2d}\leq 1/c^{1.5d}.\]
On the other hand, for $c/4<\bins/\balls\leq c$, we use the $2d'$ bound,
and conclude that 
\[P\leq O(\log c/c)^{2d'}\leq 1/c^{1.5d'}.\]
Thus, including our factor $c^{d'}$ from the union over all configurations, 
we get the probability of the ball event is bounded by
\[c^{d'}P\leq P^{1/3}\leq 
(O(\balls/\bins)\log (\bins/\balls))^{\max\{dc\balls/(2\bins),2d'\}/3}
\leq (\balls/\bins)^{\Omega(d c\balls/(\bins))}.\] This completes the
proof that when $\bins=\omega(\bins)$, w.h.p., the total run length
over all bins in $C$ is $O((\log n)/((c\balls/\bins)\log
(\bins/\balls)))$.

\paragraph{Bin updates}
The last thing we need to analyze is the number of balls that may be
moved when a bin is added or removed. The two are symmetric, but the
easiest to think about is a bin removal. The direct effect is that the
balls in the bin are forwarded to the next been, but if it is full,
they are further forwarded. Then the total number of affected balls is
the bin capacity $\ceil{c\bins/\balls}$ times 1 plus the maximal number
of full bins in a run. We already have high probability bounds on the
maximal run length. In addition, we have to do $\ceil{c\bins/\balls}$ 
bin capacity changes. We already gave high probability bounds on the
number of balls that had to be moved with up to $c$ capacity changes,
so now we just have to multiply that number with $\ceil{\ceil{c\bins/\balls}/
c}\leq 1+\bins/\balls$. Thus, when a bin is added or removed, 
if $\balls=\Omega(\bins)$, then, w.h.p., the number of balls moved is bounded by
\[(1+c\balls/\bins)(1+O((\log \bins)/c))+(1+\bins/\balls)O((\log \bins)/c)=
O(c\balls/\bins+(\bins/\balls)(\log \bins)).\]
If $\balls=o(\bins)$, then, w.h.p., the number of balls moved is bounded by
\begin{align*}
(1+c\bins/\balls)&
\left(1+O\left(\frac{\log \bins}{(c\balls/\bins)\log(\bins/\balls)}\right)\right)+(1+\bins/\balls)O\left(\frac{\log \bins}{(c\balls/\bins)\log(\bins/\balls)}\right)).\\
&=O\left(c\bins/\balls+\left(\frac{\log \bins}{\log(\bins/\balls)}\right)\right).
\end{align*}
However, if we get no full bins, then there is no forwarding, so if a
bin is removed, then this only affects the ball that were in the bin
that are now transferred to the successor. Our high probability bound on
the maximal run length was, in fact, based on a bound on how many
balls we could have in an interval, large enough to have several bins,
with high probability, so multiplying that number with the capacity,
we get a high probability upper bound on how many balls hash directly
to any given bin. For $\balls=\Omega(\bins)$, the
bound is 
\[\ceil{c\balls/\bins}O\left((\log \bins)/c\right)=O\left((\balls/\bins)(\log \bins)\right).\]
If this bound is below $\floor{c\balls/\bins}$, then there are no
bins, so this bound replaces the previous $c\balls/\bins$ term.
Thus for $\balls=\Omega(\bins)$, we conclude, w.h.p., that if a bin added or removed, then the
total number of balls moved is $O\left((\log \bins)/c\right)$, as claimed
in Theorem \ref{thm:whp}.

Likewise, for $\balls=o(\bins)$, w.h.p., the maximal number of balls hashing
to a bin is 
\[\ceil{c\balls/\bins}O\left(\frac{\log \bins}{(c\balls/\bins)
\log(\bins/\balls)}\right)=\left(\frac{\log \bins}{\log(\bins/\balls)}\right).\]
Again this replaces the previous $c\balls/\bins$.
Thus for $\balls=o(\bins)$, 
we conclude, w.h.p., that if a bin added or removed, then the
total number of balls moved is $\left(\frac{\log \bins}{\log(\bins/\balls)}\right)$, as claimed
in Theorem \ref{thm:whp}. This completes the proof of Theorem \ref{thm:whp}
when $c=\omega(1)$.

\paragraph{Small capacities}
For the cases where $c=O(1)$, we get all he bounds from Theorem \ref{thm:whp}
follows if we prove, w.h.p., that the longest
run of full bins has length $O((\log n)/\eps^2)$. 

As in Section \ref{sec:small-cap}, we say that the weight of a bin
is its capacity divided by the $(1+\eps)\balls/\bins$. These weights
are between $1/2$ and $2$. Suppose there a run of full bins of
total weight at least $d$. We consider the shortest prefix $R$ of this
run of length at weight at least $d$. The bins have weight less than $d+2$.
Let $(x,y]$ be the interval covered by these bins. Here $x$ is
the location of the first bin and $y$ is the location of the
first bin after $R$ (for technical reasons, we made the tie breaking 
such that if a ball and bin landed in the same position, then the
ball proceeds the bin). We now have at least 
$d(1+\eps)\balls/\bins$ balls landing in $[x,y]$. We also know
that the total weight of the bins landing in $(x,y)$ is at most
$d+2$.

Let $\ell$ be any value. If $x-y\leq \ell$, then
we conclude that we have $d(1+\eps)\balls/\bins$ balls landing in some
interval of length $\ell+1$. On the other hand, if $x-y>\ell$,
then bins of weight at most $d+2$ landing in an interval of length
at most $\ell-1$.

We divide the cycle into at most $2\bins$ parts of length at most
$r/\bins\geq 1$. Inside an interval of length $\ell-1$, we can
have an interval of length at least $\ell^-=\ell-1-2r/\bins\geq \ell-3r/\bins$.
This interval can be chosen in at most $2\bins$ ways, and it 
that consists of a whole number of parts.

We will pick $\ell_d$ as a function of $d$.

Define $\delta$ such that $(1+\eps)=(1+\delta)^2/(1-\delta)^2$. Then
$\delta=\Omega(\eps)$. We set $\ell=d(r/\bins)/(1-\delta)^2$.

We will have $d\geq (\lg \bins)/\delta^2$. Then 
$\ell^-\geq (d+2)(r/\bins)/(1-\delta)$. It follows that the
expected weight of bins in such an interval $I^-$ is $(d+2)/(1-\delta)$.
Using standard Chernoff bounds, since each bin has weight below
2, the probability of this event is
\[\exp(-((d+2)/2)\delta^2/2))\leq \exp(-\Omega(d\eps^2)).\]
We can pick $d=O((\log n)/\eps^2)$ large enough that the
probability of an underfull interval of length $\ell^-$ is
$1/n^\gamma$.

We now consider the other error event: that there is an interval
of length $\ell^+$ with too many balls. The bad interval should 
consist of a minimal number of pieces so as to have length at
least $\ell+1$. It will have length at most $\ell^+=\ell+1+2r/\bins
\leq \ell+3r/\bins\leq \ell(1+\delta)$. The number of
balls expected in such an interval is 
$\balls \ell(1+\delta)/r=d(1+\delta)(\balls/bins)/(1-\delta)^2$. 
The error event is that we getting $d(1+\eps)\balls/\bins$, which is
$(1+\delta)$ times bigger than the expectation. By definition,
$(1+\eps)\balls/\bins\geq 1$, so by Chernoff bounds, the error
probability is 
\[\exp(-(d/(1-\delta))\delta^2/3))\leq \exp(-\Omega(d\eps^2)).\]
Again, we can drive the error probability down for the at most $2\bins$
possible combinations. This completes our proof of Theorem \ref{thm:whp}.

\paragraph{Tabulation hashing}
The bounds from Theorem \ref{thm:whp} with load balancing
$c=(1+\eps)=O(1)$ also hold with simple tabulation, which has somewhat
weaker Chernoff bounds \cite{patrascu12charhash}. These Chernoff
bounds are weaker in the exponent by a constant factor, and this only
costs us a constant factor in the maximal number of consecutive full
bins, and that suffices for $c=(1+\eps)=O(1)$. For $c=\omega(1)$, we
have to deal with bin capacity changes, which may not be sufficiently
independent with simple tabulation, but for $c=(\log n)^{1+o(1)}$, we
can still get the bounds of Theorem \ref{thm:whp} if we use twisted
tabulation \cite{PT13:twist}.


\section{Simulation Results}\label{sec:simulation}
To validate the consistency property of our hashing scheme which is
theoretically analysed in Theorems~\ref{thm:moves-c}, and
\ref{thm:moves-eps}, we present the following empirical results. We
generated thousands of instances, and tracked the number of ball
movements in each operation, and the distribution of bin sizes. We
picked the number of bins, $\bins$, the average balls per bin ratio $r
= \frac{m}{n}$, and $\eps$ as follows:

\begin{itemize}
\item
 $\bins \in \{10, 20, 40, 70, 100, 150, 200, 300, 450, 600, 800, 1000, 2000\}$.  
\item
$r = \frac{m}{n} \in \{0.5, 0.8, 1, 1.2, 1.5, 2, 3, 5, 10\}$
\item
$\eps \in \{0.05, 0.1, 0.2, 0.3, 0.4, 0.5, 0.6, 0.7, 0.8, 0.9, 1, 1.2, 1.5, 1.8, 2, 2.3, 2.5, 2.8, 3\}$
\end{itemize}

Figures~\ref{fig:bin-sizes} and \ref{fig:bin-sizes-consistent-hashing} show the distribution of bin loads for our algorithm and Consistent Hashing algorithm respectively. 
Figures~\ref{fig:ball-moves} and \ref{fig:bin-moves} depict the average number of movements in our algorithm for each value of $\eps$. 
We start with Figure~\ref{fig:bin-sizes} that shows the distribution of bin loads. 
The three plots represent three values of $\eps \in \{0.1, 0.3, 0.9\}$.
We expect the load of each bin to be at most $\lceil (1+\eps) \balls/\bins \rceil$. To unify the results of various simulations with different average loads ($\balls/\bins$), we divide the loads of bins by $\balls/\bins$, and sort the normalized bin loads in a decreasing order (breaking ties arbitrarily). The $y$ coordinate is the normalized load, and the $x$ coordinate shows the fractional rank of the bin in this order. For instance if a bin's load is greater than $35\%$ of other bins, its $x$ coordinate will be $35\%$. As we expect, no bin has normalized load more than $1+\eps$.  A significant fraction of bins have normalized load $1+\eps$, and the rest have normalized loads distributed smoothly in range $[0,1+\eps]$. The smaller the $\eps$ is, the more we expect bins to have normalized loads equal to $1+\eps$, and consequently having a more uniform load distribution.  

\begin{figure}[H]
\centering
\includegraphics[scale=0.3]{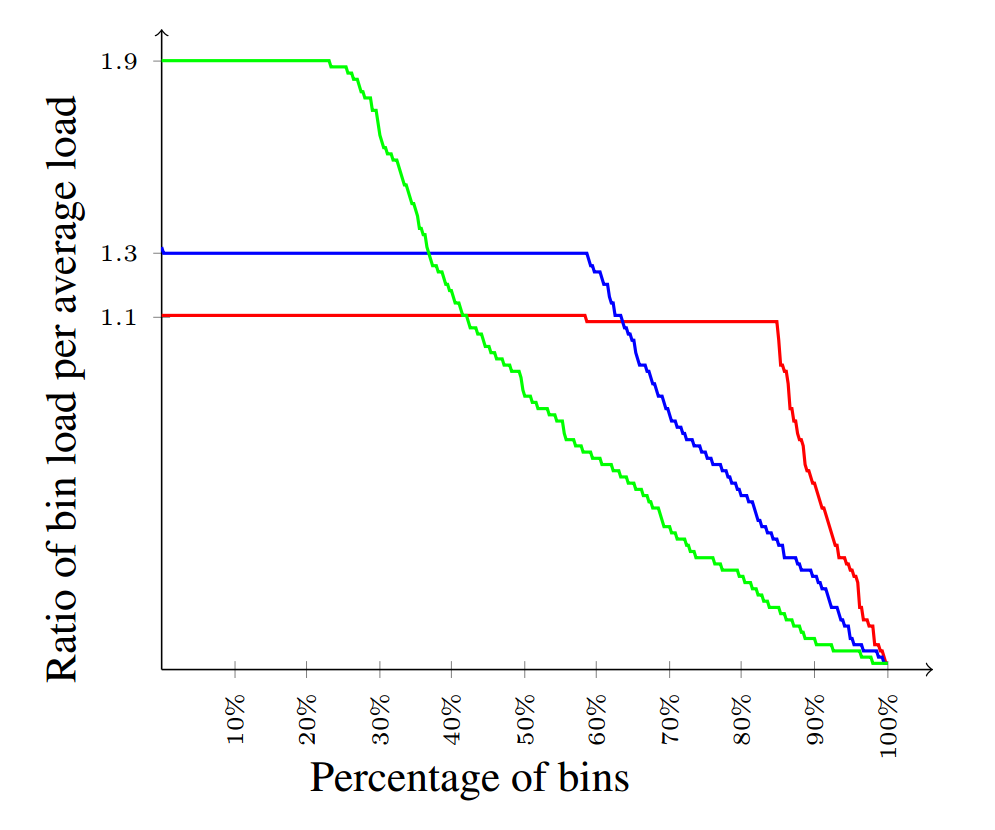}
\caption{Bin loads divided by average load $\balls/\bins$ for our algorithm.} \label{fig:bin-sizes}
\end{figure}

As shown above, the maximum load never exceeds $1+\eps$ times the average load in our algorithm. However, you can see below that there is no constant upper bound on maximum load for Consistent Hashing algorithm. We simulate consistent hashing with $\bins$ balls and $\bins$ bins for three different values of $\bins \in \{200, 1000, 8000\}$. As expected, the maximum load grows with $\bins$, it is expected to be around $\log(\bins) / \log\log(\bins)$. The percentage axis is rescaled to highlight the more interesting bin sizes. 

\begin{figure}[H]
\centering
\includegraphics[scale=0.3]{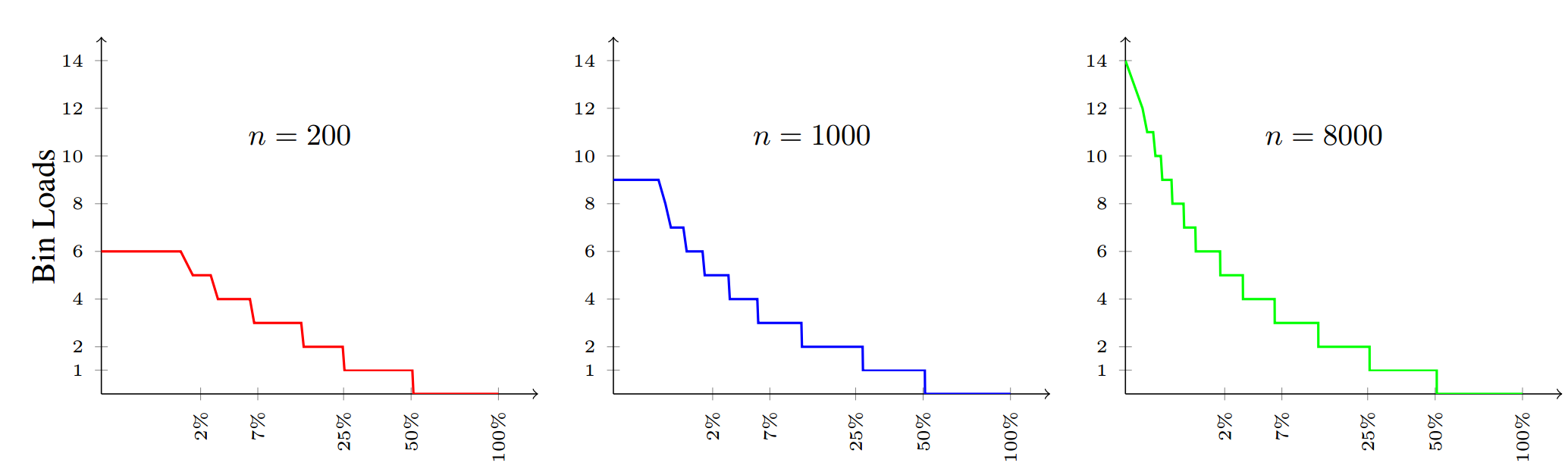}
\caption{Bin loads divided by average load for Consistent Hashing algorithm.} \label{fig:bin-sizes-consistent-hashing}
\end{figure}

Figure~\ref{fig:ball-moves} depicts the number of movements per ball operation. Theorems~\ref{thm:moves-c}, and \ref{thm:moves-eps} suggests that the  expected number of movements per ball operation is  $O(\log(c) / c)$, and $O(1 / \eps^2)$ respectively where $c$ is $1+\eps$. The solid red curve in Figure~\ref{fig:ball-moves} depict the average normalized ball movements in all simulations for each value of $\eps$. The bars show the standard deviation of these normalized movements. The dashed black line is the upper bound on these numbers of movements predicted by Theorems~\ref{thm:moves-c} and \ref{thm:moves-eps} with the following formula:

$$
f(\eps) =
\left\{
	\begin{array}{ll}
		2/\eps^2  & \mbox{if } \eps < 1 \\
		1 + \log(1+\eps) / (1+\eps) & \mbox{if } \eps \geq 1
	\end{array}
\right.
$$

\begin{figure}[H]
\centering
\includegraphics[scale=0.3]{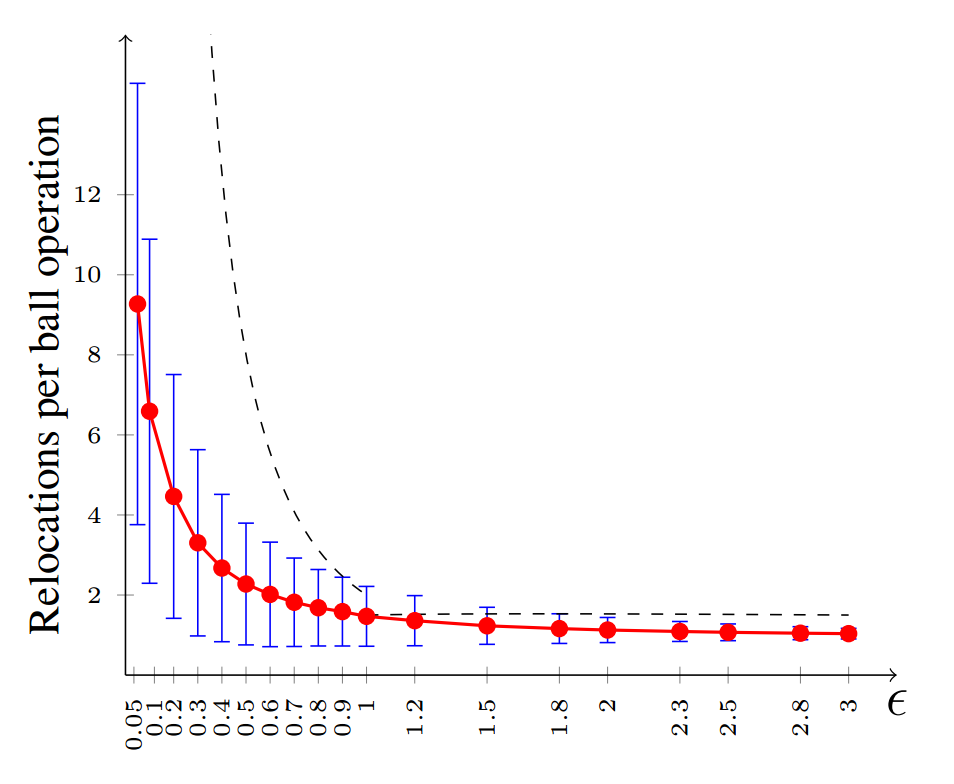}
\caption{Number of movements per ball operation for different values of $\eps$.} \label{fig:ball-moves}
\end{figure}

Our results predict that unlike the ball operations, the average movements per bin operation (insertion or deletion of a bin) is proportional to average density of bins $r = \frac{\balls}{\bins}$. 
Therefore we designate Figures~\ref{fig:bin-moves} and \ref{fig:ball-moves} to bin and ball operations respectively.
We start by elaborating on Figure~\ref{fig:bin-moves}.
Theorems~\ref{thm:moves-c} and \ref{thm:moves-eps} suggest that the average number of movements per bin operation is $O(r \log(c) / c)$, and $O(r / \eps^2)$ respectively where $c$ is $1+\eps$. To unify the results of all our simulations, we normalize the number of movements in bin operations with dividing them by $r$. The solid red curve in Figure~\ref{fig:bin-moves} depict the average normalized bin movements in all simulations for each value of $\eps$. The bars show the standard deviation of these normalized movements. The dashed black line is the upper bound function, $f(\eps)$ (defined above), on these numbers of movements predicted by Theorems~\ref{thm:moves-c} and \ref{thm:moves-eps}. Similarly Figure~\ref{fig:ball-moves} shows the relation between the number of movements and $\eps$ for ball insertions and deletions. These are the actual number of movements and are not normalized by the average density $r$. 

\begin{figure}[H]
\centering
\includegraphics[scale=0.3]{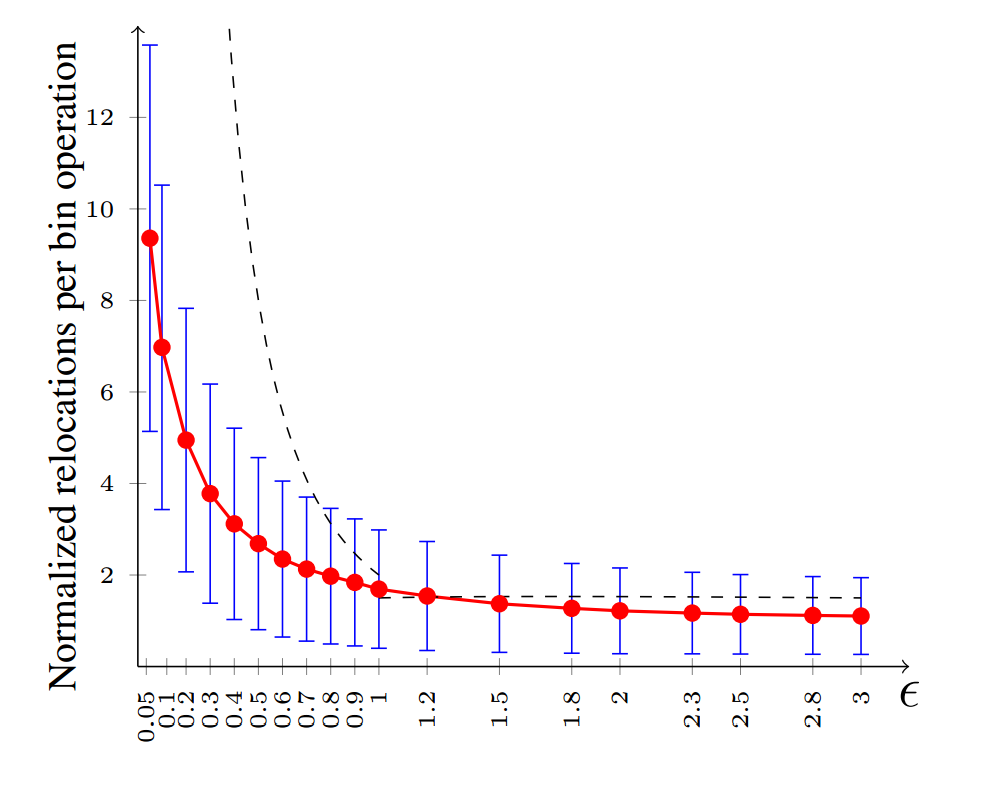} 
\caption{Normalized number of movements per bin operation for different values of $\eps$.}
\label{fig:bin-moves}
\end{figure}
\nocite{Lar88,DW07,Rod16,NSW08,FPSS05}

\paragraph{Acknowledgements}
We thank Andrew Rodland from Vimeo for putting our algorithmic idea into practice, 
and for letting use the plot from \cite{Rod16} presented in 
Figure~\ref{fig:vimeo} (with a changed background color to make printing easy).

{
\bibliographystyle{alpha} 
\bibliography{general}

\newcommand{\etalchar}[1]{$^{#1}$}
\begin{thebibliography}{CFM{\etalchar{+}}98}

\bibitem[ABKU99]{power-of-two-choices}
Yossi Azar, Andrei~Z. Broder, Anna~R. Karlin, and Eli Upfal.
\newblock Balanced allocations.
\newblock {\em {SIAM} J. Comput.}, 29(1):180--200, 1999.

\bibitem[ADW14]{ADW14}
Martin Aum{\"{u}}ller, Martin Dietzfelbinger, and Philipp Woelfel.
\newblock Explicit and efficient hash families suffice for cuckoo hashing with
  a stash.
\newblock {\em Algorithmica}, 70(3):428--456, 2014.

\bibitem[BCFM00]{broder98minwise}
Andrei~Z. Broder, Moses Charikar, Alan~M. Frieze, and Michael Mitzenmacher.
\newblock Min-wise independent permutations.
\newblock {\em Journal of Computer and System Sciences}, 60(3):630--659, 2000.
\newblock See also STOC'98.

\bibitem[BCM03]{BCM03}
John~W. Byers, Jeffrey Considine, and Michael Mitzenmacher.
\newblock Simple load balancing for distributed hash tables.
\newblock In {\em Peer-to-Peer Systems II, Second International Workshop,
  {IPTPS} 2003, Berkeley, CA, USA, February 21-22,2003, Revised Papers}, pages
  80--87, 2003.

\bibitem[BCM04]{BCM04}
John~W. Byers, Jeffrey Considine, and Michael Mitzenmacher.
\newblock Geometric generalizations of the power of two choices.
\newblock In {\em {SPAA} 2004: Proceedings of the Sixteenth Annual {ACM}
  Symposium on Parallelism in Algorithms and Architectures, June 27-30, 2004,
  Barcelona, Spain}, pages 54--63, 2004.

\bibitem[BCSV00]{berenbrink}
Petra Berenbrink, Artur Czumaj, Angelika Steger, and Berthold V{\"o}cking.
\newblock Balanced allocations: the heavily loaded case.
\newblock In {\em Proceedings of the thirty-second annual ACM symposium on
  Theory of computing}, pages 745--754. ACM, 2000.

\bibitem[BG07]{BG07}
G.~E. Blelloch and D.~Golovin.
\newblock Strongly history-independent hashing with applications.
\newblock In {\em Proc. 48th IEEE Symposium on Foundations of Computer Science
  (FOCS)}, pages 272--282, 2007.

\bibitem[BSS00]{BSS00}
Andr{\'{e}} Brinkmann, Kay Salzwedel, and Christian Scheideler.
\newblock Efficient, distributed data placement strategies for storage area
  networks (extended abstract).
\newblock In {\em Proceedings of the Twelfth annual ACM Symposium on Parallel
  Algorithms and Architectures, {SPAA}, July 9-13, 2000, Bar Harbor, Maine,
  USA}, pages 119--128, 2000.

\bibitem[CDKR02]{scribe}
Miguel Castro, Peter Druschel, Anne-Marie Kermarrec, and Antony~IT Rowstron.
\newblock Scribe: A large-scale and decentralized application-level multicast
  infrastructure.
\newblock {\em Selected Areas in Communications, IEEE Journal on},
  20(8):1489--1499, 2002.

\bibitem[CFM{\etalchar{+}}98]{CFMMRSU98}
Richard Cole, Alan~M. Frieze, Bruce~M. Maggs, Michael Mitzenmacher,
  Andr{\'{e}}a~W. Richa, Ramesh~K. Sitaraman, and Eli Upfal.
\newblock On balls and bins with deletions.
\newblock In {\em Randomization and Approximation Techniques in Computer
  Science, Second International Workshop, RANDOM'98, Barcelona, Spain, October
  8-10, 1998, Proceedings}, pages 145--158, 1998.

\bibitem[DW07]{DW07}
Martin Dietzfelbinger and Christoph Weidling.
\newblock Balanced allocation and dictionaries with tightly packed constant
  size bins.
\newblock {\em Theor. Comput. Sci.}, 380(1-2):47--68, 2007.

\bibitem[FJ17]{FJ17}
Alan~M. Frieze and Tony Johansson.
\newblock On the insertion time of random walk cuckoo hashing.
\newblock In {\em Proceedings of the Twenty-Eighth Annual {ACM-SIAM} Symposium
  on Discrete Algorithms, {SODA} 2017, Barcelona, Spain, Hotel Porta Fira,
  January 16-19}, pages 1497--1502, 2017.

\bibitem[FKP16]{FKP16}
Nikolaos Fountoulakis, Megha Khosla, and Konstantinos Panagiotou.
\newblock The multiple-orientability thresholds for random
  hypergraphs{\textdagger}.
\newblock {\em Combinatorics, Probability {\&} Computing}, 25(6):870--908,
  2016.

\bibitem[FPS13]{FPS13}
Nikolaos Fountoulakis, Konstantinos Panagiotou, and Angelika Steger.
\newblock On the insertion time of cuckoo hashing.
\newblock {\em {SIAM} J. Comput.}, 42(6):2156--2181, 2013.

\bibitem[FPSS05]{FPSS05}
Dimitris Fotakis, Rasmus Pagh, Peter Sanders, and Paul~G. Spirakis.
\newblock Space efficient hash tables with worst case constant access time.
\newblock {\em Theory Comput. Syst.}, 38(2):229--248, 2005.

\bibitem[GF04]{GF12}
David~A. Grossman and Ophir Frieder.
\newblock {\em Information Retrieval - Algorithms and Heuristics, Second
  Edition}, volume~15 of {\em The Kluwer International Series on Information
  Retrieval}.
\newblock Kluwer, 2004.

\bibitem[GH05]{GH05}
George Giakkoupis and Vassos Hadzilacos.
\newblock A scheme for load balancing in heterogenous distributed hash tables.
\newblock In {\em Proceedings of the Twenty-Fourth Annual {ACM} Symposium on
  Principles of Distributed Computing, {PODC} 2005, Las Vegas, NV, USA, July
  17-20, 2005}, pages 302--311, 2005.

\bibitem[KLL{\etalchar{+}}97]{chord-theory}
David~R. Karger, Eric Lehman, Frank~Thomson Leighton, Rina Panigrahy,
  Matthew~S. Levine, and Daniel Lewin.
\newblock Consistent hashing and random trees: Distributed caching protocols
  for relieving hot spots on the world wide web.
\newblock In {\em Proceedings of the Twenty-Ninth Annual {ACM} Symposium on the
  Theory of Computing, El Paso, Texas, USA, May 4-6, 1997}, pages 654--663,
  1997.

\bibitem[KM05]{KM05}
Krishnaram Kenthapadi and Gurmeet~Singh Manku.
\newblock Decentralized algorithms using both local and random probes for {P2P}
  load balancing.
\newblock In {\em {SPAA} 2005: Proceedings of the 17th Annual {ACM} Symposium
  on Parallelism in Algorithms and Architectures, July 18-20, 2005, Las Vegas,
  Nevada, {USA}}, pages 135--144, 2005.

\bibitem[KR06]{KR06}
David~R. Karger and Matthias Ruhl.
\newblock Simple efficient load-balancing algorithms for peer-to-peer systems.
\newblock {\em Theory Comput. Syst.}, 39(6):787--804, 2006.
\newblock Announced at SPAA'05.

\bibitem[Lar88]{Lar88}
Per{-}{\AA}ke Larson.
\newblock Dynamic hash tables.
\newblock {\em Commun. {ACM}}, 31(4):446--457, 1988.

\bibitem[Man04]{Man04}
Gurmeet~Singh Manku.
\newblock Balanced binary trees for {ID} management and load balance in
  distributed hash tables.
\newblock In {\em Proceedings of the Twenty-Third Annual {ACM} Symposium on
  Principles of Distributed Computing, {PODC} 2004, St. John's, Newfoundland,
  Canada, July 25-28}, pages 197--205, 2004.

\bibitem[Mit01]{mitzenmacher2001power}
Michael Mitzenmacher.
\newblock The power of two choices in randomized load balancing.
\newblock {\em Parallel and Distributed Systems, IEEE Transactions on},
  12(10):1094--1104, 2001.

\bibitem[MTZ16]{MTZ16}
Vahab~S. Mirrokni, Mikkel Thorup, and Morteza Zadimoghaddam.
\newblock Consistent hashing with bounded loads.
\newblock {\em CoRR}, abs/1608.01350, v1, September, 2016.

\bibitem[MZ17]{MZ17:google}
Vahab Mirrokni and Morteza Zadimoghaddam.
\newblock Consistent hashing with bounded loads.
\newblock {\em Google Research Blog}, April 3, 2017.
\newblock {\tt
  \href{https://research.googleblog.com/2017/04/consistent-hashing-with-bounded-loads.html}{https://research.googleblog.com/\linebreak[2]2017/04/consistent-hashing-with-bounded-loads.html}}.

\bibitem[NSW08]{NSW08}
Moni Naor, Gil Segev, and Udi Wieder.
\newblock History-independent cuckoo hashing.
\newblock In {\em 35th International Colloquium on Automata, Languages and
  Programming, {(ICALP)}}, pages 631--642, 2008.

\bibitem[{\"{O}}V11]{OV11}
M.~Tamer {\"{O}}zsu and Patrick Valduriez.
\newblock {\em Principles of Distributed Database Systems, Third Edition}.
\newblock Springer, 2011.

\bibitem[PPR09]{pagh07linprobe}
Anna Pagh, Rasmus Pagh, and Milan Ru{\v z}i{\'c}.
\newblock Linear probing with constant independence.
\newblock {\em SIAM Journal on Computing}, 39(3):1107--1120, 2009.
\newblock See also STOC'07.

\bibitem[PR01]{cuckoo}
Rasmus Pagh and Flemming~Friche Rodler.
\newblock {\em Cuckoo hashing}.
\newblock Springer, 2001.

\bibitem[PR04]{pagh04cuckoo}
Rasmus Pagh and Flemming~Friche Rodler.
\newblock Cuckoo hashing.
\newblock {\em Journal of Algorithms}, 51(2):122--144, 2004.
\newblock See also ESA'01.

\bibitem[PT12]{patrascu12charhash}
Mihai P{\v a}tra{\c s}cu and Mikkel Thorup.
\newblock The power of simple tabulation-based hashing.
\newblock {\em Journal of the ACM}, 59(3):Article 14, 2012.
\newblock Announced at STOC'11.

\bibitem[PT13]{PT13:twist}
Mihai P\v{a}tra\c{s}cu and Mikkel Thorup.
\newblock Twisted tabulation hashing.
\newblock In {\em Proc. 24th ACM/SIAM Symposium on Discrete Algorithms (SODA)},
  pages 209--228, 2013.

\bibitem[RD01]{pastry}
Antony Rowstron and Peter Druschel.
\newblock Pastry: Scalable, decentralized object location, and routing for
  large-scale peer-to-peer systems.
\newblock In {\em Middleware 2001}, pages 329--350. Springer, 2001.

\bibitem[RFH{\etalchar{+}}01]{CAN}
Sylvia Ratnasamy, Paul Francis, Mark Handley, Richard Karp, and Scott Shenker.
\newblock {\em A scalable content-addressable network}, volume~31.
\newblock ACM, 2001.

\bibitem[Rod16]{Rod16}
Andrew Rodland.
\newblock Improving load balancing with a new consistent-hashing algorithm.
\newblock {\em Vimeo Engineering Blog}, December 19, 2016.
\newblock {\tt
  \href{https://medium.com/vimeo-engineering-blog/improving-load-balancing-with-a-new-consistent-hashing-algorithm-9f1bd75709ed}{https://medium.com/\linebreak[2]vimeo-\linebreak[2]engineering-blog/improving-load-balancing\linebreak[2]-with-a-new-\linebreak[2]consistent-\linebreak[2]hashing-\linebreak[2]algorithm-9f1bd75709ed}}.

\bibitem[SMK{\etalchar{+}}01]{chordSIGComm}
Ion Stoica, Robert Morris, David Karger, M~Frans Kaashoek, and Hari
  Balakrishnan.
\newblock Chord: A scalable peer-to-peer lookup service for internet
  applications.
\newblock {\em ACM SIGCOMM Computer Communication Review}, 31(4):149--160,
  2001.

\bibitem[SML{\etalchar{+}}03]{chord}
Ion Stoica, Robert Morris, David Liben{-}Nowell, David~R. Karger, M.~Frans
  Kaashoek, Frank Dabek, and Hari Balakrishnan.
\newblock Chord: a scalable peer-to-peer lookup protocol for internet
  applications.
\newblock {\em {IEEE/ACM} Trans. Netw.}, 11(1):17--32, 2003.

\bibitem[TR98]{TR98:rendezvous}
David Thaler and Chinya~V. Ravishankar.
\newblock Using name-based mappings to increase hit rates.
\newblock {\em {IEEE/ACM} Trans. Netw.}, 6(1):1--14, 1998.

\bibitem[TW14]{talwar14}
Kunal Talwar and Udi Wieder.
\newblock Balanced allocations: {A} simple proof for the heavily loaded case.
\newblock In {\em Automata, Languages, and Programming - 41st International
  Colloquium, {ICALP} 2014, Copenhagen, Denmark, July 8-11, 2014, Proceedings,
  Part {I}}, pages 979--990, 2014.

\bibitem[V{\"{o}}c03]{Voc03}
Berthold V{\"{o}}cking.
\newblock How asymmetry helps load balancing.
\newblock {\em J. {ACM}}, 50(4):568--589, 2003.
\newblock Announced at FOCS'99.

\bibitem[Wie07]{Wie07}
Udi Wieder.
\newblock Balanced allocations with heterogenous bins.
\newblock In {\em {SPAA} 2007: Proceedings of the 19th Annual {ACM} Symposium
  on Parallelism in Algorithms and Architectures, San Diego, California, USA,
  June 9-11, 2007}, pages 188--193, 2007.

\end{thebibliography}
}

\end{document}